%% file: main.tex
\documentclass[11pt]{article}

\usepackage{fullpage}

\usepackage{amsfonts}
\usepackage{amsmath}
\usepackage{graphicx}
\usepackage{amssymb}
\usepackage{amsthm}
\usepackage{thmtools}
\usepackage{hyperref}
\usepackage{tikz}

\usepackage[inline]{enumitem}
\usepackage[numbers,square]{natbib}
\usepackage[ruled,vlined]{algorithm2e}

\usepackage[]{color-edits}% suppress

\addauthor{MS}{red}

\addauthor{MF}{blue}

\addauthor{SS}{magenta}

\newtheorem{proposition}{Proposition}[section]
\newtheorem{lemma}{Lemma}[section]
\newtheorem{theorem}{Theorem}[section]
\newtheorem{definition}{Definition}[section]

\newtheorem{maintheorem}{Main Theorem}
\newtheorem{observation}{Observation}[section]

\usepackage{cleveref}

\crefname{observation}{observation}{observations}
\newcounter{disccount}

\newcommand\blfootnote[1]{%
  \begingroup
  \renewcommand\thefootnote{}%
  \NoHyper\footnote{#1}\endNoHyper
  \addtocounter{footnote}{-1}%
  \endgroup
}

\newcommand{\contract}{\boldsymbol{\alpha}}
\newcommand{\eps}{\varepsilon}

\AddToHook{cmd/appendix/before}{%
    \crefalias{section}{appendix}%
}
\AddToHook{cmd/appendix/before}{%
    \crefalias{subsection}{appendix}%
}

\title{\textbf{Settling the Complexity Landscape of\\
Multi-Agent Contracts with Binary Actions}}

\author{
Michal Feldman\\
Tel Aviv University
\and
Maya Schlesinger\\
Tel Aviv University
\and
Shay Shani\\
Tel Aviv University
}

\date{}

\begin{document}

\maketitle
\blfootnote{
This project has been partially funded by the European Research Council (ERC) under the European Union's Horizon 2020 program (grant agreement No.~866132), by the European Union's Horizon Europe Program (grant agreement No.~101170373), by an Amazon Research Award, by the Israel Science Foundation Breakthrough Program (grant No.~2600/24), and by a grant from TAU Center for AI and Data Science (TAD), and by the NSF-BSF (grant number 2020788).
This research is supported by the Adams Fellowships Program of The Israel Academy of Sciences and Humanities.
}
\begin{abstract}
We study the computational complexity of optimal contract design in the multi-agent binary-action model, focusing on gross-substitutes reward functions and related classes. While additive rewards admit an FPTAS and general submodular rewards admit only constant-factor approximation, the complexity within the intermediate class of gross substitutes has remained largely open.

We uncover a fine-grained approximation landscape within this class. We first show that the optimal contract problem is APX-complete even for OXS rewards --- a strict subclass of gross substitutes rewards, ruling out a PTAS for gross substitutes unless $\mathsf{P}=\mathsf{NP}$. In contrast, for weighted matroid rank functions (WMRFs) --- another natural subclass of gross substitutes --- we obtain an EPTAS and show that no randomized FPTAS exists in general. We further identify the special case of partition (weighted) matroid rank functions, for which we obtain an FPTAS. 
This stands in contrast to the multi-agent {\em multi-action} setting, where no PTAS exists even for unweighted partition matroid rank functions.

Finally, we consider the broader class of ultra reward functions. While ultra rewards retain the tractability of gross substitutes in a related combinatorial contract model with a single agent, we show a sharp contrast in the multi-agent model: no polynomial-time randomized algorithm using value queries can achieve a $2^{o(n)}$-approximation in expectation. Together, our results reveal several qualitatively distinct computational regimes within and beyond gross substitutes, and identify submodularity as a crucial ingredient for the approximability of multi-agent contracts.
\end{abstract}

\input{intro2.tex}

\input{model.tex}

\input{no_ptas.tex}
\input{WMRF.tex}

\input{ultra.tex}

\bibliographystyle{alpha}
\bibliography{bib}

\input{appendix}

\end{document}

%% file: intro2.tex
\section{Introduction}
\label{sec:intro}
Multi-agent contracts constitute one of the major frontiers of algorithmic contract design and have attracted significant attention in recent years. A fundamental model in this line of work is the {\em multi-agent binary-action} model introduced by D\"utting et al.~\cite{DEFK23}, {building on the earlier work of~\cite{BFN2006}}. 
In this model, a principal delegates a project that can either succeed or fail to a set of $n$ agents, each of whom decides whether to exert effort or shirk.
Exerting effort incurs an agent-specific cost, and the probability of success is determined by a set function $f:2^{[n]}\rightarrow[0,1]$, where $f(S)$ denotes the probability of success when precisely the agents in $S$ exert effort. Upon success, the principal receives a reward of $1$, so $f(S)$ also represents her expected reward; accordingly, we refer to $f$ as the {\em reward} function. 

The principal cannot observe the agents' actions; she observes only whether the project succeeds or fails. Therefore, she incentivizes the agents by offering each agent a success-contingent payment, called a {\em contract}. Each contract induces a game among the agents. In this game, the principal's utility is her expected reward minus the expected payments to the agents, while each agent's utility is his expected payment minus the cost of exerting effort, if any. The principal's goal is to design the {\em optimal contract} --- a contract that maximizes her expected utility at a pure Nash equilibrium of the induced game.

A key quantity governing incentives is an agent's {\em marginal contribution}. For a set $S$ of agents exerting effort and an agent $i\in S$, the marginal contribution of $i$ is $f_i(S) := f(S)-f(S\setminus\{i\})$; 
namely, the increase in the probability of success due to agent $i$'s effort. If agent $i$ incurs cost $c_i$ for exerting effort, then the minimum success-contingent payment that incentivizes her to exert effort in $S$ is
$\frac{c_i}{f_i(S)}$.

The computational complexity of the optimal contract problem depends on the structure of the reward function $f$. Prior work has studied the problem through the hierarchy of complement-free set functions \cite{lehmann2001combinatorial}, establishing approximation guarantees for different classes of reward functions: for {\em additive} reward functions, the optimal contract problem admits an FPTAS, although computing an exact optimum is NP-hard~\cite{DEFK23}. In contrast, for {\em submodular} reward functions, a polynomial-time constant-factor approximation can be obtained~\cite{DEFK23}, but no PTAS exists~\cite{DEFK25}.
A natural question is whether prominent subclasses of submodular functions admit significantly better approximation guarantees.

A particularly well-studied class lying between additive and submodular functions is the class of {\em gross substitutes} (GS) functions. Gross substitutes were originally introduced to capture substitutability among goods~\cite{KelsoCrawford} and have since played a central role in combinatorial optimization, market design, and algorithmic game theory; see, e.g.,~\cite{gross_substitutability}. Their strong structural properties give rise to computational tractability in a variety of settings. 
Notably, this tractability also arises in the single-agent combinatorial contract model of D\"utting et al.~\cite{DEFK21}: the optimal contract can be computed in polynomial time for GS rewards, whereas the problem is NP-hard for the more general class of submodular rewards. 
{Recent work shows that this tractability extends beyond GS: \cite{FeldmanYashin2025} give a polynomial-time algorithm for the broader class of {\em ultra} rewards~\cite{lehmann2017ultra}. Ultra rewards contain GS but allow for certain forms of complementarity and are incomparable with submodular functions. In fact, for monotone valuations, GS is precisely the intersection of the ultra and submodular classes.}

Returning to the multi-agent binary-action model, since every GS function is submodular, the known constant-factor approximation for submodular rewards immediately applies. Whether the additional structure of GS can yield  stronger guarantees is a central open question, highlighted in~\cite{DuttingFT24}. In particular, does the optimal contract problem admit a PTAS for GS rewards? An EPTAS? Perhaps even an FPTAS? And what can be achieved for prominent subclasses of GS, such as weighted matroid rank functions (WMRFs) and OXS functions? 
Finally, can improved approximation guarantees be obtained beyond GS, for the broader class of ultra rewards? 
Notably, since ultra functions need not be submodular, even the constant-factor approximation known for submodular rewards does not automatically extend to this class.

In this work, we resolve these questions and characterize the approximation landscape for GS and ultra rewards, as well as important subclasses of GS. Our results reveal a surprisingly fine-grained picture.

\subsection{Our Results}
\label{sec:our-results}
\paragraph{Gross Substitutes (GS) rewards.}

Our main result resolves the open question of whether the optimal contract problem admits a PTAS for GS rewards. We show that the problem is APX-complete even for the natural subclass of OXS reward functions, and hence no PTAS exists for GS rewards unless $\mathrm{P}=\mathrm{NP}$.

\begin{maintheorem} [APX-hardness for {\bf OXS} rewards; \Cref{thm:apx-complete-oxs}]
The optimal contract problem for OXS reward functions (hence for GS rewards) is APX-complete.
\end{maintheorem}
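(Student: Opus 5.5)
The statement has two halves, membership in APX and APX-hardness, and I will prove them separately.

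\textbf{Membership.} Every OXS function is gross substitutes and hence submodular. The constant-factor approximation of~\cite{DEFK23} for submodular rewards therefore applies directly. It runs in polynomial time using value queries, and a value query to an OXS function given explicitly by its bipartite graph is a maximum-weight bipartite matching computation, which is polynomial. So the problem lies in APX.

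\textbf{Hardness: source problem.} I will give an L-reduction (or a gap-preserving reduction) from a bounded-degree APX-hard problem. My first choice is maximum independent set in graphs of maximum degree $3$ (or $\Delta$). An alternative is MAX-3-DIM-MATCHING with bounded occurrences; this fits OXS naturally, because OXS functions are exactly weighted matching (transversal-type) valuations.

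\textbf{Hardness: construction.} The agents are the vertices of the source instance (or the candidate sets). The OXS reward is defined by unit-demand "items," one per edge or element. Each item is valued by the agents incident to it, with small weights scaled so that $f\le 1$. On top of this, every agent has a private item of weight $w$.

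Under such a reward, the marginal contribution $f_i(S)$ of an agent is large exactly when none of his shared items are already "taken" by another agent in $S$. I will tune the costs $c_i$ so that the payment $c_i/f_i(S)$ is affordable only when $i$'s contribution is not diminished by conflicts. Since the principal's utility is $f(S)\bigl(1-\sum_{i\in S} c_i/f_i(S)\bigr)$, a conflict-free (independent) set of size $k$ should yield utility close to an increasing function of $k$ that is roughly linear in the relevant regime. Sets containing conflicts should lose strictly more than they gain.

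\textbf{Hardness: verifying the L-reduction.} I then need two properties.
\begin{enumerate*}[label=(\roman*)]
\item $\mathrm{OPT}_{\text{contract}}=\Theta(\mathrm{OPT}_{\text{MIS}}/n)$ after normalization. Bounded degree ensures $\mathrm{OPT}_{\text{MIS}}=\Omega(n)$, which makes this achievable.
\item Given any contract and its equilibrium set $S$, greedily deleting conflicting agents from $S$ yields an independent set whose size loss is linearly bounded by the loss in principal utility.
\end{enumerate*}

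\textbf{Main obstacle.} The hardest step is the nonlinearity of the objective. Because the payments multiply $f(S)$, the utility $f(S)(1-\sum_{i\in S} p_i)$ is concave in $|S|$. Total payments must therefore stay in a regime where each extra independent agent adds an additive gain of order $\Theta(1/n)$ with constants independent of $n$. Otherwise the optimum saturates at a fixed size and the gap vanishes.

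To handle this, I will choose per-agent cost of order $\varepsilon/n$ and rewards of order $1/n$, so that the total payment fraction stays bounded away from $1$ for all feasible sets. Then the first-order, linear approximation holds up to constant factors. I also need to show that any equilibrium set containing a conflicting pair can be pruned without lowering utility by more than a constant per removed conflict. Because OXS (matching-based) marginals drop by a full item weight in a conflict, the payment $c_i/f_i(S)$ jumps by a constant factor, and this gives the needed separation.
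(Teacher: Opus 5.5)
\textbf{There is a genuine gap in the hardness half.} Your membership argument is correct and is the same as the paper's. The hardness reduction, however, relies on a property that OXS functions do not have. You assert that $f_i(S)$ is large exactly when none of $i$'s shared items is taken by another agent of $S$, and that a conflict makes the marginal drop by a full item weight. That describes a coverage function, not a matching function. In an OXS function each agent is matched to at most one item, so contention on some items is absorbed by rematching. Your gadget degenerates in both cases:
\begin{itemize}
\item If the private weight $w$ is at least the shared weight $\delta$, every agent takes its private item and $f(S)=w|S|$.
\item If $\delta>w$ and the source graph has minimum degree at least $2$ (e.g.\ cubic, the standard APX-hard case), every $X\subseteq S$ is incident to at least $\sum_{x\in X}\deg(x)/2\ge|X|$ edge items. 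By Hall's theorem $S$ is saturated by edge items, so $f(S)=\delta|S|$.
\end{itemize}
Either way the reward is additive, independence is invisible to the contract instance, and additive instances admit an FPTAS~\cite{DEFK23}. So no APX-hardness can come out of this construction. The 3DM variant has the same defect: a triple needs only one of its elements, so the matching value counts how many triples get distinct representatives, not whether they are disjoint.

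\textbf{What is missing.} You need a gadget built on how OXS marginals actually behave. $f_i(S)$ equals the weight of $i$'s matched edge minus the gain of the best alternating path from the item $i$ frees. So an agent's marginal drops only when some other selected agent, parked on a lighter item, can profitably move into $i$'s slot. The paper's reduction (from Max-Cut on cubic graphs, via the clauses $x_u\lor\neg x_v$ and $\neg x_u\lor x_v$) uses exactly this. A detector $d_C$ has marginal $3/Z$, but only $2/Z$ when a selected state agent satisfying $C$ can leave its $1/Z$ vertex slot for the $2/Z$ clause slot. Hence satisfied clauses cost more, and the best assignment set maximizes the cut.

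\textbf{Step (ii) hides the real difficulty.} Changing the selected set changes the optimal matching, and hence the payments of agents that remain. The paper controls this in its assignment-set lemma. It charges each increased detector payment to an unoccupied vertex slot (at most three per slot, by cubicity) and takes $\eta$ small.

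\textbf{Your scaling also fails.} A cost of $\eps/n$ against a per-agent reward of $1/n$ gives each agent a payment share of $\eps$. Any set of $\Theta(n)$ agents then has total share far above $1$. Shares must be $O(1/n)$, as in the paper's $\eta/(100m)$, $\eta/(2m)$ and $\eta/(3m)$.
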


{The hardness arises from the structure of agents' {\em marginal contributions}}. An OXS function admits a matching-based representation, and its value can therefore be computed efficiently using maximum-weight matching. The difficulty lies in the fact that marginal contributions, which determine the payments required to incentivize the agents, need not have a similarly simple structure. Removing a single agent can trigger a sequence of reassignments in the optimal matching, making its marginal contribution depend on a global rearrangement of the matching rather than only on its own allocation. Our reduction exploits precisely this phenomenon.

In contrast, for {\em weighted matroid rank functions} (WMRFs), another natural subclass of GS that is incomparable with OXS, we obtain an EPTAS and rule out a randomized FPTAS.
We prove these results in the value-oracle model; since value and demand oracles are polynomially equivalent for GS functions, the negative result holds even under demand-oracle access (see Section~\ref{sec:model} for definitions of value and demand oracles).

\begin{maintheorem} [EPTAS for {\bf WMRF} rewards, but no (randomized) FPTAS; \Cref{thm:wmrf-eptas,thm:wmrf-no-fptas}]

The optimal contract problem for weighted matroid rank (WMRF) reward functions admits an EPTAS, but no randomized  FPTAS.
\end{maintheorem}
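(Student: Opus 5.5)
The statement has two parts, and I would prove them separately: an EPTAS, and a lower bound ruling out a randomized FPTAS in the value-oracle model.

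\emph{EPTAS.} Write $f(S)=\max_{I\subseteq S,\, I\in\mathcal I}\sum_{i\in I} w_i$ for a matroid $M$ with independent sets $\mathcal I$ and weights $w$. The key structural fact I would use is this: for $S$ and $i\in S$, the marginal $f_i(S)$ equals $w_i$ minus the weight of the best replacement element for $i$ in a max-weight basis of $S$. If $i$ is not in the greedy basis $B(S)$, then $f_i(S)=0$, so an optimal target set can be assumed independent. For independent $S$, $f_i(S)=w_i-\max\{w_j: j\in \mathrm{span}(S)\setminus S,\ S-i+j\in\mathcal I\}$. In fact, for independent $S$ nothing outside $S$ matters, since $f(S)=w(S)$. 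So $f_i(S)=w_i$ whenever $S$ is independent, and the principal's utility is
\[
\Bigl(1-\sum_{i\in S}\frac{c_i}{w_i}\Bigr)\, w(S).
\]
The problem thus becomes: maximize $(1-\sum_{i\in S} r_i)\,w(S)$ over independent sets $S$, with $r_i=c_i/w_i$. This is a product objective over a matroid, a kind of ``matroid knapsack'' with a price that scales the total.

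For additive $f$ this is exactly the FPTAS setting of \cite{DEFK23}. Here the matroid constraint turns it into a budgeted matroid independent-set problem: guess the value $B\approx\sum_{i\in S} r_i$ on a $(1+\eps)$ grid, then maximize $w(S)$ subject to $S\in\mathcal I$ and $\sum_{i\in S} r_i\le B$. Budgeted matroid independent set admits an EPTAS (Doron-Arad, Kulik, Shachnai), but not an FPTAS in general. Combining the two, guessing the $O(1/\eps)$ ``large'' elements and rounding the budget (the EPTAS uses representative sets), yields an EPTAS. I need to handle one subtlety: the objective is multiplied by $1-B$, so an additive budget violation is bad near $B\approx1$. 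I would handle it by slightly shrinking the budget, which is standard.

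\emph{No randomized FPTAS.} I would reduce from the hardness of budgeted or exact matroid problems in the oracle model. For example, deciding whether a matroid, given by an independence oracle (simulated by value queries to the rank function with $w\equiv1$ or given weights), has an independent set of prescribed weight requires exponentially many queries, via hidden paving matroids in which a single secret set becomes a circuit or a basis. I would set $w_i$ and $c_i$ so that OPT reaches a threshold only if some specific hidden set is independent, with a gap of order $1/\mathrm{poly}(n)$. An FPTAS with $\eps=1/\mathrm{poly}(n)$ would then detect the hidden set, contradicting the query lower bound; Yao's principle extends this to randomized algorithms. The \emph{main obstacle} is building this gadget so that the utility gap is inverse-polynomial while the hidden structure remains undetectable with polynomially many value queries. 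I would adapt the known no-FPTAS construction for budgeted matroid independent set, in which the instance resembles an exact-weight matroid basis problem.
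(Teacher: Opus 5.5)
\textbf{EPTAS part.} Your reduction to maximizing $(1-t(S))\,w(S)$ over independent sets, and then to Budgeted Matroid Independent Set over guessed budgets, matches the paper. The gap is in the guessing step.

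\begin{itemize}
\item With a $(1+\eps)$-grid on the payment $B$, the relevant guess $B\in[t(S^\star),(1+\eps)t(S^\star))$ only gives $1-B>\rho(S^\star)-\eps\,t(S^\star)$. This is vacuous once $\rho(S^\star)\le\eps\,t(S^\star)$. For $t(S^\star)$ close to $1$, that guess may itself be at least $1$.
\item This is not a budget-violation issue: the BMIS EPTAS returns $S_B$ with $t(S_B)\le B$.
\item Shrinking the budget below $t(S^\star)$ makes $S^\star$ infeasible, so you lose the comparison $w(S_B)\ge(1-\delta)w(S^\star)$.
\end{itemize}

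What you need is a grid that is multiplicative in the residual $\rho=1-B$, so that some $\rho_k\in[(1-\theta)\rho(S^\star),\rho(S^\star)]$. You also need a lower bound on $\rho(S^\star)$ that keeps the number of guesses a function of $\eps$ alone. The paper's singleton-or-bounded-residual lemma supplies this bound.
\begin{itemize}
\item If some $w_i\ge(1-\theta)w(S^\star)$, then $\{i\}$ is already $(1-\theta)$-optimal.
\item Otherwise, comparing $S^\star$ with the independent set $S^\star\setminus\{i\}$ gives $t_i<\rho(S^\star)w_i/(\theta\, w(S^\star))$. Summing over $i\in S^\star$ yields $\rho(S^\star)\ge\theta/(1+\theta)$.
\end{itemize}
So the singletons plus $K=O(\eps^{-1}\log(1/\eps))$ BMIS calls suffice. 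A bit-length lower bound on positive residuals would also keep the count polynomial, but your proposal has neither.

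\textbf{No-FPTAS part.} Here you have a plan rather than a proof. The gadget you flag as the main obstacle is the whole content of the argument. Nothing in your sketch explains how the contract objective, which is not a budgeted objective, can single out an exact-weight structure. The paper does not build a query lower bound from scratch. It uses the known fact that oracle Exact Matroid Basis has no randomized pseudo-polynomial algorithm, and encodes ``weight exactly $T$'' as the unique peak of a concave quadratic:
\begin{itemize}
\item After deleting loops and letting $r$ be the rank, set $H:=\omega(E)+1$, $R:=T+rH$, $v_i:=(\omega(i)+H)/(2R)$ and $c_i:=v_i^2$.
\item Then $c_i/v_i=v_i$, so on independent sets $g(S)=v(S)(1-v(S))=\frac14-\frac{(\omega(S)+H|S|-R)^2}{4R^2}$.
\item The shift by $H$ makes $\omega(S)+H|S|=R$ hold only for bases of weight $T$. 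Integrality gives a gap of $1/(4R^2)$.
\item Running an FPTAS with $\eps=1/(2R^2)$ gives a pseudo-polynomial algorithm. Since $g\le\frac14$ on yes-instances, the guarantee in expectation yields \textsc{Yes} with probability at least $1/2$, and the algorithm never errs on no-instances. This contradicts the lower bound.
\end{itemize}
What matters is that $1/\eps$ is polynomial in the numerical value $R$, which a pseudo-polynomial lower bound matches. Also, the simulation the reduction needs runs opposite to your parenthetical: value queries to the constructed WMRF are answered by greedy using independence queries.
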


The key structural observation underlying the positive result is that an optimal solution can always be chosen to be an independent set of the underlying matroid. On independent sets, the reward function becomes additive, and the marginal contribution of each selected agent coincides with its weight. 
Combining this with the characterization of optimal payments in terms of marginal contributions~\cite{DEFK23}, the payment required to incentivize an agent becomes independent of which other agents are incentivized.
This allows us to reduce the main optimization task to Budgeted Matroid Independent Set and leverage its known EPTAS~\cite{doronarad2023eptas}.

For the no-FPTAS result, we reduce from Exact Matroid Basis, which admits no randomized pseudo-polynomial-time algorithm in the independence-oracle model~\cite{doronarad2026lower}. We construct an instance in which the principal's utility over independent sets is a
quadratic function of their total weight, with a unique maximum at a prescribed weight. An independent set achieves this maximum if and only if it is a basis of the target weight. 
The discreteness of the attainable weights creates a gap between this maximum and the utility of every other independent set; an FPTAS could distinguish this gap, contradicting the oracle lower bound for Exact Matroid Basis.
Despite the negative result for general matroids, the problem admits an FPTAS for partition matroids.

\begin{maintheorem}[FPTAS for {\bf partition matroids};
\Cref{thm:wmrf-uniform-partition-fptas}]
If the matroid underlying the WMRF is a partition matroid,
then the optimal contract problem admits an FPTAS under value-oracle access.
\end{maintheorem}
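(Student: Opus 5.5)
The plan is to combine the structural reduction used for the WMRF EPTAS with a dynamic program that exploits the additive structure across the parts of a partition matroid. First, as in the general WMRF case, I will restrict attention to sets $S$ independent in the underlying matroid. For a partition matroid with parts $P_1,\dots,P_k$ and capacities $b_1,\dots,b_k$, independence means $|S\cap P_j|\le b_j$ for each $j$. On such sets $f(S)=\sum_{i\in S} w_i$, and the marginal contribution of each $i\in S$ is exactly $w_i$. By the characterization of optimal payments from~\cite{DEFK23}, the principal's utility for incentivizing $S$ is
\[
g(S) \;=\; \Bigl(1-\sum_{i\in S}\frac{c_i}{w_i}\Bigr)\sum_{i\in S} w_i .
\]
Writing $W(S)=\sum_{i\in S} w_i$ and $C(S)=\sum_{i\in S} c_i/w_i$, the task is to maximize $(1-C(S))\,W(S)$ over independent $S$. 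The weights and the partition can be recovered from value queries: $w_i=f(\{i\})$, and $i,j$ share a part with capacity $1$ exactly when $f(\{i,j\})<w_i+w_j$. Determining the capacities needs a little more care, using greedy growth of sets to detect when a part saturates. I expect this step to be routine.

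Second, I will guess the optimal $W$ up to a $(1+\eps)$ factor, or equivalently round the weights. Let $w_{\max}$ be the largest weight in the optimal set; guess this agent, so polynomially many guesses. Discard all agents with weight above $w_{\max}$, and round the remaining weights down to multiples of $\delta=\eps w_{\max}/n$. The rounded total is then an integer in $\{0,\dots,n^2/\eps\}$. For each target rounded weight $t$, I minimize $C(S)$ subject to independence and rounded weight exactly $t$. This is a knapsack-style DP processed part by part. Within part $P_j$ I compute, for every count $\ell\le b_j$ and every rounded weight value, the minimum cost of choosing $\ell$ agents from $P_j$. Only the constraint $\ell\le b_j$ matters, so I keep the per-part table indexed by rounded weight with the count bounded, and then combine parts by min-plus convolution over rounded weight. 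Everything is polynomial in $n$ and $1/\eps$. Finally I return the best value of $(1-C)\cdot W$, evaluated with true weights, over all guesses and targets.

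Third, the approximation analysis. Let $S^*$ be optimal. In the iteration with the correct $w_{\max}$ and $t=\lfloor W(S^*)/\delta\rfloor$-style rounding, the DP finds $S$ with $C(S)\le C(S^*)$ and true weight $W(S)\ge W(S^*)-n\delta\ge(1-\eps)W(S^*)$, using $W(S^*)\ge w_{\max}$. Hence $g(S)\ge(1-\eps)g(S^*)$. Note that $S$ might have larger true weight than $S^*$, but that only helps, since $1-C(S)\ge 1-C(S^*)\ge 0$.

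The main obstacle is making sure the rounding goes in the correct direction. The cost term $C$ must be handled exactly (it is minimized), while only $W$ is approximated. This is what avoids the hardness for general matroids: exact-weight constraints are pseudo-polynomially solvable for partition matroids through the DP, but not for general matroids. I would also double-check the value-oracle recovery of capacities, which is the least standard step.
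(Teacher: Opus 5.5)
Your plan matches the paper's proof essentially step for step: restrict to independent sets, guess the largest weight $b$ in $S^\star$, round weights down to a grid of width $\eps b/n$ (the paper uses $\eps b/r$), run a block-by-block DP with a per-block cardinality bound that minimizes $t(S)$ for each rounded weight, and conclude from $t(T)\le t(S^\star)$ and $w(T)\ge(1-\eps)w(S^\star)$. The one step you left open, recovering the blocks and capacities, is where the proposal is incomplete; the paper does it by computing a basis $B$ and, for each $i\notin B$, taking $X_i=\{j\in B:(B\setminus\{j\})\cup\{i\}\in\mathcal I\}$, which equals $B\cap A_\ell$ for the block of $i$ (so $|X_i|$ is its capacity), while basis elements lying in no $X_i$ are unconstrained and can be placed in singleton blocks of capacity one (each independence test is a single value query, since $S\in\mathcal I$ iff $f(S)=\sum_{i\in S}w_i$).
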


Notably, the binary-action model was recently extended to a {\em multi-agent multi-action} setting, in which each agent controls multiple actions~\cite{DEFK25}. 
{In this more general setting, even unweighted partition matroid rank functions admit no PTAS~\cite{Equal-pay26}, in sharp contrast to the FPTAS we obtain here for the more general weighted case in the binary-action model.}

Taken together, our results reveal several distinct computational regimes within GS. Additive rewards admit an FPTAS; partition matroid rank functions retain this guarantee; general weighted matroid rank functions admit an EPTAS but no randomized FPTAS; and OXS rewards admit no PTAS. In particular, even among natural 
subclasses of GS, differences in structure lead to qualitatively different approximation guarantees.

\paragraph{Beyond Gross Substitutes: Ultra rewards.}

We next turn to ultra rewards and ask whether their structure suffices to obtain meaningful approximation guarantees in the multi-agent binary-action model. There is good reason to hope that this might be the case. As discussed above, in the single-agent combinatorial-action model, the polynomial-time algorithm for GS rewards extends to ultra rewards~\cite{FeldmanYashin2025}.
Thus, in the single-agent setting, tractability survives when submodularity is dropped but the ultra structure is retained.

The multi-agent binary-action setting exhibits a starkly different picture.
Specifically, we show that for ultra rewards, no polynomial-time randomized algorithm can achieve a $2^{o(n)}$-approximation in expectation.

\begin{maintheorem} [No $2^{o(n)}$-approximation under {\bf ultra} rewards; \Cref{thm:ultra-hardness}]
For ultra reward functions, 
no randomized algorithm making polynomially many value or demand queries can achieve a $2^{o(n)}$-approximation in expectation.
\end{maintheorem}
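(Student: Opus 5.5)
The plan is to prove the statement by an indistinguishability argument. We hide a secret set $S^\star$ of agents inside a cardinality-based (symmetric) reward function. Symmetric monotone functions are ultra, so the perturbation will be chosen to keep the function ultra. The optimum will be to incentivize exactly $S^\star$, while every set an algorithm can find without knowing $S^\star$ gives the principal utility at most a $2^{-\Omega(n)}$ fraction of the optimum. By Yao's principle it suffices to fix a distribution, namely $S^\star$ uniform among the $\binom{n}{k}$ sets of size $k=n/2$, and show that every deterministic algorithm with polynomially many queries has expected utility $2^{-\Omega(n)}\cdot \mathrm{OPT}$.

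\textbf{Construction.} All agents get the same cost $c$. Let $h(j)$ be a base profile that grows slowly in $j$ and stays tiny up to size $k$. Set
\[
f(S)=h(|S|) \text{ for } S\neq S^\star, \qquad f(S^\star)=h(k)+\delta,
\]
with a substantial bonus $\delta$. Then every $i\in S^\star$ has marginal contribution $f_i(S^\star)=h(k)+\delta-h(k-1)\ge\delta$. By the payment characterization of \cite{DEFK23}, incentivizing $S^\star$ costs $\sum_{i\in S^\star} c/f_i(S^\star)\le kc/\delta$. We choose $c$ and $\delta$ so that $\mathrm{OPT}\ge f(S^\star)-kc/\delta=\Omega(1)$.

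For any other set $T$, the marginals are $h(|T|)-h(|T|-1)$, with possibly one exception when $T\setminus\{i\}=S^\star$. We pick $h$ so that these marginals are small relative to $c$, and then the principal's utility
\[
f(T)-\sum_{i\in T} \frac{c}{f_i(T)}
\]
is at most $2^{-\Omega(n)}$, or even nonpositive. A concrete choice to try is $h(j)=\eta\, j$ with $\eta=2^{-n}$ and $c$ of order $\eta^2$, and then the parameters are tuned. The supersets $S^\star\cup\{i\}$ also need care. Their marginals relative to $S^\star$ are negative unless $h(k+1)\ge h(k)+\delta$, which would break monotonicity. So we set $h(j)\ge h(k)+\delta$ for $j>k$, for example a threshold profile that is flat above $k$. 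Then every $T\supsetneq S^\star$ contains agents with marginal $0$ or tiny marginal, and incentivizing $T$ is infeasible or very costly.

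\textbf{Ultra property (main obstacle).} The central step is verifying that $f$ is ultra, that is, that it satisfies Lehmann's ultra exchange condition. The intended route is the analogy with matroids: raising the value of a single set of size $k$ above its cardinality profile mirrors relaxing a circuit-hyperplane of a uniform or sparse-paving matroid, which preserves the matroid exchange axioms. First, I would check the ultra condition directly by cases on how a set $S$ and the two elements involved meet $S^\star$. Most cases reduce to the symmetric function $h$, for which the condition amounts to the appropriate concavity/convexity structure of the profile. The remaining cases involve sets within Hamming distance one of $S^\star$, and there the bonus $\delta$ must be compatible with the jumps of $h$ at $k$ and $k+1$. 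This compatibility is what actually pins down the shape of $h$: it is essentially a step at $k$ with bonus $\delta$ placed just before the step. So the parameter choices above may need to be adjusted until both the ultra inequalities and the utility gap hold simultaneously.

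\textbf{Query lower bound.} A value query on $T$ returns $h(|T|)$ unless $T=S^\star$, and for fixed $T$ this happens with probability $1/\binom{n}{k}=2^{-\Omega(n)}$. For demand queries we argue that the answer to a demand query for $f$ can be computed from the demand answer for the symmetric $h$ together with at most polynomially many value queries. A candidate such as $S^\star$ can only win if the algorithm can name it, which happens with equally negligible probability. A union bound over polynomially many queries shows that, except with probability $2^{-\Omega(n)}$, the algorithm's transcript is identical to its transcript on $h$ alone. Its output is then a fixed set $T$, which equals $S^\star$ with probability $2^{-\Omega(n)}$, and on every other set the utility is $2^{-\Omega(n)}$. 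The expected utility is therefore $2^{-\Omega(n)}\cdot\mathrm{OPT}$, which rules out a $2^{o(n)}$-approximation in expectation.
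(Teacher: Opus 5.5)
\textbf{The construction is not ultra.} Your construction cannot be ultra for any profile $h$ and any $\delta>0$, so the step you flag as the main obstacle fails outright. An exchange $X\mapsto X-x+y$, $Y\mapsto Y-y+x$ preserves both cardinalities. The symmetric part $h(|S|)$ therefore contributes identically to both sides of every exchange inequality, and $h(|S|)+\pi(S)$ is ultra if and only if the perturbation $\pi$ is. The shape of $h$ is irrelevant to the ultra property, contrary to your concavity remark; it matters only for monotonicity and utilities.

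For your upward bump $\pi=\delta\mathbf{1}[S=S^\star]$, take $X=S^\star$, any $Y$ with $|Y|=k$ and $|Y\cap S^\star|\le k-2$ (e.g.\ $Y=A\setminus S^\star$ when $n=2k$), and any $x\in S^\star$. For every $y\in Y\setminus X$, neither $S^\star-x+y$ nor $Y-y+x$ equals $S^\star$. So the right-hand side is $2h(k)$ while the left-hand side is $2h(k)+\delta$.

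The circuit-hyperplane analogy actually points the other way. Relative to the uniform matroid, a sparse paving matroid has rank $r-1$ on a circuit-hyperplane, which is a \emph{downward} dent. A single downward dent $-\eta\mathbf{1}[S=T]$ is ultra: if $X$ or $Y$ equals $T$, the left side already pays $-\eta$ and the two exchanged sets cannot both be $T$; otherwise some $y$ avoids $T$ altogether. This is exactly what the paper uses: $f_T(S)=\phi(|S|)-\eps\mathbf{1}[S=T]$ with $|T|=r$ and $\phi(j)=\eps\min\{j,r\}+\frac12\mathbf{1}[j\ge r+1]$. The profitable sets are then not $T$ itself but the sets $T\cup\{x\}$, in which the dent raises the marginal of $x$ from $\frac12$ to $\frac12+\eps$.

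\textbf{The utility gap also fails.} Even setting ultra aside, your gap is not established. First, the principal's utility is $\bigl(1-\sum_{i\in S}c_i/f_i(S)\bigr)f(S)$, not $f(S)-\sum_{i\in S}c_i/f_i(S)$. More importantly, you overlook generic sets $T$ with $|T|=k+1$ and $S^\star\not\subseteq T$. Monotonicity forces $h(k+1)\ge h(k)+\delta$, so all their marginals equal $h(k+1)-h(k)\ge\delta$, and their payment share is at most $(k+1)c/\delta$. If $kc/\delta$ is bounded away from $1$, as your claim $\mathrm{OPT}=\Omega(1)$ requires, these sets can be found without any knowledge of $S^\star$ and are constant-factor approximations.

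What is needed is a knife-edge calibration: a typical set of the critical size has payment share exactly $1$, and only the hidden configuration dips below it. In the paper, $c_i=1/(2k)$, every $k$-set not containing $T$ has share exactly $1$, and $T\cup\{x\}$ has share $1-\frac{1}{k(2k+1)}$. This forces $\mathrm{OPT}=\Theta(n^{-2})$ rather than $\Omega(1)$. With that calibration, your indistinguishability argument goes through, after modifying the algorithm to also query every $\widehat S\setminus\{i\}$, so that any profitable output forces a query to $T$.
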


Thus, moving from GS to ultra leads to a sharp change in the approximation landscape: from constant-factor approximability for GS, and approximation schemes for important subclasses of GS, to subexponential inapproximability. 
Unlike in the single-agent setting, the ultra structure alone provides essentially no approximation guarantee in the multi-agent model, identifying {\em submodularity} as a crucial ingredient for approximability.

\paragraph{Submodular rewards.}
We finally revisit the constant-factor approximation for submodular rewards obtained by~\cite{DEFK23}, who give a $258$-approximation under value- and demand-oracle access and a $642.7$-approximation under value-oracle access alone. We improve these guarantees to $3.287$ and $6.128$, respectively.

Our approach builds on the square-root pricing framework of~\cite{DEFK23}, while introducing several refinements that improve the approximation factors. \cite{DEFK23}'s approach, roughly speaking, ``guesses'' different values for what the reward from the optimal solution is, and uses this guess to derive a nearly optimal solution. Our approach involves ``guessing'' the payment for the optimal solution as well, and adjusting the techniques and analysis accordingly. The full details are in \Cref{sec:gs-approximation} and \Cref{sec:submodular-value-oracle}.

\subsection{Related Work}
\label{subsec:related}
\paragraph{Algorithmic contract design.}
Our work belongs to the growing literature on algorithmic contract design, which studies the computational aspects of principal-agent problems. A central theme in this literature is the design of optimal or approximately optimal contracts when the underlying action spaces have combinatorial structure. We focus here on the works most closely related to ours, and refer the reader to recent surveys \cite{DuttingFT24,Feldman26,DuettingFT26} for a broader overview of algorithmic contract design.

\paragraph{Single-agent combinatorial contracts.}
The single-agent combinatorial contract model was introduced by D\"utting et al.~\cite{DEFK21}, where a single agent chooses a subset of costly actions and the reward is determined by a set function over the chosen actions. They showed that an optimal contract can be computed in polynomial time for GS rewards, while the problem is NP-hard for submodular rewards. {Subsequent work identified tractable
cases beyond gross substitutes. D\"utting, Feldman, and Gal-Tzur \cite{DFG24} gave a general procedure for enumerating the critical contract values and obtained a polynomial-time algorithm for supermodular rewards with submodular costs.} Feldman and Yashin \cite{FeldmanYashin2025} later extended the positive result to the broader class of ultra rewards. 
This was further generalized to a richer class of functions termed {\em All Substitutes and Complements} (ASC) by \cite{baldwin2026combinatorial}.
This stands in sharp contrast to our multi-agent setting, where we show that ultra rewards admit no subexponential approximation using polynomially many value queries.

\paragraph{Multi-agent contracts.}
The multi-agent binary-action model studied in this paper was introduced by D\"utting et al.~\cite{DEFK23}, building on an earlier multi-agent model of Babaioff, Feldman, and Nisan~\cite{BFN2006}. D\"utting et al.~\cite{DEFK23} initiated the systematic study of the computational complexity of the optimal contract problem in multi-agent settings across the complement-free hierarchy. For additive rewards, they gave an FPTAS; for submodular rewards, they obtained a constant-factor approximation using value oracles; and for XOS rewards, they obtained a constant-factor approximation using value and demand oracles.
They further established 
strong inapproximability for the more general class of subadditive rewards. 
{Ezra, Feldman, and Schlesinger~\cite{EFS24} subsequently ruled out a PTAS for submodular rewards under value-oracle access and showed that XOS rewards admit no constant-factor approximation using value queries
alone. D\"utting et al.~\cite{DEFK25} later strengthened the submodular hardness by ruling out a PTAS even with access to both value and demand oracles.}
  
{Recent work considers several extensions of multi-agent contract design.
Feldman et al.~\cite{FGPS25} study budget constraints in the binary-action model and show that the previously known approximation guarantees for the principal's utility extend asymptotically to this setting. D\"utting et al.~\cite{DEFK26Equilibria} study mixed, correlated, and coarse-correlated equilibria in the more general multi-agent combinatorial-actions model. They establish a constant-factor lifting theorem for XOS rewards and, for submodular rewards, a stronger robustness theorem. D\"utting et al.~\cite{DFGK26Online} study online arrivals in the binary-action model and obtain an $O(1)$-competitive policy for submodular rewards.}

Our work refines this landscape by studying important classes between additive and submodular rewards. In particular, we distinguish between two natural, incomparable subclasses of GS: weighted matroid rank functions, for which we obtain an EPTAS, and OXS functions, for which we rule out a PTAS. We also improve the constant-factor approximation for general GS rewards. 

\paragraph{Multi-agent multi-action contracts.}
The binary-action model was subsequently extended to the multi-agent multi-action setting by D\"utting et al.~\cite{DEFK25}, combining the multi-agent nature of the binary-action model with the combinatorial action space of the single-agent model. In this setting, each agent controls multiple actions, and the reward depends on the union of the actions selected by all agents. For submodular rewards, they obtain a constant-factor approximation with value and demand oracle access and show that no PTAS exists. More recent work establishes an FPTAS for additive rewards~\cite{oneactiontoomany} and rules out a PTAS for GS rewards, even for unweighted matroid rank functions corresponding to partition matroids~\cite{Equal-pay26}.
The latter hardness result does not apply to the binary-action setting
  studied in this paper, as its construction crucially exploits the richer
  action spaces available to individual agents. Our results show that, unless
  $\mathrm{P}=\mathrm{NP}$, no PTAS exists even in the binary-action model
  for GS rewards. At the same time, weighted matroid rank
  rewards admit an EPTAS but no randomized FPTAS in general, while an FPTAS
  exists when the underlying matroid is a partition matroid.

%% file: model.tex
\section{Model and Preliminaries}
\label{sec:model}
\paragraph{Model.}
We consider the multi-agent contract model studied in \cite{DEFK23}, wherein a
 principal seeks to induce effort from a set
$A=[n]$ of agents towards a delegated project. Each agent  $i\in A$ chooses whether to
exert effort, where exerting effort incurs a cost
$c_i\in\mathbb{R}_{\geq 0}$ and 
shirking
is costless. We identify
an action profile with the set $S\subseteq A$ of agents who exert effort. 
The project outcome is binary, $\omega\in\Omega=\{0,1\}$, where $0$ denotes failure
and $1$ denotes success. The outcome determines a reward to the principal;  the principal receives no reward upon failure, and
we normalize the reward from success to $1$. The probability of success is
determined by a function $f:2^A\rightarrow[0,1]$, where $f(S)$ is the
probability that the project succeeds under action profile $S$. Since
success gives a reward of 1, $f(S)$ is also the principal's expected reward. 
Throughout the paper, we assume that $f$ is normalized and
monotone, that is, $f(\emptyset)=0$ and $f(S)\leq f(T)$ for every
$S\subseteq T\subseteq A$. For every team of agents $S\subseteq A$ and agent $i\in S$, we denote the marginal contribution of agent $i$ to $S$ by $f_i(S):=f(S) -f(S\setminus\{i\})$.

\paragraph{Payments and incentives.} Since exerting effort comes with a cost, unless the principal acts the agent's preference is to do nothing, eventually leading to a project failure. Thus, to incentivize the agents to exert effort, the principal designs a payment scheme.
The agents' actions are hidden from the principal, so the principal cannot condition payments on the agents' effort. Instead, the principal conditions the payments on the project's outcome.
In the binary-outcome setting, it is
without loss of generality to consider success-contingent linear contracts.
A linear contract is represented by a vector
$\contract=(\alpha_i)_{i\in A}\in\mathbb{R}_{\geq 0}^{n}$, under which agent
$i$ receives $\alpha_i$ if the project succeeds and no payment if it fails. Agent $i$'s expected utility is defined as the expected payment he receives from the principal, minus his cost if he exerted effort. Thus, 
under contract $\contract$ and action profile $S$, agent $i$ obtains an expected utility of 
$\alpha_i f(S)-c_i$ if $i\in S$, and $\alpha_i f(S)$ otherwise. That way,  every
contract $\contract$ induces a game among the agents. We say that $\contract$
\emph{incentivizes} a set $S\subseteq A$ if the corresponding action
profile is a pure Nash equilibrium of the induced game among the agents, or equivalently, if no agent can improve
their utility by unilaterally changing their action. Thus,
\[
\begin{alignedat}{2}
    &\alpha_i f(S)-c_i \geq \alpha_i f(S\setminus\{i\})
    &\qquad& \text{for every } i\in S,\\
    &\alpha_i f(S) \geq \alpha_i f(S\cup\{i\})-c_i
    &\qquad& \text{for every } i\notin S.
\end{alignedat}
\]

\paragraph{The contract design problem. }
Under a contract $\contract$, the principal receives a net payoff
of $1-\sum_{i\in A}\alpha_i$ upon success and zero upon failure. Hence, the
principal's expected utility from a contract $\contract$ and action profile $S\subseteq A$ is
$(1-\sum_{i\in A}\alpha_i)f(S)$. As was observed in \cite{DEFK23}, for a fixed incentivizable set of agents $S\subseteq A$, this
quantity is maximized by setting $\alpha_i=0$ for every $i\notin S$ and
making the incentive constraint of every $i\in S$ tight, which gives
$\alpha_i=c_i/f_i(S)$. Consequently, the optimal principal's utility
obtainable from incentivizing $S$ is
\[
    g(S):=
    \left(
        1-\sum_{i\in S}
        \frac{c_i}{f_i(S)}
    \right)f(S).
\]
We interpret the ratio as $0$ when both $c_i$ and
$f_i(S)$ are zero, and as $+\infty$ when $c_i>0$ and
$f_i(S)=0$. Equivalently, $g(S)=-\infty$ whenever $S$
contains a positive-cost agent with zero marginal contribution.  The contract design problem is to find a contract $\contract$ and incentivized action profile $S$ that maximize the principal's utiltiy. By the above, it reduces to finding a set in 
$\arg\max_{S\subseteq A}g(S)$. We denote
an optimal set by $S^\star\in\arg\max_{S\subseteq A}g(S)$. 

\paragraph{The reward function $f$. }  
Since $f$ has an exponential representation size,
it is common to assume access to $f$ through one or both of two standard
oracles. A \emph{value oracle} receives a set $S\subseteq A$ and returns
$f(S)$. A \emph{demand oracle}, given a vector of nonnegative prices
$p=(p_i)_{i\in A}\in\mathbb{R}_{\geq0}^{n}$, returns a set
$T\in\arg\max_{S\subseteq A}\{f(S)-\sum_{i\in S}p_i\}$. 
Throughout the paper, we consider several well-studied classes of set functions, whose definitions we recall below. For simplicity of presentation, we slightly abuse notation and write $+i,-i$ for adding or removing agent $i$ from a set, respectively.

A set function $f:2^A\rightarrow[0,1]$ is:
\begin{itemize}
    \item \emph{Additive} if there exist values  $v_1,\dots ,v_n\in \mathbb{R}_{\ge 0}$, such that for every $S\subseteq A$, $f(S)=\sum_{i\in S}v_i$.
    \item \emph{Weighted matroid rank function} (WMRF) if there exists a matroid $\mathcal{M}=(A,\mathcal{I})$ and non-negative weights $w_1,\dots ,w_n \in \mathbb{R}_{\ge 0}$ such that for every $S\subseteq A$, 
    $f(S)=\max_{T\subseteq S, T\in \mathcal{I}}\sum_{i\in T}w_i$.
    \item \emph{OXS} if there exists a weighted bipartite graph
  $H=(A\cup L,E)$, 
  with nonnegative edge weights, such that, for every $S\subseteq A$,
  $
      f(S)
      =
      \max\left\{
          w(M):
          M\text{ is a matching in }H[S\cup L]
      \right\},
  $ where $H[S\cup L]$ is the graph induced by vertices $S\cup L$.
  When the graph $H$ and its edge weights are given as part of the input,
  we say that the OXS function is \emph{explicitly represented}.
    \item \emph{Gross-substitutes} (GS) if for every $X,Y\subseteq A$ and every
    $x\in X\setminus Y$, either
    $
    f(X)+f(Y)
    \leq
    f(X-x)+f(Y+x),
    $
    or there exists $y\in Y\setminus X$ such that
    $f(X)+f(Y)
    \leq
    f(X-x+y)+f(Y-y+x)$.
    \item \emph{Ultra} if for every $X,Y \subseteq A$ with $|X| \leq |Y|$ and every $x\in X\setminus Y$, there exists $y\in Y\setminus X$ such that 
    $
    f(X)+f(Y) \leq f(X-x+y)+f(Y-y+x)
    $.

\item \emph{Submodular} if for every $X\subseteq Y\subseteq A$ and every $i\in A\setminus Y$,
$
f_i(X+i)\geq f_i(Y+i).
$
\end{itemize}
It is well known that for monotone functions
$
\mathrm{GS} =  \mathrm{Submodular}\cap\mathrm{Ultra}
$~\cite{lehmann2017ultra} and,
\[
\mathrm{Additive} \subset \mathrm{WMRF}, \mathrm{OXS} \subset GS \subset \mathrm{Ultra,Submodular}\text{~\cite{lehmann2001combinatorial}}.
\]

\paragraph{A note about oracle models.} 
Demand queries are typically stronger than value queries. In particular, for any monotone function, a value query can be implemented using a polynomial number of demand queries, whereas the converse does not hold in general. For ultra functions, however, the converse also holds: a demand query can be implemented using polynomially many value queries~\cite{lehmann2017ultra,FeldmanYashin2025}. Consequently, the value- and demand-oracle models are polynomially equivalent for ultra and its subclasses.

\paragraph{Approximation terminology.}
Let $\mathrm{OPT}:=g(S^\star)$. For $\gamma\geq1$, a set $S\subseteq A$ is a $\gamma$-approximation if $\gamma g(S)\geq \mathrm{OPT}$. A poly-time algorithm is a \emph{constant-factor approximation} if it achieves a $\gamma$-approximation for some  constant $\gamma$, independent of the input size.

For $\eps\in(0,1)$, an
approximation scheme returns a set $S$ satisfying
$g(S)\geq(1-\eps)\mathrm{OPT}$. A
\emph{polynomial-time approximation scheme} ($\mathsf{PTAS}$) runs in polynomial time
for every fixed $\eps$, although the degree of the polynomial may depend
on $\eps$. An \emph{efficient polynomial-time approximation scheme}
($\mathsf{EPTAS}$) has running time of the form
$f(1/\eps)\cdot n^{O(1)}$, where $n$ denotes the input size and the
exponent of $n$ is independent of $\eps$. Finally, a
\emph{fully polynomial-time approximation scheme} ($\mathsf{FPTAS}$) runs in time
polynomial in both $n$ and $1/\eps$.
Viewed as classes of optimization problems, these notions satisfy
\[
    \mathsf{FPTAS}
    \subseteq
    \mathsf{EPTAS}
    \subseteq
    \mathsf{PTAS}
    \subseteq
    \mathsf{APX},
\]
where $\mathsf{APX}$ denotes the class of problems that admit a
polynomial-time constant-factor approximation.

%% file: no_ptas.tex
\section{No PTAS for Gross Substitutes Rewards}
\label{sec:no-ptas-gs}

In this section, we rule out a PTAS for the optimal contract problem with GS rewards. In fact, we establish the stronger result that the problem is $\mathsf{APX}$-complete even for the natural subclass of explicitly represented OXS reward functions.

\begin{theorem}
\label{thm:apx-complete-oxs}
The optimal contract problem with explicitly represented OXS rewards is $\mathsf{APX}$-complete under PTAS reductions. Consequently, unless $\mathsf{P}=\mathsf{NP}$, the optimal contract problem admits no PTAS for GS rewards, even when restricted to OXS rewards.
\end{theorem}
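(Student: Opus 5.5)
The proof has two parts: membership in $\mathsf{APX}$ and $\mathsf{APX}$-hardness under PTAS reductions.

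\emph{Membership.} This part is routine. Every OXS function is GS, hence submodular. For an explicitly represented OXS function, a value query reduces to one maximum-weight bipartite matching computation, so it can be answered in polynomial time. Demand queries are also polynomial: by the note on oracle models they follow from value queries, or directly, by subtracting the price $p_i$ from every edge at agent $i$ and computing a maximum-weight matching. Therefore the constant-factor approximation of~\cite{DEFK23} for submodular rewards runs in polynomial time on our instances, which places the problem in $\mathsf{APX}$.

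\emph{Hardness.} I would give an L-reduction, which is in particular a PTAS reduction, from a bounded-degree $\mathsf{APX}$-complete problem. The natural candidates are Maximum Independent Set in graphs of maximum degree $3$ and Max-3SAT-B. The design is guided by the key phenomenon: the marginal contribution $f_i(S)$ of an OXS agent depends on the augmenting or alternating path that opens when $i$ is removed. This lets agents ``interact'' even though $f$ itself is substitutes-like. The plan, for MIS in a cubic graph $G=(V,E)$, is as follows.
\begin{itemize}
\item Create one vertex-agent $a_v$ of small cost $c$ for each $v\in V$.
\item Attach gadget agents and right-side items so that, if $a_u$ and $a_v$ are both incentivized for some edge $uv\in E$, removing one of them is absorbed by rerouting through the shared gadget. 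This makes the marginal contribution collapse from $w$ to roughly $w-\delta$, or to $0$.
\item If no neighbor of $a_v$ is present, $f_{a_v}(S)=w$ exactly.
\end{itemize}
The required payment $c/f_i(S)$ then jumps for adjacent pairs. I would choose the parameters so that $g(S)\approx (1-|S|c/w)\,|S|w$ on independent sets, with $|S|c/w$ kept small, for example by scaling so that the total payment is at most a constant. Every non-independent set should be strictly dominated after a polynomial-time local repair: dropping one endpoint of each violating edge changes $g$ by a bounded amount. Bounded degree gives $\mathrm{OPT}_{\text{contract}}=\Theta(\alpha(G))$, since $\alpha(G)\ge |V|/4$, and the repair step maps any contract solution back to an independent set whose size loss is linear in the utility loss. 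Those are exactly the two L-reduction inequalities.

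If the quadratic term spoils linearity, the fallback is to take the reward concave in $|S|$ and pin the target cardinality. I would do this by reducing from a gap version that compares independent sets of size $k$ and size $(1-\epsilon)k$, and arguing a constant-factor utility gap directly.

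\emph{Main obstacle.} The main difficulty is designing the OXS gadget so that two conditions hold simultaneously. First, marginal contributions of an agent drop by a controlled constant exactly when a conflicting neighbor is present. Second, the presence of gadget agents, or their interaction through long alternating paths, creates no unintended rerouting that affects non-adjacent vertex-agents. I would try first to give each edge $uv$ a private item reachable by both $a_u$ and $a_v$ with weight slightly below their private items. I would then verify, by a case analysis of alternating paths in the resulting matching instance, that removing $a_u$ is compensated only via $a_v$'s side. Bounded degree keeps each vertex's marginal computation local, so the verification stays finite.
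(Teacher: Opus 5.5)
Your membership argument is correct and matches the paper's: value queries are maximum-weight matchings, and the constant-factor algorithm of~\cite{DEFK23} for submodular rewards then runs in polynomial time. The hardness half, however, is a plan rather than a proof. The whole difficulty is the gadget, and the one concrete gadget you propose does not work as described. Suppose each $a_v$ has a private item of weight $w$, and each edge $uv$ contributes an item adjacent to $a_u,a_v$ of weight slightly below $w$. Then for every set $S$ of vertex-agents the optimal matching puts each $a_v$ on its own private item, which no one else can use. Hence $f(S)=w|S|$ and $f_{a_v}(S)=w$ for every $S$: the edge items are never used and no interaction arises.

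More fundamentally, in an OXS function the marginal of $i$ falls below the weight of its matched edge only when the slot vacated by $i$ starts an alternating path of other present agents with positive total gain. This is naturally exploited through an asymmetry between \emph{blockers}, which occupy a valuable slot, and \emph{upgraders}, which sit on a worse slot. The resulting penalty is of OR type: a blocker becomes cheaper to lose, hence more expensive to pay, as soon as \emph{at least one} upgrader for its slot is present. MIS needs an AND-type penalty (both endpoints present), and you have not shown how to get one. An OR-type penalty on the edges of a cubic graph counts covered edges $3|S|-e(S)$, which rewards rather than punishes internal edges. The same sign issue affects Max-3SAT-B: presence of satisfying agents can only raise detector payments, so an OR encoding makes the principal \emph{minimize} satisfied clauses. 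You would also still need a rigorous repair lemma controlling global rematchings, which you identify but do not supply.

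The paper resolves both issues by choosing a source problem that matches this sign. It reduces from Max-Cut on cubic graphs, encoded by the clauses $(x_u\lor\neg x_v)$ and $(\neg x_u\lor x_v)$ for each edge, so that $\operatorname{sat}(\rho)=m-\operatorname{cut}(\rho)$ and minimizing satisfied clauses is exactly maximizing the cut. Detectors (blockers) hold clause slots of weight $3/Z$. State agents $u^0,u^1$ (upgraders) sit on a vertex slot of weight $1/Z$ and have $2/Z$ edges to the clause slots whose literals they satisfy. On assignment sets, a detector's marginal is therefore $2/Z$ if its clause is satisfied and $3/Z$ otherwise. With tiny costs this gives $\mathrm{OPT}=1-\eta/300-\eta/2+\frac{\eta}{6m}C^\star$. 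The assignment-set lemma replaces any incentivizable $S$ by some $S_\rho$ with $g(S_\rho)\geq g(S)$, using cubicity to bound how many detector payments can rise. Together with $C^\star\geq m/4$, this yields a PTAS reduction with $\eps_{\mathrm{OXS}}=\eta\eps/24$. To salvage your approach, you would need either such an OR-compatible source problem or an explicit OXS gadget realizing a pairwise AND penalty, in both cases with a complete repair argument.
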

We reduce from Max-Cut on simple cubic graphs, which was shown to be
$\mathsf{APX}$-complete by \cite{alimonti2000apx}. For any such graph, we consider a propositional representation in the form of a CNF formula, where each vertex is represented by a variable, and its assignment represents which side of the cut it is given. In particular, for any edge $\{u,v\}$ in the graph, we add two clauses: $(x_u \lor \lnot x_v)$ and $(x_v \lor \lnot x_u)$. This representation crucially encodes the Max-Cut problem in the following way: the number of clauses satisfied by an assignment $\rho$ is exactly the total number of clauses minus the number of edges crossing the corresponding cut. 

With this propositional representation in mind, we present the construction of our reduced contract design instance (see \Cref{subsec:oxs-reduction}).  The set of agents includes two \emph{state agents} for each vertex, which encode the cut/assignment, and one \emph{detector agent} for each clause. In theory, the selected team of agents can be an arbitrary combination of state agents and detector agents, however, our construction is carefully crafted such that without loss of generality, an optimal set encodes a valid assignment/cut (see \Cref{subsec:assignment-sets}). More concretely, detector agents induce a large reward for the principal, and thus the principal seeks to incentivize all of them, at the lowest possible total payment. This is where the structure of the original instance comes in; the cut is encoded in the
detector marginals. Within any team of agents, a detector agent has a smaller marginal contribution when the corresponding state agents encode a valid assignment which satisfies its clause. Since the payment required by an agent is inversely proportional to its marginal
contribution, satisfied clauses are more expensive for the principal. Thus, the cheapest way to incentivize all detector agents is by minimizing the number of clauses satisfied by an assignment, and equivalently, maximizing the cut.
\subsection{From MAX-CUT to Contracts with OXS Rewards}
\label{subsec:oxs-reduction}
\paragraph{MAX-CUT on simple cubic graphs.}
Let $G=(V,E)$ be a simple cubic graph. Denote
$
    N:=|V|,
    |E|=\frac{3N}{2},
    m:=2|E|=3N.
$
For every edge $\{u,v\}\in E$, introduce the two clauses
\[
    C^1_{uv}:=(x_u\lor\neg x_v),
    \qquad
    C^2_{uv}:=(\neg x_u\lor x_v),
\]
and let $\mathcal C$ be the resulting collection of $m$ clauses.

An assignment $\rho:V\rightarrow\{0,1\}$ defines a cut by placing vertices
assigned $0$ and $1$ on opposite sides. If an edge crosses the cut, exactly
one of its two clauses is satisfied; otherwise, both are satisfied.
Therefore,
$
    \operatorname{sat}(\rho)
    =
    m-\operatorname{cut}(\rho),
$
where $\operatorname{sat}(\rho)$ and $\operatorname{cut}(\rho)$ denote the
number of satisfied clauses and crossing edges, respectively.

\paragraph{The OXS construction.}
We construct a weighted bipartite graph whose left side consists of agents
and whose right side consists of slots. Let 
$Z:=N+3m$.

For every vertex $u\in V$, introduce two state agents $u^0,u^1$ and a
vertex slot $s_u$. Selecting $u^\phi$ is intended to encode the assignment
$x_u=\phi$. For every clause $C\in\mathcal C$, introduce a detector agent
$d_C$ and a clause slot $s_C$. The edges are defined as follows:
\begin{enumerate}
    \item Each state agent $u^\phi$, $\phi\in\{0,1\}$, is connected to
    $s_u$ by an edge of weight $1/Z$.

    \item Each detector $d_C$ is connected to $s_C$ by an edge of weight
    $3/Z$.

    \item A state agent $u^\phi$ is connected to $s_C$ by an edge of
    weight $2/Z$ whenever the literal involving $x_u$ in $C$ is satisfied
    by setting $x_u=\phi$.
\end{enumerate}
Since $G$ is cubic, every state agent is connected
to exactly three clause slots, and the construction has polynomial size.

For every set of agents $S$, let $f(S)$ be the weight of a maximum matching
between the agents in $S$ and the slots. Since each vertex slot contributes
at most $1/Z$ and each clause slot contributes at most $3/Z$, it holds that 
$f(S) \leq \frac{N+3m}{Z} = 1$. 
Thus, $f$ is a normalized monotone OXS reward function. 
Fix $\eta:=1/100$, and set
\[
    c_{u^\phi}:=\frac{\eta}{100mZ},
    \qquad
    c_{d_C}:=\frac{\eta}{mZ}.
\]

For every incentivizable set $S$, 
write $t(S)
    :=
    \sum_{i\in S}
    \frac{c_i}{f_i(S)}$. 
    Then, $g(S)=\bigl(1-t(S)\bigr)f(S)$.
\Cref{fig:oxs-edge-gadget} illustrates the local OXS graph induced by a
single edge of the original cubic graph.
\input{oxs_gadget_figure}
\subsection{Assignment Sets}
\label{subsec:assignment-sets}
For every assignment $\rho:V\rightarrow\{0,1\}$ , we define the corresponding
\emph{assignment set} in the constructed contract instance
\[
    S_\rho
    :=
    \{u^{\rho(u)}:u\in V\}
    \cup
    \{d_C:C\in\mathcal C\}.
\]
Thus, $S_\rho$ contains exactly one state agent for every vertex and all
detector agents.
The following lemma shows that, 
among assignment sets, the principal's utility increases with the size of the corresponding cut.

\begin{restatable}[Utility of assignment sets]{lemma}{assignmentsetutilitylemma}
\label{lem:assignment-set-utility}
For every assignment $\rho:V\rightarrow\{0,1\}$, the corresponding
assignment set $S_\rho$ is incentivizable and satisfies
\begin{equation}
\label{eq:ass-set-utility}
    g(S_\rho)
    =
    1-\frac{\eta}{300}
    -\frac{\eta}{2}
    +\frac{\eta}{6m}\operatorname{cut}(\rho).
\end{equation}
\end{restatable}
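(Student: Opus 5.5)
The plan is to compute $f(S_\rho)$ and every marginal contribution $f_i(S_\rho)$ for $i\in S_\rho$ explicitly. We then plug these into $g(S)=(1-t(S))f(S)$ and express the result through $\operatorname{sat}(\rho)=m-\operatorname{cut}(\rho)$.

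\emph{Step 1: $f(S_\rho)=1$.} Match every detector $d_C$ to $s_C$ (weight $3/Z$) and every chosen state agent $u^{\rho(u)}$ to $s_u$ (weight $1/Z$). Every slot is then filled with its maximum possible weight, so the matching has weight $(3m+N)/Z=1$, which meets the upper bound $f\le 1$ noted in the construction.

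\emph{Step 2: state agents.} For $u^{\rho(u)}\in S_\rho$, the slot $s_u$ is adjacent only to $u^0,u^1$, and only $u^{\rho(u)}$ lies in $S_\rho$. Hence after removing it, $s_u$ is empty, and every other slot can still be filled at its maximum weight by the matching of Step 1. So $f(S_\rho-u^{\rho(u)})=1-1/Z$, and the marginal is $1/Z$. The required payment is $c_{u^\phi}Z=\eta/(100m)$. Summing over the $N=m/3$ state agents gives $\eta/300$.

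\emph{Step 3: detectors (the main step).} Fix $C$ on variables $x_u,x_v$. In $S_\rho$ the only possible neighbours of $s_C$ other than $d_C$ are $u^{\rho(u)}$ and $v^{\rho(v)}$. Such an edge exists iff the corresponding literal is satisfied by $\rho$, and then it has weight $2/Z$. We need an upper bound on $f(S_\rho-d_C)$ that holds for any rearrangement of the matching. Each vertex slot $s_w$ has a unique neighbour $w^{\rho(w)}$ in $S_\rho$. So any state agent matched to a clause slot leaves its own vertex slot empty, forfeiting $1/Z$ and gaining at most $2/Z$. On any slot $s_{C'}$ with $C'\ne C$ this is a net loss, because $d_{C'}$ would supply $3/Z$ there instead.

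Therefore, if $C$ is satisfied by $\rho$, the best rearrangement moves one satisfying state agent into $s_C$, and $f(S_\rho-d_C)=1-3/Z+2/Z-1/Z=1-2/Z$. If $C$ is not satisfied, $s_C$ stays empty and $f(S_\rho-d_C)=1-3/Z$. The marginal is thus $2/Z$ or $3/Z$, and the payment $c_{d_C}/f_{d_C}$ is $\eta/(2m)$ or $\eta/(3m)$, respectively. Summing over the detectors gives
\[
\operatorname{sat}(\rho)\tfrac{\eta}{2m}+(m-\operatorname{sat}(\rho))\tfrac{\eta}{3m}
=\tfrac{\eta}{3}+\tfrac{\eta}{6m}\operatorname{sat}(\rho)
=\tfrac{\eta}{2}-\tfrac{\eta}{6m}\operatorname{cut}(\rho).
\]

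\emph{Step 4: conclusion.} All marginals of members of $S_\rho$ are positive. Hence the contract with $\alpha_i=c_i/f_i(S_\rho)$ for $i\in S_\rho$ and $\alpha_i=0$ otherwise makes every member's constraint tight. Any non-member would only lose $c_i\ge0$ by exerting effort, so $S_\rho$ is incentivizable. Combining Steps 1–3, $t(S_\rho)=\frac{\eta}{300}+\frac{\eta}{2}-\frac{\eta}{6m}\operatorname{cut}(\rho)$ and $f(S_\rho)=1$, which yields \eqref{eq:ass-set-utility}.

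The only delicate point is the exchange argument in Step 3. It must rule out longer alternating-path reassignments. The key fact that does so is that vertex slots have a unique neighbour in $S_\rho$, so every such reassignment is charged a $1/Z$ loss per displaced state agent.
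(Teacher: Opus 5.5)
Your proposal is correct and follows essentially the same route as the paper: take the canonical matching of value $1$, compute the marginals $1/Z$ for state agents and $2/Z$ or $3/Z$ for detectors depending on whether the clause is satisfied, and substitute into $t(S_\rho)$ via $\operatorname{sat}(\rho)=m-\operatorname{cut}(\rho)$. Your Step 3 is slightly more careful than the paper. The paper only asserts that the vacated vertex slot cannot be reused, whereas your accounting (a net $-2/Z$ for moving a state agent onto $s_{C'}$ with $C'\neq C$, versus $+1/Z$ onto $s_C$) explicitly rules out longer reassignments, and your Step 4 checks the equilibrium conditions directly.
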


The lemma rises from the natural matching for $S_\rho$, where every selected state agent is matched to its vertex slot and every detector is matched to its clause slot gives optimal matching value of 1. The marginal contributions of such matching happen to be $1/Z$ for state agents and $2/Z$ for detectors, if its clause is satisfied and $3/Z$ otherwise. Substituting the corresponding payment shares and using $\operatorname{sat}(\rho)=m-\operatorname{cut}(\rho)$ gives \eqref{eq:ass-set-utility}. For the formal proof see \Cref{app:ass-set-ut}.

The principal may choose an arbitrary set of agents, potentially containing both state agents of a vertex, omitting both state agents of another vertex, or omitting some detector agents. We next show that any such set can be replaced by an assignment set with weakly greater utility.

Adding or removing an agent may change the maximum matching and,
consequently, the marginal contributions and required payments of other
agents. We therefore compare an arbitrary incentivizable set directly with
an assignment set constructed from one of its maximum-weight matchings.

\begin{restatable}[Assignment-set lemma]{lemma}{assignmentsetlemma}
\label{lem:assignment-set}
For every incentivizable set $S$, one can compute in polynomial time an
assignment $\rho:V\rightarrow\{0,1\}$ such that $g(S_\rho)\geq g(S)$.
\end{restatable}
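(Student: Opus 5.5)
Let $S$ be an incentivizable set with $g(S)>-\infty$. Fix a maximum-weight matching $M$ of $H[S\cup L]$; it can be computed in polynomial time. We read off an assignment $\rho$ from $M$. For each vertex $u$, if some $u^\phi\in S$ is matched by $M$ to $s_u$, set $\rho(u)=\phi$. Otherwise, if some $u^\phi\in S$ is matched to a clause slot, set $\rho(u)$ to that $\phi$ (breaking ties arbitrarily). Otherwise set $\rho(u)$ arbitrarily. By \Cref{lem:assignment-set-utility},
\[
g(S_\rho)\ \ge\ 1-\tfrac{\eta}{300}-\tfrac{\eta}{2},
\]
since $\operatorname{cut}\ge 0$. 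Moreover $g(S)\le (1-t(S))f(S)\le f(S)$. It therefore suffices to split into two cases, according to whether $f(S)$ is close to $1$ and how large $t(S)$ is.

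\textbf{Case 1: $S$ loses weight.} Suppose some slot is unmatched in $M$, or some clause slot $s_C$ is matched to a state agent. In the first sub-case, $f(S)\le 1-1/Z$. In the second, the edge into $s_C$ has weight $2/Z$ instead of $3/Z$, so again $f(S)\le 1-1/Z$. Hence
\[
g(S)\le 1-\frac{1}{Z}=1-\frac{1}{N+3m}.
\]
We compare this with the bound on $g(S_\rho)$: since $\eta=1/100$, we have $\eta/2+\eta/300\approx 0.005$. Whenever $1/Z$ exceeds this quantity, Case 1 is immediate.

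Here is the main obstacle: $1/Z$ is tiny, so the naive comparison fails, and in fact the costs were scaled as $\eta/(mZ)$ precisely so that payments are of order $\eta$. The correct comparison must therefore be made at the same scale. I would instead bound $t(S)$ from below and $f(S)$ from above jointly. The principal's loss in $S_\rho$ relative to reward $1$ is exactly
\[
\frac{\eta}{300}+\frac{\eta}{2}-\frac{\eta}{6m}\operatorname{cut}(\rho)\ \le\ \frac{\eta}{300}+\frac{\eta}{2}\ =\ O(\eta).
\]
If $f(S)\le 1-k/Z$, where $k$ is the number of lost units, I would argue that $S$ must still pay for the detectors it contains. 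Each detector $d_C\in S$ has marginal contribution at most $3/Z$, so it costs at least $\eta/(3m)$. Each state agent costs at least $\eta/(200m)$, since its marginal is at most $2/Z$. These payments need to be set against the reward loss. The delicate part is that the loss $k/Z$ is on scale $1/Z$ while payments are on scale $\eta/m$. Since $Z=N+3m=\Theta(m)$ and $\eta$ is a small constant, a single lost unit, of size $1/Z\approx 1/(3.3m)$, dominates the maximum possible savings $\eta/(6m)\cdot\operatorname{cut}$ per clause. I would therefore charge the deficits locally: each missing detector or misused clause slot loses $\ge 1/Z$ of reward, which exceeds the at most $\eta/(3m)+O(\eta/m)$ of payment it could save. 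Each vertex slot left uncovered likewise loses $1/Z\gg \eta/(100m)$.

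\textbf{Case 2: $M$ saturates every slot "canonically".} Here every vertex slot $s_u$ is matched to a state agent of $u$, every clause slot is matched to its detector, and so $f(S)=1=f(S_\rho)$. Then $S\supseteq S_\rho$, possibly together with extra state agents. An extra state agent $u^{1-\rho(u)}$ in $S$ has positive cost, so its marginal must be positive. It also strictly increases $t(S)$, and it can only reduce (never increase) the other agents' marginals. The reason is that an alternate agent for $s_u$, or for the clause slots it satisfies, provides substitutes. For detectors in particular, $f_{d_C}(S)\le f_{d_C}(S_\rho)$, because adding agents to a submodular (OXS) function weakly decreases marginals. Likewise state agents' marginals weakly decrease. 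Therefore $t(S)\ge t(S_\rho)$ and $g(S)\le g(S_\rho)$. Submodularity of OXS is the key tool in this case.

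Combining the cases gives $g(S_\rho)\ge g(S)$, and $\rho$ is computed in polynomial time from $M$.
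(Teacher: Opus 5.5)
Your Case~2 is correct: submodularity gives $f_i(S)\le f_i(S_\rho)$ for $i\in S_\rho\subseteq S$. In fact any extra state agent is unmatched in a matching of value $1$, so it has zero marginal and forces $S=S_\rho$.

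The gap is in Case~1, which carries all of the lemma's difficulty. First, no comparison against the cut-free bound $g(S_\rho)\ge 1-\frac{\eta}{300}-\frac{\eta}{2}$ can work. Take $S$ to be the assignment set of a maximum cut with one state agent removed. Then $1-f(S)=1/Z$, yet $g(S)\ge 1-\frac{\eta}{300}-\frac{\eta}{2}+\frac{\eta}{24}-\frac1Z$, which exceeds that bound for large $m$. So you must bound $t(S_\rho)-t(S)$ term by term, by $O(\eta/m)$ per lost unit. Your local charging only prices the agents that $S_\rho$ \emph{adds} (missing detectors, the new state agent of an uncovered vertex), as if each agent's share were fixed. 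But shares depend on the whole set. A detector $d_C$ lying in both $S$ and $S_\rho$ can have marginal $3/Z$ in $S$ and $2/Z$ in $S_\rho$, so its share rises by $\frac{\eta}{6m}$; this happens, e.g., when $S$ has no agent of $u$ and $u^{\rho(u)}$ satisfies $C$. Nothing in your sketch charges these increases. Without tying them to lost units, you only have the trivial $O(\eta)$ bound on their total, which swamps a single lost unit of $1/Z=\Theta(1/m)$. Your remark that one lost unit dominates $\frac{\eta}{6m}$ is true only \emph{per clause}; the real question is how many clauses each lost unit must pay for.

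The missing idea is to attribute each such increase to an unoccupied vertex slot. Your $\rho$ agrees with $M$ on every occupied vertex slot. Suppose $f_{d_C}$ drops from $3/Z$ in $S$ to $2/Z$ in $S_\rho$. Then the chosen state $u^{\rho(u)}$ satisfying $C$ was not matched to $s_u$ in $M$: otherwise, after deleting $d_C$, it could move to $s_C$ and recover $1/Z$, so $f_{d_C}(S)$ would already be $2/Z$. Hence $s_u$ is unoccupied in $M$. Since $u^{\rho(u)}$ satisfies only three clauses (cubicity), each unoccupied vertex slot accounts for at most $3\cdot\frac{\eta}{6m}$ of increase, plus $\frac{\eta}{100m}$ for its new state agent. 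Each missing detector adds at most $\frac{\eta}{2m}$. You also need two structural facts: every selected agent is matched, and every selected detector occupies its own slot. With these, $1-f(S)\ge\ell/Z$ for $\ell=(m-d)+(N-k)$, and $t(S_\rho)-t(S)\le\frac{51\eta}{100m}\ell$. The identity $g(S_\rho)-g(S)=(1-f(S))(1-t(S_\rho))-f(S)\bigl(t(S_\rho)-t(S)\bigr)$, together with $t(S_\rho)<\eta$ and $Z=10m/3$, then finishes the proof.
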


The proof (see \Cref{app:ass-set-wlog}) shows that the reward lost due to missing detector agents and unoccupied vertex slots is enough to cover the possible increase in the total payment share when passing from an arbitrary incentivizable set to an assignment set. The main issue is that this change may increase the payments of detector agents that were already selected. Cubicity ensures that each unoccupied vertex slot can be responsible for at most three such increases. Since $\eta$ is sufficiently small, the reward gained outweighs all possible payment increases. Therefore, the resulting assignment set has at least as much utility as the original set, showing that assignment sets are without loss of generality.

\subsection{Putting It All Together}
\label{subsec:oxs-apx-hardness}
We are now ready to prove \Cref{thm:apx-complete-oxs}. We first establish APX-hardness by showing that the construction above gives a PTAS reduction from Max-Cut on simple cubic graphs. We then complete the proof by showing that the problem belongs to $\mathsf{APX}$.

\begin{proof}[Proof of Theorem~\ref{thm:apx-complete-oxs}]
Let
$
    C^\star
    :=
    \max_{\rho:V\rightarrow\{0,1\}}
    \operatorname{cut}(\rho)
$
be the maximum cut value of the original graph. By \Cref{lem:assignment-set-utility,lem:assignment-set},
\[
    \mathrm{OPT}
    =
    1-\frac{\eta}{300}
    -\frac{\eta}{2}
    +\frac{\eta}{6m}C^\star.
\]

Fix $\eps\in(0,1)$ and set
$
    \eps_{\mathrm{OXS}}
    :=
    \frac{\eta\eps}{24}.
$
Suppose that $S$ is a $(1-\eps_{\mathrm{OXS}})$-approximate solution to
the induced contract instance. Since
Lemma~\ref{lem:assignment-set-utility} gives a positive-utility assignment
set, $\mathrm{OPT}>0$, and hence $S$ is incentivizable. By
Lemma~\ref{lem:assignment-set}, we can compute an assignment $\rho$ such
that $g(S_\rho)\geq g(S)$. Let $C:=\operatorname{cut}(\rho)$. Then
\[
    1-\frac{\eta}{300}-\frac{\eta}{2}+\frac{\eta}{6m}C
    =
    g(S_\rho)
    \geq
    g(S)
    \geq
    (1-\eps_{\mathrm{OXS}})\mathrm{OPT},
\]
where the first equality follows by \Cref{lem:assignment-set-utility}.
Rearranging and using $\mathrm{OPT}\leq1$ gives
$
    \frac{\eta}{6m}(C^\star-C)
    \leq
    \eps_{\mathrm{OXS}}\mathrm{OPT}
    \leq
    \eps_{\mathrm{OXS}},
    $ and hence $
    C^\star-C
    \leq
    \frac{6m\eps_{\mathrm{OXS}}}{\eta}=\frac{\eps m}{4}.
$

One can easily verify that
  $
      C^\star \geq \frac{m}{4}.
  $
Indeed, under a uniformly random partition, each edge crosses the cut with
probability \(1/2\), so some cut contains at least \(|E|/2\) edges. Since
  \(m=2|E|\), the claim follows.
Consequently,
$
    C^\star-C
    \leq
    \frac{\eps m}{4}
    \leq
    \eps C^\star,
    $ so $
    C\geq(1-\eps)C^\star.
$

Thus, a $(1-\eta\eps/24)$-approximate solution to the OXS contract instance
yields, in polynomial time, a $(1-\eps)$-approximate cut. Hence, this is a PTAS
reduction from Max-Cut on cubic graphs. Since cubic Max-Cut is APX-complete
\cite{alimonti2000apx}, the OXS contract problem is APX-hard under PTAS
reductions.

It remains to show that the problem belongs to $\mathsf{APX}$.
  Every OXS reward function is submodular, and the optimal contract problem
  with submodular reward functions admits a polynomial-time constant-factor
  approximation using value queries~\cite{DEFK23}. Given an explicit OXS
  representation, every value query can be answered in polynomial time by
  computing a maximum-weight bipartite matching. Hence, the optimal contract
  problem with explicitly represented OXS reward functions belongs to
  $\mathsf{APX}$.
Combining APX-hardness with membership in $\mathsf{APX}$ proves that the
  problem is $\mathsf{APX}$-complete under PTAS reductions. Since every OXS
  reward function is gross substitutes, a PTAS for GS rewards would also
  give a PTAS for explicitly represented OXS rewards. Thus, unless
  $\mathsf{P}=\mathsf{NP}$, no such PTAS exists.

\end{proof}

%% file: oxs_gadget_figure.tex
\begin{figure}[htbp]
    \centering
    \begin{tikzpicture}[
        x=1.15cm,
        y=0.95cm,
        state/.style={
            circle,
            draw=black,
            fill=black!3,
            minimum size=7.5mm,
            inner sep=1pt,
            font=\small
        },
        detector/.style={
            rectangle,
            rounded corners=2pt,
            draw=black,
            fill=black!8,
            minimum width=13mm,
            minimum height=7.5mm,
            inner sep=1pt,
            font=\small
        },
        slot/.style={
            rectangle,
            draw=black,
            fill=black!3,
            minimum width=13mm,
            minimum height=7.5mm,
            inner sep=1pt,
            font=\small
        },
        weight one/.style={draw=black!55, line width=0.5pt},
        weight two/.style={draw=blue!65!black, line width=1.15pt},
        weight three/.style={draw=orange!85!black, line width=2pt},
        continuation dot/.style={
            circle,
            draw=none,
            fill=blue!65!black,
            minimum size=1.5pt,
            inner sep=0pt
        }
    ]
        \begin{scope}[xshift=1cm]
        \node[font=\small\bfseries] at (0,5.55) {Agents};
        \node[font=\small\bfseries] at (6.2,5.55) {Slots};

        \node[state]    (u1) at (0, 4.0) {$u^1$};
        \node[detector] (d1) at (0, 3.0) {$d_{C^1_{uv}}$};
        \node[state]    (v0) at (0, 2.0) {$v^0$};

        \node[state]    (u0) at (0, 0.35) {$u^0$};
        \node[detector] (d2) at (0,-0.65) {$d_{C^2_{uv}}$};
        \node[state]    (v1) at (0,-1.65) {$v^1$};

        \node[slot] (su) at (6.2, 4.75) {$s_u$};
        \node[slot] (c1) at (6.2, 3.0) {$s_{C^1_{uv}}$};
        \node[slot] (c2) at (6.2,-0.65) {$s_{C^2_{uv}}$};
        \node[slot] (sv) at (6.2,-2.40) {$s_v$};

        \node[anchor=west, font=\scriptsize] at (6.8, 3.0)
            {$C^1_{uv}=(x_u\lor\neg x_v)$};
        \node[anchor=west, font=\scriptsize] at (6.8,-0.65)
            {$C^2_{uv}=(\neg x_u\lor x_v)$};
        \foreach \agent in {u1,v0,u0,v1} {
            \draw[weight two] (\agent.20) -- ++(0.49,0.13);
            \path (\agent.20) -- ++(0.95,0.26)
                node[pos=0.68, continuation dot] {}
                node[pos=0.82, continuation dot] {}
                node[pos=0.96, continuation dot] {};

            \draw[weight two] (\agent.-20) -- ++(0.49,-0.13);
            \path (\agent.-20) -- ++(0.95,-0.26)
                node[pos=0.68, continuation dot] {}
                node[pos=0.82, continuation dot] {}
                node[pos=0.96, continuation dot] {};
        }

        \draw[weight one] (u1) -- (su);
        \draw[weight one] (u0) to[out=18,in=205,looseness=1.08] (su);
        \draw[weight one] (v0) to[out=-18,in=155,looseness=1.08] (sv);
        \draw[weight one] (v1) -- (sv);

        \draw[weight two] (u1) -- (c1);
        \draw[weight two] (v0) -- (c1);
        \draw[weight two] (u0) -- (c2);
        \draw[weight two] (v1) -- (c2);

        \draw[weight three] (d1) -- (c1);
        \draw[weight three] (d2) -- (c2);
        \end{scope}

        \begin{scope}[yshift=-2.28cm]
        \node[anchor=west, inner sep=0pt, font=\normalsize\bfseries]
            at (-4.10,5.05)
            {Edge Weights};

        \draw[weight one] (-4.10,4.42) -- (-3.45,4.42);
        \node[anchor=west, font=\small] at (-3.25,4.42) {$1/Z$};

        \draw[weight two] (-4.10,3.74) -- (-3.45,3.74);
        \draw[weight two] (-4.10,3.36) -- ++(0.34,0);
        \path (-4.10,3.36) -- ++(0.65,0)
            node[pos=0.68, continuation dot] {}
            node[pos=0.82, continuation dot] {}
            node[pos=0.96, continuation dot] {};
        \node[anchor=west, font=\small] at (-3.25,3.55) {$2/Z$};

        \draw[weight three] (-4.10,2.65) -- (-3.45,2.65);
        \node[anchor=west, font=\small] at (-3.25,2.65) {$3/Z$};
        \end{scope}
    \end{tikzpicture}
    \caption{The local OXS graph induced by an edge $\{u,v\}$ of the
    original cubic graph.  A state agent is connected to a clause slot
    exactly when its state satisfies the corresponding literal.  A blue edge
    ending in three dots denotes a weight-$2/Z$ edge to a clause slot outside
    the displayed graph.  Each state agent has two such edges, induced by
    the other two graph edges incident to its vertex.}
    \label{fig:oxs-edge-gadget}
\end{figure}

%% file: WMRF.tex
\section{Weighted Matroid Rank Function (WMRF) Rewards}
\label{sec:wmrf}

In this section, we study the optimal contract problem for weighted matroid rank function (WMRF) rewards.
In Section~\ref{subsec:wmrf-eptas}, we obtain an EPTAS for general WMRFs. In Section~\ref{subsec:wmrf-no-fptas}, we show that this guarantee cannot be strengthened to an FPTAS. Finally, in Section~\ref{subsec:wmrf-uniform-partition}, we show that, for the special case of partition matroids, the problem admits an FPTAS.

Before presenting our results, we introduce some notation, and establish a structural property of WMRFs that will be useful throughout the section.

Let $\mathcal{M}=(A,\mathcal{I})$ be the underlying matroid, 
$w_i\geq0$ be the weight of agent $i$, {and denote $w(S)=\sum_{i\in S} w_i$ for any $S\subseteq A$}. By definition,
$f(S) = \max_{T\subseteq S, T\in\mathcal{I}}\sum_{i\in T}w_i$.
Without loss of generality, we assume that
$w_i>0$ for every $i\in A$. Indeed, agents with zero marginal can be removed from every incentivizable set without decreasing the principal's utility.

The following lemma shows that there always exists an optimal solution in $\mathcal{I}$ (namely, an independent set). The proof appears in Appendix~\ref{app:wmrf-independent-proof}.

\begin{restatable}{lemma}{wmrfindependentoptimumlemma}
\label{lem:wmrf-independent-optimum}
For every set $S\subseteq A$, there exists an independent set
$T\subseteq S$ such that $f(T)=f(S)$ and $g(T)\geq g(S)$. In particular,
there exists an optimal solution in $\mathcal{I}$.
\end{restatable}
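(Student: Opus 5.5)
Take $T$ to be a maximum-weight independent subset of $S$, so that $f(T)=f(S)$ by the definition of $f$. Since $T\in\mathcal{I}$, every subset of $T$ is independent, so $f$ restricted to subsets of $T$ is additive: $f(T')=w(T')$ for all $T'\subseteq T$. In particular, each $i\in T$ has marginal contribution $f_i(T)=w_i>0$.

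The goal is to show $c_i/w_i \le c_i/f_i(S)$ for every $i\in T$. Dropping the agents in $S\setminus T$ can only remove nonnegative terms from the payment sum. Together with $f(T)=f(S)$, this gives $g(T)\ge g(S)$. If $g(S)=-\infty$ there is nothing to prove, so assume every $i\in S$ with $c_i>0$ has $f_i(S)>0$. Agents with $c_i=0$ contribute $0$ to both sums.

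The key step, and the only nontrivial one, is the inequality $f_i(S)\le w_i$ for $i\in T$. I would argue it via submodularity, which holds because WMRFs are gross substitutes and hence submodular. Submodularity gives $f_i(S)\le f_i(\{i\})$, and $f(\{i\})=w_i$ when $\{i\}$ is independent. If $\{i\}$ were a loop it could not lie in $T$, so $\{i\}$ is independent for every $i\in T$. Hence $f_i(S)\le f(\{i\})-f(\emptyset)=w_i=f_i(T)$.

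I would also check that the submodularity definition as stated covers this case: take $X=\emptyset$ and $Y=S-i$, with $i\notin Y$. A direct matroid argument works too. Removing $i$ from $S$ leaves $T-i$ independent in $S-i$, so $f(S-i)\ge f(S)-w_i$. This gives the same bound without invoking submodularity, and I would use it as the cleaner route.

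Combining these bounds, $\sum_{i\in T}c_i/f_i(T)\le\sum_{i\in S}c_i/f_i(S)$, so $g(T)=(1-t(T))f(S)\ge(1-t(S))f(S)=g(S)$.

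For the "in particular" part, apply the statement to an optimal $S^\star$. The resulting $T$ is independent and satisfies $g(T)\ge g(S^\star)$, so $T$ is also optimal.

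I do not expect a real obstacle here. The main care points are the $\pm\infty$ conventions and confirming that $T$ is nonempty where needed. If $f(S)=0$, take $T=\emptyset$, which gives $g(T)=0\ge g(S)$.
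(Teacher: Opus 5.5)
Your proof is correct and follows the paper's argument. You take a maximum-weight independent subset $T\subseteq S$, show $f_i(S)\le w_i=f_i(T)$ for each $i\in T$, and drop the nonnegative payment terms of $S\setminus T$. The only difference is minor: the paper gets $f_i(S)\le f_i(T)$ from submodularity using $T\setminus\{i\}\subseteq S\setminus\{i\}$, whereas your direct bound $f(S\setminus\{i\})\ge w(T\setminus\{i\})$ gives the same inequality without invoking submodularity.
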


We note that value-oracle access allows us to implement an independence oracle. One can easily verify that, since all weights are positive, $S\in\mathcal{I}$ if and only if $f(S)=\sum_{i\in S}w_i$. 
Thus, after obtaining the weights from singleton queries, each independence query requires only one additional value query.

\subsection{An EPTAS for WMRF Rewards}
\label{subsec:wmrf-eptas}

The following theorem establishes an EPTAS for WMRF rewards.

\begin{theorem}[EPTAS for WMRF rewards]
\label{thm:wmrf-eptas}
For WMRF rewards, the optimal contract problem
admits an EPTAS
under value-oracle access.
\end{theorem}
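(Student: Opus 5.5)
By Lemma~\ref{lem:wmrf-independent-optimum} we may restrict attention to independent sets $T\in\mathcal{I}$. On such sets the reward function is additive: $f(T)=w(T)$. Moreover, for every $i\in T$ the set $T-i$ is independent, so $f_i(T)=w_i$. Hence
\[
g(T)=\Bigl(1-\sum_{i\in T}\tfrac{c_i}{w_i}\Bigr)w(T)=(1-p(T))\,w(T),
\]
where $p_i:=c_i/w_i$ is a fixed price per agent and $p(T):=\sum_{i\in T}p_i$. The required payment of an agent therefore no longer depends on which other agents are chosen. The problem becomes
\[
\max_{T\in\mathcal{I}}\ (1-p(T))\,w(T).
\]
We may discard agents with $p_i\geq 1$, since they can never appear in a set with positive utility. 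All weights, and hence all $p_i$, are obtained from $n$ singleton queries, and independence is tested with one value query as noted above.

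The plan is to guess the budget level of the payment share and then solve a Budgeted Matroid Independent Set (BMI) instance. Fix the optimal independent set $T^\star$ and let $P^\star:=p(T^\star)<1$. For each guessed budget $B$, we run the EPTAS of~\cite{doronarad2023eptas} for BMI with profits $w_i$, costs $p_i$, and budget $B$, at accuracy $\delta$, and obtain $T_B\in\mathcal{I}$ with $p(T_B)\le B$ and $w(T_B)\ge(1-\delta)\max\{w(T):T\in\mathcal{I},\,p(T)\le B\}$. We output the best $T_B$ under $g$. Each such run costs $h(1/\delta)\,n^{O(1)}$. If the number of guesses is polynomial in $n$ and $1/\eps$, or depends only on $\eps$, the total running time stays of the form $h'(1/\eps)\,n^{O(1)}$.

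The main obstacle is choosing the grid of budgets $B$. The loss $1-p(T)$ is \emph{not} scale-invariant: if $P^\star$ is close to $1$, a small additive overshoot in budget destroys the utility. So we need a guess $B\le P^\star$ with $1-B\le(1+\eps)(1-P^\star)$. This calls for a geometric grid on the slack $1-B$, namely $1-B\in\{(1+\eps)^{-k}\}$. That grid has length unbounded in $n$ unless the slack $1-P^\star$ is bounded below.

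To bound the slack, I would first argue that $1-P^\star$ is not tiny. Take $j$ to be any single agent maximizing $(1-p_j)w_j$. Then $\mathrm{OPT}\ge(1-p_j)w_j$. On the other hand, $\mathrm{OPT}=(1-P^\star)w(T^\star)\le(1-P^\star)\cdot n\max_i w_i$. Combining these bounds relates the slack to singleton quantities. If the resulting bound still leaves $\log$-many grid points in the input's bit size, this is acceptable, since it is polynomial in the input size times $1/\eps$.

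Then, for the guess $B$ with $B\le P^\star$ and $1-B\le(1+\eps)(1-P^\star)$, the set $T^\star$ is feasible. We therefore get
\[
g(T_B)\ge(1-B)(1-\delta)\,w(T^\star)\ge\frac{1-\delta}{1+\eps}\,\mathrm{OPT}.
\]
Choosing $\delta=\eps$ yields a $(1-O(\eps))$-approximation.

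Alternatively, one can avoid the grid entirely by enumerating the candidate budgets $B=p(T)$ over thresholds. Still, the geometric slack grid is the step I would attempt first, and verifying its polynomial length is the step I expect to require the most care.
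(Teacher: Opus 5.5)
Your overall plan matches the paper's. You pass to independent sets via \Cref{lem:wmrf-independent-optimum}, use the fact that each agent's share then becomes the fixed price $c_i/w_i$, and call the BMIS EPTAS on a geometric grid of residual shares.

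\textbf{A sign error in the choice of grid point.} If $B\le P^\star$, then $T^\star$ violates the budget (unless $B=P^\star$). So your claim that $T^\star$ is feasible is false, and $w(T_B)\ge(1-\delta)w(T^\star)$ does not follow. You need the grid point on the other side: $B\ge P^\star$ with $1-B\ge(1-P^\star)/(1+\eps)$, i.e.\ the largest grid slack not exceeding $1-P^\star$. This is the paper's choice $(1-\theta)\rho(S^\star)\le\rho_k\le\rho(S^\star)$. With it, $T^\star$ is feasible and your chain $g(T_B)\ge(1-B)(1-\delta)w(T^\star)\ge\frac{1-\delta}{1+\eps}\mathrm{OPT}$ goes through. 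Also restrict each call to agents with $p_i\le B$, as \Cref{thm:budgeted-matroid-eptas} requires.

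\textbf{The grid length.} Your singleton comparison does close. Let $I:=\max_j(w_j-c_j)>0$; since $\max_i w_i\le 1$, you get $1-P^\star\ge I/n$. So $O(\eps^{-1}\log(n/I))$ grid points suffice. Because $\log(1/I)$ is bounded by the bit length of the numbers, the total time is $\eps^{-1}h(1/\eps)\,\mathrm{poly}(L)$, which is of EPTAS form.

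This route is weaker than the paper's. It is only weakly polynomial and relies on the oracle's answers having polynomial bit size. The paper instead proves \Cref{lem:wmrf-residual}. Comparing $S^\star$ with each $S^\star\setminus\{i\}$ gives $t_i\le\rho(S^\star)w_i/(w(S^\star)-w_i)$, so either some singleton is already $(1-\theta)$-optimal or $\rho(S^\star)\ge\theta/(1+\theta)$. Adding all singletons to the candidate pool then lets the grid stop at $\rho_{\min}=\theta/(1+\theta)$. The result is $K=O(\eps^{-1}\log(1/\eps))$ BMIS calls, independent of $n$ and of the numerical data.

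Your version avoids that extra lemma. The paper's version buys a grid whose size depends on $\eps$ alone.
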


Our algorithm iteratively solves the problem on a geometric grid of upper bounds on the total payment made to the agents. For any such upper bound, the problem reduces to an instance of Budgeted Matroid Independent Set (BMIS), which admits an EPTAS~\cite{doronarad2023eptas}. The key insight relates to bounding the size of the geometric grid, to ensure that we only solve a polynomial number of BMIS instances. To this end, we show
that either a singleton is already nearly optimal, or the optimal residual
share is bounded away from zero, ensuring that a polynomially-sized grid contains an
appropriate estimate.

Before proceeding, we introduce useful notations. For an agent $i\in A$, we denote by
$t_i:=\frac{c_i}{w_i}$ the required payment to incentivize $i$ to exert effort. Similarly, for a team $S\subseteq A$, we write $t(S)=\sum_{i\in S} t_i$ for the total payment required to incentivize $S$, and $\rho(S)=1-t(S)$ for the resulting principal's share of the reward. Thus, for every independent set $S\in\mathcal{I}$, $g(S) = \rho(S) w(S)$.

We next introduce the Budgeted Matroid Independent Set (BMIS) problem.

\begin{definition}[Budgeted Matroid Independent Set (BMIS)]
An instance of BMIS consists of a matroid $\mathcal{M}=(A,\mathcal{I})$ accessed
through an independence oracle, nonnegative profits $\{w_i\}_{i\in A}$,
nonnegative sizes $\{t_i\}_{i\in A}$, and a budget $B>0$. The objective is
  \[
      \max
      \left\{
          w(S):
          S\in\mathcal{I},\
          t(S)\leq B
      \right\}.
  \]
\end{definition}

\noindent We use the following known result.

\begin{theorem}[Theorem~1.1 in \cite{doronarad2023eptas}]

\label{thm:budgeted-matroid-eptas}
For every $\delta\in(0,1/2)$ and budget $B>0$, provided that
$t_i\in[0,B]$ for every item $i$, there is an algorithm that returns an independent set
$S_B$ satisfying $t(S_B)\leq B$ and 
    $w(S_B)
    \geq
    (1-\delta)
    \max_{S\in\mathcal{I}, t(S)\leq B}w(S)$. The running time of the algorithm is $h(1/\delta)\cdot L^{O(1)}$, 
where $L$ denotes the input size and $h$ is a computable function.
\end{theorem}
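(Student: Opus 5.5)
Since \Cref{thm:budgeted-matroid-eptas} is an external result (Theorem~1.1 of~\cite{doronarad2023eptas}), the paper uses it as a black box. The following is how I would reconstruct its proof. The overall plan is the classical ``guess the large items, round an LP for the small items'' scheme for budgeted matroid problems. The enumeration of large items must be restricted to a \emph{representative set} whose size depends only on $\delta$, so that the $n^{O(1/\delta)}$ guessing of a plain PTAS becomes $h(1/\delta)$.

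\textbf{Preprocessing.} First obtain a constant-factor estimate $\tilde W$ of $\mathrm{OPT}_B:=\max\{w(S):S\in\mathcal I,\ t(S)\le B\}$. One way is to take the better of the best single item and a rounded LP solution; this works because $t_i\le B$ for all $i$. Then call an item \emph{profitable} if $w_i>\delta\tilde W$ and \emph{light} otherwise. Any feasible solution contains at most $O(1/\delta)$ profitable items. Round profits of profitable items down to powers of $(1+\delta)$, which yields $O(\delta^{-1}\log(1/\delta))$ profit classes $P_1,\dots,P_k$ at a loss of a $(1-\delta)$ factor.

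\textbf{Representative sets.} For each class $P_j$, build a small set $R_j\subseteq P_j$ by a greedy procedure. Scan the items of $P_j$ in non-decreasing order of size $t_i$, and add an item whenever it is independent of the items already kept, in a suitably contracted or truncated matroid. Then repeat this $O(1/\delta)$ times on the remaining items so that enough ``backup'' copies exist. The structural claim to prove is as follows. For an optimal solution $S^\ast$, there is a solution $S'$ such that:
\begin{itemize}
\item $S'$ agrees with $S^\ast$ on light items;
\item every profitable item of $S'$ lies in $R:=\bigcup_j R_j$;
\item $S'$ is independent, satisfies $t(S')\le t(S^\ast)\le B$, and has $w(S')\ge(1-\delta)w(S^\ast)$.
\end{itemize}
The argument is an iterated exchange. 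Each profitable $x\in S^\ast\setminus R$ in class $P_j$ is swapped, via the matroid exchange property, for some $y\in R_j$ with $t_y\le t_x$ and the same rounded profit, while keeping $(S^\ast-x)+y$ independent. The multiple greedy rounds are what guarantee that an unused candidate $y$ exists even after earlier swaps have consumed some representatives. Since $|R|\le g(1/\delta)$, we can then enumerate all subsets $L\subseteq R$ with $|L|=O(1/\delta)$ in $h(1/\delta)$ time.

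\textbf{Completion and rounding.} For each guess $L$, contract $L$ in $\mathcal M$ and restrict attention to light items with residual budget $B-t(L)$. Solve the LP $\max\{w(x): x\in P_{\mathcal M/L},\ \sum t_ix_i\le B-t(L)\}$ over the matroid polytope, using the independence oracle via the ellipsoid method or a Lagrangian/greedy approach. A vertex of a matroid polytope intersected with one extra hyperplane lies on an edge of the polytope. It therefore differs from an integral independent set in at most two coordinates. Dropping these at most two fractional light items costs at most $2\delta\tilde W=O(\delta)\mathrm{OPT}_B$ and keeps the solution feasible. Output the best $L\cup(\text{rounded LP set})$ over all guesses. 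Rescaling $\delta$ by a constant gives the $(1-\delta)$ guarantee, and the running time is $h(1/\delta)\cdot L^{O(1)}$.

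\textbf{Main obstacle.} I expect the hardest step to be the representative-set exchange lemma. Swaps must be performed simultaneously for up to $O(1/\delta)$ profitable items, and each must preserve independence \emph{together with the untouched light part of $S^\ast$}. The natural first attempt is to define $R_j$ as a min-size basis of $P_j$ in the matroid obtained by truncating to rank $O(1/\delta)$. Then apply the symmetric exchange property one item at a time, arguing by induction that the greedy order always provides a cheaper replacement that has not yet been used. If truncation alone fails, because the light items constrain the exchange, I would fall back on the more robust ``$O(1/\delta)$ disjoint greedy bases'' construction described above.
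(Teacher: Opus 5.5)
The paper gives no proof of \Cref{thm:budgeted-matroid-eptas}; it imports the result from \cite{doronarad2023eptas}. Your architecture matches the outline of that work: a representative set built from greedy minimum-size bases, enumeration over its small subsets, and an LP extension whose vertex has at most two fractional light items. Your preprocessing and LP step are fine. The gap is in the structural lemma. You require $S'$ to \emph{agree with $S^\star$ on the light items}, with each swap keeping $(S^\star-x)+y$ independent. This is false for your construction and also for your fallback.

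Here is a counterexample. Take a partition matroid with capacity-one blocks $A_1,\dots,A_n$, where $A_j=\{a_j,\ell_j\}$ for $j<n$ and $A_n=\{a_n\}$. Give every $a_j$ profit $P$ and size $t_{a_j}\in(B/2,B]$, strictly increasing in $j$. Give every $\ell_j$ size $0$ and profit $p:=P/(n-1)$, which is light for large $n$. At most one $a_j$ fits in the budget, so the unique optimum is $S^\star=\{a_n,\ell_1,\dots,\ell_{n-1}\}$, of value $2P$.

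Since $\{a_1,\dots,a_n\}$ is independent, a greedy-by-size scan selects the cheapest $a_j$'s and misses $a_n$ once $n>|R|$. This holds with truncation, with contraction of earlier picks, and with $O(1/\delta)$ repeated rounds. Every independent set containing $\ell_1,\dots,\ell_{n-1}$ excludes all $a_j$ with $j<n$. So every $S'$ as in your lemma has value $P=\mathrm{OPT}/2$.

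The underlying issue is that the light part of the optimum may span all of $R_j$. In that case every exchange into $R_j$ must evict a light item. In the example, $\{a_1,\ell_2,\dots,\ell_{n-1}\}$, of value $2P-p$, is the right witness.

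Allowing evictions does not break your LP step, because the LP is compared against whatever light part survives. It does require a loss accounting that your sketch lacks. Each swap may evict a light item of profit up to $\delta\tilde W$, and there may be $\Theta(1/\delta)$ swaps, so the naive loss is $\Theta(\mathrm{OPT})$.

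One repair is to take $R_j$ to be a basis of minimum total size of $\mathcal M|P_j$ truncated to rank $q=\Theta(\delta^{-2})$. If $R_j+x$ is dependent, some $y$ on the fundamental circuit has $t_y\le t_x$ and $(S-x)+y\in\mathcal I$, so no eviction is needed. Otherwise $|R_j|=q$ and $t_y\le t_x$ for every $y\in R_j$. Let $H$ and $L$ be the current profitable and light parts, and work in $\mathcal M/(H\setminus\{x\})$. There $R_j$ has rank at least $q-|H|$. Brualdi's bijective basis exchange then shows that each such candidate either enters freely or evicts a \emph{distinct} element of $L$. Hence some candidate evicts profit at most $w(L)/(q-|H|)=O(\delta^2)\,\mathrm{OPT}$, and the total over $O(1/\delta)$ swaps is $O(\delta)\,\mathrm{OPT}$.

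Alternatively, guess $k\le\lceil 1/\delta\rceil$ such that the items of $S^\star$ with profit in $(\delta^{k+1}\tilde W,\delta^k\tilde W]$ total at most $\delta\,\mathrm{OPT}$, and discard that band. One eviction per profitable item then costs $O(\delta^{-k})\cdot\delta^{k+1}\tilde W=O(\delta)\,\mathrm{OPT}$. Either way $|R|$ still depends only on $\delta$. The lemma that makes the scheme work is this eviction-controlled exchange, not the one you state.
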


We write
$
    \mathsf{BMIS}(w,t,B,\delta)
$
for the set returned by the algorithm guaranteed in
Theorem~\ref{thm:budgeted-matroid-eptas}, using the independence oracle described above. 

\paragraph{{Bounding the optimal principal's share.}}
The following lemma ensures that the residual share of an optimal solution
is either bounded away from zero or a singleton is already nearly optimal, and in turn allows our polynomial grid size bound.
\begin{lemma}[Singleton or bounded residual]
\label{lem:wmrf-residual}
Let $S^\star\in\arg\max_{S\in\mathcal{I}}g(S)$, and suppose that
$g(S^\star)>0$. For every $\theta\in(0,1)$, at least one of the following
holds:
\begin{enumerate}
    \item There exists $i\in S^\star$ such that $g(\{i\})
        \geq
        (1-\theta)g(S^\star)$.

    \item The residual share satisfies $ \rho(S^\star)
        \geq
        \frac{\theta}{1+\theta}$.
\end{enumerate}
\end{lemma}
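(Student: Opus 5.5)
We argue by contradiction: suppose the second item fails, i.e., $\rho(S^\star)<\frac{\theta}{1+\theta}$, and derive the first item. Since $S^\star\in\mathcal{I}$ and all weights are positive, every subset $T\subseteq S^\star$ is independent, so $g(T)=\rho(T)w(T)$ with $\rho(T)=1-t(T)$. The basic tool is the optimality of $S^\star$ against removing a single agent: for each $i\in S^\star$, $g(S^\star\setminus\{i\})\le g(S^\star)$. This gives
\[
(\rho(S^\star)+t_i)(w(S^\star)-w_i)\le \rho(S^\star)w(S^\star),
\]
which rearranges to $t_i\,(w(S^\star)-w_i)\le \rho(S^\star)w_i$. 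Write $W:=w(S^\star)$ and $\rho:=\rho(S^\star)>0$; the latter holds since $g(S^\star)>0$.

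Next, sum these inequalities over $i\in S^\star$. If $|S^\star|=1$, the first item holds trivially, so assume otherwise. Pick $j\in S^\star$ maximizing $w_j$. For every $i\neq j$ we have $W-w_i\ge w_j$, and hence $t_i\le \rho w_i/w_j$. For $j$ itself, the upper bound is $t_j\le \rho w_j/(W-w_j)$, which is less useful. Instead, I would directly bound $g(\{j\})=(1-t_j)w_j$ from below. From $t(S^\star)=1-\rho$ we get $t_j=1-\rho-\sum_{i\neq j}t_i$, and therefore
\[
1-t_j=\rho+\sum_{i\ne j}t_i\ge\rho .
\]
This alone is too weak, so the plan is to combine it with the bound on $t_j$ from removing $j$. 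Alternatively, compare $g(\{j\})$ to $g(S^\star)=\rho W$ via $w_j$ versus $W$.

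The cleanest route is as follows. Summing over $i\neq j$ gives
\[
1-\rho-t_j=\sum_{i\neq j}t_i\le \rho\,\frac{W-w_j}{w_j},
\]
so $1-t_j\le\rho W/w_j$, which is an upper bound and points in the wrong direction. For a lower bound on $1-t_j$, use $t_j\le \rho w_j/(W-w_j)$, which yields
\[
g(\{j\})\ge w_j\Bigl(1-\frac{\rho w_j}{W-w_j}\Bigr).
\]
Setting $x:=w_j/W\in(0,1]$, we get $g(\{j\})/g(S^\star)\ge \frac{x}{\rho}\bigl(1-\frac{\rho x}{1-x}\bigr)$. The remaining task is to show that this ratio is at least $1-\theta$ whenever $\rho<\theta/(1+\theta)$, possibly using a different choice of $i$ than the max-weight agent.

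The main obstacle is this final optimization. The bound must hold uniformly over $x$, and small $x$ is problematic, since then no single agent carries much weight. To handle small $x$, I expect to need the summed inequality $1-\rho=t(S^\star)\le \rho\sum_i w_i/(W-w_i)\le \rho\,\frac{W}{W-w_j}=\rho/(1-x)$. This gives $1-x\le \rho/(1-\rho)$, i.e., $x\ge \frac{1-2\rho}{1-\rho}$. So a small $\rho$ forces $x$ close to $1$, with $1-x\le \rho/(1-\rho)<\theta$ when $\rho<\theta/(1+\theta)$.

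Substituting this back, I plan to bound $g(\{j\})=(1-t_j)w_j$. Since $1-t_j\ge\rho$ gives only $g(\{j\})\ge\rho w_j=x\,g(S^\star)$, we get $g(\{j\})\ge (1-\theta)g(S^\star)$ provided $x\ge1-\theta$. By the above, $1-x\le\rho/(1-\rho)$, and $\rho<\theta/(1+\theta)$ is exactly equivalent to $\rho/(1-\rho)<\theta$. This closes the argument, and the threshold $\theta/(1+\theta)$ matches precisely, which suggests this is the intended path.
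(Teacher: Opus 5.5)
Your final argument is correct and is essentially the paper's proof run in contrapositive: you use the same deletion-optimality inequality $t_i(W-w_i)\le\rho w_i$, summed over $S^\star$ via $W-w_i\ge W-w_j$, to force $w_j>(1-\theta)W$ whenever $\rho<\theta/(1+\theta)$, and then conclude $g(\{j\})\ge\rho w_j\ge(1-\theta)g(S^\star)$ exactly as in the paper's first case. The exploratory detours (the bound $t_j\le\rho w_j/(W-w_j)$ and the optimization over $x$) are unnecessary and can be dropped, while your separate handling of $|S^\star|=1$ correctly justifies dividing by $W-w_i$.
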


\begin{proof}
Suppose first that some $i\in S^\star$ satisfies
$w_i\geq(1-\theta)w(S^\star)$. Since $t_i\leq t(S^\star)$,
\[
    g(\{i\})
    =
    (1-t_i)w_i
    \geq
    \rho(S^\star)w_i
    \geq
    (1-\theta)g(S^\star).
\]

It remains to consider the case in which $w_i<(1-\theta)w(S^\star)$ for every $i\in S^\star$.
Fix $i\in S^\star$. Since $S^\star\setminus\{i\}$ is independent, optimality of $S^\star$
implies
\[
    \rho(S^\star)w(S^\star)
    \geq
    \bigl(\rho(S^\star)+t_i\bigr)
    \bigl(w(S^\star)-w_i\bigr).
\]
Rearranging and using
$w(S^\star)-w_i>\theta w(S^\star)$ gives $
    t_i
    <
    \frac{\rho(S^\star)w_i}
         {\theta w(S^\star)}$.
Summing over $i\in S^\star$, we obtain
\[
    1-\rho(S^\star)
    =
    t(S^\star)
    <
    \frac{\rho(S^\star)}{\theta},
\]
and therefore $\rho(S^\star)
    \geq
    \frac{\theta}{1+\theta}$, as needed.
\end{proof}
\paragraph{{Putting it all together.}}
We are now ready to prove \Cref{thm:wmrf-eptas}, by presenting \Cref{alg:wmrf-eptas}, our EPTAS algorithm, which 
considers the empty set and every singleton,
and then invokes the budgeted matroid procedure over a geometric grid of
residual shares.

\begin{algorithm}[t]
\DontPrintSemicolon
\caption{EPTAS for the optimal contract under WMRF $f$}
\label{alg:wmrf-eptas}

\KwIn{A WMRF $f:2^A\rightarrow[0,1]$ given by a value oracle,
costs $\{c_i\}_{i\in A}$, and
$\eps\in(0,1)$}

\KwOut{A set of agents $\widehat{S}$ to incentivize}

For every $i\in A$, let $w_i=f(\{i\})$ and discard every agent with
$w_i=0$. For every remaining $i\in A$, let $t_i=c_i/w_i$\;

Let
$\mathcal{C}:=\{\emptyset\}\cup\{\{i\}:i\in A\}$\;

Let
\[
    \theta:=\frac{\eps}{3},
    \qquad
    \rho_{\min}:=\frac{\theta}{1+\theta},
    \qquad
    K:=
    \left\lceil
        \log_{1/(1-\theta)}
        \left(\frac{1}{\rho_{\min}}\right)
    \right\rceil
\]

\For{$k=1,\ldots,K$}{
    Let $\rho_k:=(1-\theta)^k$ and $B_k:=1-\rho_k$\;

    Let $A_k:=\{i\in A:t_i\leq B_k\}$, and let
    $S_k\leftarrow\mathsf{BMIS}(w,t,B_k,\theta)$ on the restriction to
    $A_k$\;

    $\mathcal{C}\leftarrow\mathcal{C}\cup\{S_k\}$\;
}

\Return{
$\displaystyle
\widehat{S}\in
\arg\max_{S\in\mathcal{C}}
\bigl(1-t(S)\bigr)w(S)
$
}\;
\end{algorithm}

\begin{proof}[Proof of Theorem~\ref{thm:wmrf-eptas}]
By Lemma~\ref{lem:wmrf-independent-optimum}, let
$S^\star\in\mathcal{I}$ be an optimal solution. If
$g(S^\star)=0$, then the empty set is optimal, so assume that
$g(S^\star)>0$.

Apply Lemma~\ref{lem:wmrf-residual} with
$\theta=\eps/3$. If its first case holds, the corresponding singleton is
included in $\mathcal{C}$, and therefore
\[
    g(\widehat{S})
    \geq
    (1-\theta)g(S^\star)
    \geq
    (1-\eps)g(S^\star).
\]

Otherwise, if the second case holds, then
    $\rho(S^\star)\geq\rho_{\min}$.
Since $\rho_1=1-\theta$ and the grid decreases to a value at most
$\rho_{\min}$, there exists $k\in\{1,\ldots,K\}$ such that
$(1-\theta)\rho(S^\star)
    \leq
    \rho_k
    \leq
    \rho(S^\star)$.
The corresponding budget satisfies
\[
    B_k
    =
    1-\rho_k
    \geq
    1-\rho(S^\star)
    =
    t(S^\star),
\]
and, since all $t_i$ are nonnegative, $t_i\leq B_k$ for every
$i\in S^\star$. Thus, $S^\star\subseteq A_k$ is feasible for the restricted
budgeted matroid instance. By
Theorem~\ref{thm:budgeted-matroid-eptas},
\[
    w(S_k)\geq(1-\theta)w(S^\star),
    \qquad
    t(S_k)\leq B_k.
\]
The latter inequality implies
$1-t(S_k)\geq\rho_k$, and hence
\[
\begin{aligned}
    g(S_k)
    ~=~
    \bigl(1-t(S_k)\bigr)w(S_k) ~\geq~
    \rho_k(1-\theta)w(S^\star)
    ~\geq~ (1-\theta) \rho(S^\star) (1-\theta) f(S^\star) ~\ge~
    (1-\theta)^2g(S^\star).
\end{aligned}
\]
Since $\theta=\eps/3$, we have $(1-\theta)^2
    =
    1-\frac{2\eps}{3}+\frac{\eps^2}{9}
    \geq
    1-\eps$, and thus
    $g(S_k)\geq(1-\eps)g(S^\star)$.
Since $S_k\in\mathcal{C}$, the same guarantee holds for
$\widehat{S}$.

It remains to analyze the running time. By the definition of $K$,
\[
\begin{aligned}
    K
 =
    \left\lceil
        \frac{\log(1/\rho_{\min})}
             {-\log(1-\theta)}
    \right\rceil
    =
    O\left(
        \frac{1}{\theta}\log\frac{1}{\theta}
    \right)
    =
    O\left(
        \frac{1}{\eps}\log\frac{1}{\eps}
    \right),
\end{aligned}
\]
where we use
$\rho_{\min}=\theta/(1+\theta)$ and
$-\log(1-\theta)=\Theta(\theta)$.

Algorithm~\ref{alg:wmrf-eptas} makes $K$ calls to the Budgeted Matroid
Independent Set EPTAS, each with $\theta=\eps/3$.
Every independence query made by polynomial amount of call to
value queries as shown earlier. The total running time is therefore of the form
$h'(1/\eps)\cdot L^{O(1)}$, proving the theorem.
\end{proof}

\subsection{No FPTAS for WMRF Rewards}
\label{subsec:wmrf-no-fptas}

Theorem~\ref{thm:wmrf-eptas} establishes an EPTAS for WMRF rewards. The following theorem shows that an FPTAS is hopeless.

\begin{theorem}[No FPTAS for WMRF rewards]
\label{thm:wmrf-no-fptas}
The optimal contract problem for weighted matroid rank reward functions
admits no randomized FPTAS,
even in the demand-oracle model.
\end{theorem}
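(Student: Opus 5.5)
We will follow the roadmap from the introduction and reduce from Exact Matroid Basis (EMB). An EMB instance consists of a matroid $\mathcal{M}=(A,\mathcal{I})$ given by an independence oracle, integer weights $a_i\in\{0,\dots,W\}$, and a target $T$. The question is whether some basis $B$ has $a(B)=T$. By \cite{doronarad2026lower}, EMB admits no randomized algorithm whose number of oracle queries is polynomial in $n$ and $W$. Since value and demand oracles are polynomially equivalent for WMRFs, which are GS, it suffices to work with value queries. Given an EMB instance, we build a WMRF contract instance whose value oracle is simulated by independence queries. We then argue that an FPTAS run with $\eps=1/\mathrm{poly}(n,W)$ would decide EMB.

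\textbf{Construction.} We first make bases distinguishable by weight. Each $i$ gets weight $w_i=(a_i+M)/D$, where $M=nW+1$. Then a basis (size $r$) has total weight $(rM+a(B))/D$, and every non-basis independent set has total weight at most $((r-1)M+(r-1)W)/D<rM/D$. So the integer numerator $N(S):=Dw(S)$ equals $rM+T$ exactly for bases of target weight. We choose $D$ so that $f\le1$, e.g.\ $D=n(M+W)$. The rank $r$ is computed greedily with polynomially many independence queries. The value $f(S)$ is the max-weight independent subset of $S$, computed greedily with independence queries, so the value oracle is simulated. Next we set costs with $c_i=\lambda w_i^2$, so that $t_i=\lambda w_i$. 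By \Cref{lem:wmrf-independent-optimum} the optimum is attained on $\mathcal{I}$, where $g(S)=(1-\lambda w(S))w(S)$. This is a concave quadratic in $x=w(S)$ with its peak at $x^\star=1/(2\lambda)$, and we choose $\lambda$ so that $x^\star=(rM+T)/D$.

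\textbf{Gap argument.} For $S\in\mathcal{I}$, $N(S)$ is an integer. If no basis hits the target, then every independent $S$ has $|N(S)-(rM+T)|\ge1$. The quadratic then gives
\[
g(S)\le g^\star-\lambda D^{-2},\qquad g^\star=\frac{x^\star}{2},
\]
while a yes-instance attains $g^\star$. Using $x^\star\ge 1/\mathrm{poly}$ and $\lambda=D/(2(rM+T))$, the relative gap is $\Omega(1/(rM+T)^2)=1/\mathrm{poly}(n,W)$. We then run the FPTAS with $\eps$ below this gap and test whether $g(\widehat S)>g^\star-\lambda D^{-2}$; equivalently, we extract an independent subset via \Cref{lem:wmrf-independent-optimum} and check whether it is a basis of target weight. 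This decides EMB with a number of queries polynomial in $n$, $W$ and the FPTAS time bound. For a randomized FPTAS that succeeds with constant probability, the same reduction gives a randomized pseudo-polynomial EMB algorithm, which contradicts \cite{doronarad2026lower}.

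\textbf{Main obstacle.} The hardest step is getting the precise form of the oracle lower bound right: the model, randomization, and the exact quantity that must be superpolynomial, e.g.\ queries versus $2^{\Omega(n)}$. The instance must also live in the regime where $W$ is polynomial. We also need to double-check two points. First, the WMRF value oracle must be implementable from independence queries alone; the greedy algorithm does this. Second, the reduction must tolerate the $w_i>0$ and $f\le 1$ normalizations. If the lower bound is stated for ``exists a basis with $a(B)=T$'' with $a_i$ possibly having large entries, we would instead pick $\eps$ polynomial in the bit-length-free parameter used there.
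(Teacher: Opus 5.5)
Your proposal is correct and follows essentially the same route as the paper. You offset the weights so that only bases can reach the target numerator, and you take $c_i\propto w_i^2$ so that $g$ is a concave quadratic in $w(S)$ peaked at the target. Integrality of the numerator then gives an inverse-quadratic gap, which an FPTAS detects with one-sided error.

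The differences are cosmetic: the paper uses offset $H=\omega(E)+1$, denominator $2R$ with $R=T+rH$, and unit $\lambda$, so that $\mathrm{OPT}=1/4$, and it removes loops first. Two pieces of bookkeeping need adding. First, decide the case $r=0$ directly, since your $\lambda$ is undefined when $T=0$. Second, fix $\eps$ to half the relative gap, so that an FPTAS guaranteed only in expectation still succeeds with probability at least $1/2$.
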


We reduce from Exact Matroid Basis, which admits no randomized
pseudo-polynomial-time algorithm in the independence-oracle
model~\cite{doronarad2026lower}. We construct an instance whose objective
over independent sets is a quadratic function with a unique maximum at a
prescribed integer weight. An FPTAS would distinguish whether this maximum
is attained, contradicting their lower bound. Value queries to the
constructed WMRF can be answered in polynomial time using the independence
oracle, so the reduction applies in the value-oracle model.

\paragraph{The Exact Matroid Basis problem.}
We first introduce the oracle version of the \emph{Exact Matroid Basis} problem.

\begin{definition}[Exact Matroid Basis]
\label{def:exact-matroid-basis}
An instance of \emph{Exact Matroid Basis} consists of a matroid
$\mathcal{M}=(E,\mathcal{I})$, an integer weight function
$\omega:E\rightarrow\mathbb{N}$, and a target $T\in\mathbb{N}$. For
$S\subseteq E$, let $\omega(S):=\sum_{i\in S}\omega(i)$. The problem asks
whether $\mathcal{M}$ has a basis $B$ satisfying $\omega(B)=T$.

In the oracle version of the problem, the ground set $E$, the weight
function $\omega$, and the target $T$ are given explicitly, whereas the
matroid is accessed through an independence oracle, which determines,
given a set $S\subseteq E$, whether $S\in\mathcal{I}$.
\end{definition}

Following \cite{doronarad2026lower}, a randomized algorithm for this
decision problem is required to return \textsc{Yes} with probability at
least $1/2$ on every yes-instance and to return \textsc{No} with
probability $1$ on every no-instance. The following result is
Theorem~1.3 in \cite{doronarad2026lower}.

\begin{theorem}[Oracle lower bound for Exact Matroid Basis]
\label{thm:oracle-emb}
There is no randomized algorithm for oracle Exact Matroid Basis whose
running time on every instance $(E,\mathcal{I},\omega,T)$ is
\[
    \bigl((|E|+1)(T+2)(\omega(E)+1)\bigr)^{O(1)}.
\]
\end{theorem}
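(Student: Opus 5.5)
The statement just above, Theorem~\ref{thm:oracle-emb}, is imported from \cite{doronarad2026lower}. If I had to establish it myself, I would use an indistinguishability (adversary) argument based on sparse paving matroids. The aim is a \emph{no}-instance $\mathcal{M}_0$ and an exponentially large family of \emph{yes}-instances $\mathcal{M}_X$ that agree with $\mathcal{M}_0$ on every independence query except one. The numerical parameters $|E|$, $T$ and $\omega(E)$ should all be polynomial in $n$, so that a pseudo-polynomial running time means polynomially many queries.

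\emph{Construction.} Take $E=[n]$ with $n$ even, weights $\omega(i)=i$ (all distinct), rank $k=n/2$, and $T$ the median value of $\omega(X)$ over $k$-subsets $X$. Let $\mathcal{F}$ be the family of all $k$-subsets of weight exactly $T$. A standard counting argument gives $|\mathcal{F}|\ge 2^{n}/\mathrm{poly}(n)$, while $T\le n^2$ and $\omega(E)\le n^2$.

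The key observation is that distinct weights force any two members of $\mathcal{F}$ to satisfy $|X\triangle Y|\ge 4$. Indeed, a symmetric difference of size $2$ means $Y=X-x+y$ with $x\neq y$, so $\omega(Y)-\omega(X)=\omega(y)-\omega(x)\neq 0$. Hence $\mathcal{F}$ is a valid family of circuit-hyperplanes for a sparse paving matroid $\mathcal{M}_0$ of rank $k$. Its bases are exactly the $k$-sets not in $\mathcal{F}$. Consequently $\mathcal{M}_0$ has no basis of weight $T$, so it is a \emph{no}-instance. For each $X\in\mathcal{F}$, relaxing $X$ yields a matroid $\mathcal{M}_X$ in which $X$ is a basis, so $\mathcal{M}_X$ is a \emph{yes}-instance. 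Relaxing a circuit-hyperplane preserves the matroid axioms; this is a standard fact I would cite or check directly via sparse paving.

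\emph{Indistinguishability.} The independence oracles of $\mathcal{M}_0$ and $\mathcal{M}_X$ agree on every query except $X$ itself. Sets of size less than $k$ are independent in both matroids, since every circuit of either has size at least $k$. Sets of size greater than $k$ are dependent in both. Among $k$-sets, only $X$ changes status. Run a randomized algorithm on $\mathcal{M}_0$. It must answer \textsc{No} with probability $1$, so every run on $\mathcal{M}_0$ ends in \textsc{No}. On $\mathcal{M}_X$ the algorithm follows the identical execution unless it queries $X$. Therefore it can answer \textsc{Yes} on $\mathcal{M}_X$ only in runs where $X$ is queried. Averaging over a uniformly random $X\in\mathcal{F}$, an algorithm making $q$ queries queries $X$ with probability at most $q/|\mathcal{F}|$. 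This must be at least $1/2$, so $q\ge |\mathcal{F}|/2=2^{n}/\mathrm{poly}(n)$, which contradicts a running time of $\bigl((|E|+1)(T+2)(\omega(E)+1)\bigr)^{O(1)}=\mathrm{poly}(n)$.

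\emph{Main obstacle.} The delicate step is ensuring that \emph{every} weight-$T$ basis can be removed simultaneously while keeping a matroid. Sparse paving requires the removed $k$-sets to be pairwise at distance at least $4$. Equal weights would break this, because swapping two equal-weight elements gives a distance-$2$ pair of the same weight. The choice of distinct, polynomially bounded weights resolves this. The remaining care point is the exponential lower bound on $|\mathcal{F}|$, which I would obtain from the fact that subset sums of $k$ elements of $[n]$ range over an interval of length $O(n^2)$, so some value is hit by at least $\binom{n}{k}/O(n^2)$ subsets, and I would take $T$ to be that value.
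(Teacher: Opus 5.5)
Your proof is correct. The paper gives no argument of its own here: it imports the statement as Theorem~1.3 of \cite{doronarad2026lower}, where it appears within a broader study of matroid problems with a linear constraint. So your proposal is a genuinely different, self-contained route.

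Its content is one observation followed by a needle-in-a-haystack count. The observation is that distinct weights force the weight-$T$ $k$-sets to be pairwise at symmetric difference at least $4$. Hence all of them can be non-bases of a single sparse paving matroid $\mathcal{M}_0$, and restoring any one of them changes the independence oracle at exactly one query.

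The one-sided error model lets you argue directly on the fixed no-instance $\mathcal{M}_0$, without Yao's principle. Moreover, $\sum_{X\in\mathcal{F}}\Pr[\text{the run on }\mathcal{M}_0\text{ queries }X]$ is at most the expected number of queries on $\mathcal{M}_0$. So the bound also holds under an expected-running-time reading of the theorem.

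This is essentially the technique the paper itself uses for ultra rewards. The exchange argument in Lemma~\ref{lem:ultra-single-dent} (when $|Y\setminus X|\geq 2$, at most one exchange can land on the hidden set) mirrors the basis-exchange check for sparse paving matroids. Lemma~\ref{lem:ultra-query-bound} is essentially your indistinguishability count.

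The citation buys brevity and the source's more general results. Your route shows, in a few verifiable lines, that the oracle lower bound behind Theorem~\ref{thm:wmrf-no-fptas} is an elementary hidden-set phenomenon.

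One small cleanup: fix $T$ as a most frequent $k$-subset sum, by pigeonhole over the $k(n-k)+1$ possible sums, as in your last paragraph. The median also works, but only via the symmetry and unimodality of these counts (the Gaussian binomial coefficients), which is more than you need.
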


In particular, oracle Exact Matroid Basis admits no randomized
pseudo-polynomial-time algorithm.

\paragraph{The reduction.}
Consider an instance $(E,\mathcal{I},\omega,T)$ of oracle Exact Matroid
Basis. An element $i\in E$ is a \emph{loop} if
$\{i\}\notin\mathcal{I}$. Loops can be identified using singleton
independence queries and belong to no basis, so deleting them preserves
the answer to the instance.

Let $r$ be the rank of the resulting matroid, which can be computed using
the standard matroid greedy algorithm. If $r=0$, the instance is a
yes-instance if and only if $T=0$, and if $T>\omega(E)$, it is a
no-instance. We therefore assume that $r\geq1$ and $T\leq\omega(E)$.

Set $H:=\omega(E)+1$ and $R:=T+rH$. Every basis $B$ has cardinality $r$,
and hence
\[
    \omega(B)+H|B|=\omega(B)+rH=R
    \quad\Longleftrightarrow\quad
    \omega(B)=T.
\]
On the other hand, every independent non-basis $S$ satisfies
\[
    \omega(S)+H|S|
    \leq \omega(E)+(r-1)H
    =rH-1
    <R.
\]
Thus, an independent set $S$ satisfies $\omega(S)+H|S|=R$ if and only if
it is a basis of weight $T$.

Using the same matroid, construct a WMRF with element weights
\[
    v_i:=\frac{\omega(i)+H}{2R}
\]
and agent costs $c_i:=v_i^2$. That is, define
\[
    f(S):=
    \max_{S'\subseteq S, S'\in\mathcal{I}}
    \sum_{i\in S'}v_i.
\]
Since all loops were removed, $v_i=f(\{i\})>0$ for every $i\in E$.

The reward function is normalized. Indeed, for every independent set
$S\in\mathcal{I}$,
\[
\begin{aligned}
    f(S)
    &=\sum_{i\in S}v_i
     =\frac{\omega(S)+H|S|}{2R}\\
    &\leq\frac{\omega(E)+rH}{2R}
     =\frac{(r+1)H-1}{2(T+rH)}
     <1,
\end{aligned}
\]
and since the value of an arbitrary set is the value of a maximum-weight
independent subset, $f$ takes values in $[0,1]$.

A value query on a set $S\subseteq E$ asks for a maximum-weight
independent subset of $S$ with respect to the weights $(v_i)_{i\in E}$.
It can therefore be answered exactly by sorting the elements of $S$ by
non-increasing weight and applying the weighted matroid greedy algorithm
using the independence oracle. Thus, every value query to the constructed
WMRF can be simulated in polynomial time.

Let
\[
    \mathrm{OPT}:=\max_{S\subseteq E}g(S)
\]
denote the optimal principal's utility in the constructed instance.

\begin{lemma}[Gap of the reduction]
\label{lem:wmrf-emb-gap}
The constructed contract instance satisfies the following:
\begin{enumerate}
    \item If the Exact Matroid Basis instance is a yes-instance, then
    $\mathrm{OPT}=1/4$.
    \item If the Exact Matroid Basis instance is a no-instance, then
    \[
        \mathrm{OPT}\leq\frac14-\frac{1}{4R^2}.
    \]
\end{enumerate}
\end{lemma}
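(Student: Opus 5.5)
By Lemma~\ref{lem:wmrf-independent-optimum}, it suffices to maximize $g$ over independent sets $S\in\mathcal{I}$. On such a set, $f$ is additive and the marginal contribution of each $i\in S$ equals $v_i$. This is because $S\setminus\{i\}$ is independent, so $f(S\setminus\{i\})=\sum_{j\in S\setminus\{i\}}v_j$. The payment share of $i$ is therefore $c_i/v_i=v_i$. Writing $x:=v(S)=\sum_{i\in S}v_i=f(S)$, we obtain
\[
    g(S)=\Bigl(1-\sum_{i\in S}v_i\Bigr)\sum_{i\in S}v_i=(1-x)x .
\]
The principal's utility over independent sets is thus the concave quadratic $x(1-x)$. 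It is maximized uniquely at $x=1/2$, where it equals $1/4$. The empty set gives $0$, which fits the same formula.

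For the yes-case, take a basis $B$ with $\omega(B)=T$. Then $\omega(B)+H|B|=R$, so $v(B)=R/(2R)=1/2$ and $g(B)=1/4$. Since $x(1-x)\le 1/4$ for every independent set, and $\mathrm{OPT}$ is attained on an independent set, we get $\mathrm{OPT}=1/4$.

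For the no-case, I will use integrality. For independent $S$ we have $x=(\omega(S)+H|S|)/(2R)=k/(2R)$, where $k:=\omega(S)+H|S|$ is a nonnegative integer. The discussion preceding the lemma shows that $k=R$ holds exactly when $S$ is a basis of weight $T$, and no such basis exists. Hence $|k-R|\ge1$. Then
\[
    x(1-x)=\frac14-\Bigl(x-\frac12\Bigr)^2=\frac14-\frac{(k-R)^2}{4R^2}\le\frac14-\frac{1}{4R^2}.
\]
Combined with Lemma~\ref{lem:wmrf-independent-optimum}, this bounds $\mathrm{OPT}$.

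The only delicate point is the reduction to independent sets. Lemma~\ref{lem:wmrf-independent-optimum} handles it, including the case where a non-independent set has some zero marginal and thus $g=-\infty$. The rest is the exact computation of marginal contributions on independent sets and the identity $x(1-x)=1/4-(x-1/2)^2$.
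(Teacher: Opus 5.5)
Your proposal is correct and follows essentially the same route as the paper. Both reduce to independent sets via Lemma~\ref{lem:wmrf-independent-optimum}, use that marginals equal $v_i$ to get $g(S)=\frac14-\frac{(\omega(S)+H|S|-R)^2}{4R^2}$ on independent sets, and then conclude both cases from the basis-of-weight-$T$ characterization and integrality of $\omega(S)+H|S|$.
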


\begin{proof}
By \Cref{lem:wmrf-independent-optimum}, every set can be replaced by an
independent subset of weakly greater utility. It therefore suffices to
analyze independent sets.

Fix $S\in\mathcal{I}$. The marginal contribution of every agent
$i\in S$ is $v_i$, and $c_i/v_i=v_i$. Consequently,
\[
\begin{aligned}
    g(S)
    &=\left(1-\sum_{i\in S}v_i\right)\sum_{i\in S}v_i
     =\left(1-\frac{\omega(S)+H|S|}{2R}\right)
       \frac{\omega(S)+H|S|}{2R}\\
    &=\frac14-
      \frac{\bigl(\omega(S)+H|S|-R\bigr)^2}{4R^2}.
\end{aligned}
\]

Suppose first that the Exact Matroid Basis instance is a yes-instance.
Then there exists a basis $B$ satisfying
$\omega(B)+H|B|=R$, so the squared term vanishes and $g(B)=1/4$. Since
the expression above is at most $1/4$ for every independent set, it
follows that $\mathrm{OPT}=1/4$.

Suppose now that the Exact Matroid Basis instance is a no-instance. Then
$\omega(S)+H|S|\neq R$ for every independent set $S$. Since both
quantities are integers,
\[
    \left|\omega(S)+H|S|-R\right|\geq1,
\]
and hence
\[
    g(S)\leq\frac14-\frac{1}{4R^2}.
\]
By \Cref{lem:wmrf-independent-optimum}, the same upper bound holds for
every set, and therefore for $\mathrm{OPT}$.
\end{proof}

We are now ready to prove \Cref{thm:wmrf-no-fptas}.

\begin{proof}[Proof of Theorem~\ref{thm:wmrf-no-fptas}]
Suppose toward contradiction that there exists a randomized FPTAS
$\mathcal{A}$ for the optimal contract problem with WMRF rewards under
value-oracle access.

Given an instance $(E,\mathcal{I},\omega,T)$ of oracle Exact Matroid
Basis, handle the directly decidable cases above and otherwise construct
the corresponding contract instance. Run $\mathcal{A}$ with 
$
    \eps:=\frac{1}{2R^2},
$
and let $\widehat{S}$ be the returned set.

If the Exact Matroid Basis instance is a yes-instance, then the
approximation guarantee in expectation gives
\[
    \mathbb{E}\bigl[g(\widehat{S})\bigr]
    \geq
    (1-\eps)\mathrm{OPT}
    =
    \left(1-\frac{1}{2R^2}\right)\frac14
    =
    \frac14-\frac{1}{8R^2}.
\]
Let
\[
    \tau:=\frac14-\frac{1}{4R^2},
    \qquad
    p:=\Pr\bigl[g(\widehat{S})>\tau\bigr].
\]
Since $\mathrm{OPT}=1/4$ in a yes-instance, every output satisfies
$g(\widehat{S})\leq1/4$. Therefore, 
$
    \mathbb{E}\bigl[g(\widehat{S})\bigr]
    \leq
    p\cdot\frac14+(1-p)\tau=
    \frac14-\frac{1-p}{4R^2}.
$
Combining the two bounds gives $p\geq1/2$.

If the Exact Matroid Basis instance is a no-instance, then every output
satisfies
\[
    g(\widehat{S})
    \leq
    \mathrm{OPT}
    \leq
    \frac14-\frac{1}{4R^2}
    =
    \tau.
\]
The two cases can therefore be distinguished by returning \textsc{Yes} if
and only if
$
    g(\widehat{S})>\tau.
$

The value of $g(\widehat{S})$ can be computed exactly using a value query
for $\widehat{S}$ and one for every
$\widehat{S}\setminus\{i\}$, $i\in\widehat{S}$. Thus, on a no-instance
the resulting algorithm always returns \textsc{No}, while on a
yes-instance it returns \textsc{Yes} with probability at least $1/2$.

The constructed contract instance can be computed in polynomial time.
Moreover,
\[
    \frac{1}{\eps}=2R^2,
    \qquad
    R=T+r\bigl(\omega(E)+1\bigr)
      \leq T+|E|\bigl(\omega(E)+1\bigr).
\]
and every value query made by $\mathcal{A}$ can be simulated using
polynomially many independence queries. Thus, the resulting algorithm for
oracle Exact Matroid Basis has running time
\[
    \bigl((|E|+1)(T+2)(\omega(E)+1)\bigr)^{O(1)},
\]
contradicting \Cref{thm:oracle-emb}.
\end{proof}

\subsection{FPTAS for WMRF Rewards with Partition Matroids}
\label{subsec:wmrf-uniform-partition}
 Recall that a partition matroid is specified by a partition
  $
      A=A_1\mathbin{\dot\cup}\cdots\mathbin{\dot\cup}A_h
  $
  into blocks, together with capacities $r_1,\ldots,r_h$. A set
  $S\subseteq A$ is independent if and only if
  $
      |S\cap A_\ell|\leq r_\ell
      $ for every $\ell\in[h].
  $
\begin{theorem}[FPTAS for partition matroids]
\label{thm:wmrf-uniform-partition-fptas}
The optimal contract problem for WMRF rewards 
corresponding to partition matroids admits an FPTAS, under value-oracle access.
\end{theorem}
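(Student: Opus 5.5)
By \Cref{lem:wmrf-independent-optimum}, it suffices to optimize over independent sets. On such sets $g(S)=(1-t(S))\,w(S)$ with $t_i=c_i/w_i$. For a partition matroid, $S$ is independent exactly when it picks at most $r_\ell$ agents from each block $A_\ell$. The problem therefore becomes: maximize $(1-t(S))w(S)$ subject to per-block cardinality constraints. This is a two-criteria knapsack over a direct sum of simple constraints, and I would solve it with a dynamic program over blocks, rounding one of the two quantities, in the style of the additive FPTAS of \cite{DEFK23}.

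\textbf{Recovering the structure.} First query all singletons to obtain the $w_i$, and discard zero-weight agents. The partition itself is not given, but it can be recovered with value queries. For positive weights, $\{i,j\}$ is dependent iff $f(\{i,j\})<w_i+w_j$; in a partition matroid this happens iff $i,j$ lie in a common block with capacity $1$. Parallel classes alone do not pin down larger capacities. However, $S$ is independent iff $f(S)=w(S)$, so I would recover the blocks and capacities greedily by testing independence of growing sets. Alternatively, and more robustly, the DP below can use only the independence oracle plus the observation that each block contributes an "at most $r_\ell$ of $A_\ell$" constraint, but the cleanest route is to recover the partition. Loops (capacity $0$) were already removed.

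\textbf{Guessing and rounding.} I would use \Cref{lem:wmrf-residual} as in the EPTAS: either a singleton is $(1-\theta)$-optimal, or $\rho(S^\star)\ge\theta/(1+\theta)$. In the second case, guess the largest weight $w_{\max}$ in $S^\star$. This is $n$ guesses; restrict to agents with $w_i\le w_{\max}$ and force that agent in. Then $w(S^\star)\in[w_{\max},n\,w_{\max}]$. Round each weight down to a multiple of $\delta=\theta w_{\max}/n$, giving integers $\tilde w_i\le n^2/\theta$. Within each block, for every $k\le r_\ell$ and every rounded weight value $W$, compute the minimum payment $t$ of a $k$-subset of $A_\ell$ with rounded weight $W$; this is a knapsack-type DP inside the block. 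Then combine blocks by a min-payment DP indexed by total rounded weight. Since capacities only constrain blocks separately, the combination step is an unconstrained merge over total weight $W\le n^3/\theta$, which takes polynomial time. Finally, output the best $W$ by $(1-t_{\min}(W))\,W\delta$ and compare against the singletons.

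\textbf{Error bound.} Rounding loses at most $n\delta=\theta w_{\max}\le\theta w(S^\star)$ in weight. The payment is computed exactly, and we take the minimum payment at the rounded weight of $S^\star$, so the chosen set $S$ satisfies $t(S)\le t(S^\star)$ and $w(S)\ge w(S^\star)-\theta w(S^\star)$. Hence $g(S)\ge(1-\theta)g(S^\star)$, and the running time is polynomial in $n$ and $1/\eps$.

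The main obstacle is the oracle step: the FPTAS must know the blocks and capacities from value queries alone. I expect this to go by building a basis greedily and, for each element outside it, identifying its fundamental circuit. In a partition matroid that circuit consists of the basis elements in the same block, which reveals both the block and its capacity.
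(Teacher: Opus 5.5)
Your proposal is correct and follows essentially the paper's route: recover an equivalent partition from a basis $B$ via the fundamental-circuit sets $\{j\in B:(B\setminus\{j\})\cup\{i\}\in\mathcal I\}$ for $i\notin B$, then guess the largest weight in $S^\star$, round weights down, and run a min-payment DP over blocks indexed by rounded total weight. Two small remarks: you should state explicitly that basis elements lying in no such set are coloops and go into unconstrained singleton blocks, and the appeal to \Cref{lem:wmrf-residual} and the forcing of the heaviest agent are unnecessary, since $t(T)\le t(S^\star)$ together with $w(T)\ge w(S^\star)-\eps\max_{i\in S^\star}w_i\ge(1-\eps)w(S^\star)$ already gives $g(T)\ge(1-\eps)g(S^\star)$.
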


\begin{proof}[Proof of \Cref{thm:wmrf-uniform-partition-fptas}]
We first recover an equivalent partition representation. With independence oracles we compute a basis $B$. For every
$i\notin B$, let
\[
    X_i
    :=
    \left\{
        j\in B:
        (B\setminus\{j\})\cup\{i\}\in\mathcal I
    \right\}.
\]
If $i$ belongs to a block $A_\ell$, then $X_i=B\cap A_\ell$. Hence, for
each distinct set $X$ among the sets $X_i$, form the block
\[
    X\cup\{i\in A\setminus B:X_i=X\}
\]
with capacity $|X|$. Place every basis element that occurs in none of the
sets $X_i$ in a singleton block of capacity one. Indeed, every block
containing a nonbasis element is recovered in this way, whereas a block
containing only basis elements has capacity at least its size and therefore imposes no
restriction. Thus, the resulting partition matroid has exactly the same
independent sets. The construction uses $O(n^2)$ independence queries, and
hence $O(n^2)$ value queries.

Write the recovered representation as
\[
    A=A_1\mathbin{\dot\cup}\cdots\mathbin{\dot\cup}A_h,
    \qquad
    |S\cap A_\ell|\leq r_\ell \quad(\ell\in[h]).
\]
We may assume that $r_\ell\leq |A_\ell|$ and write
$r:=\sum_{\ell=1}^h r_\ell$ for the rank. The case $r=0$ is immediate, so
assume that $r\geq1$.

Guess the largest weight in an optimal solution. For each
$b\in\{w_i:i\in A\}$, discard agents with $w_i>b$ and set
\[
    \overline w_i:=
    \left\lfloor\frac{r w_i}{\eps b}\right\rfloor,
    \qquad
    \overline w(S):=\sum_{i\in S}\overline w_i.
\]
Every remaining agent has rounded weight at most
$\lfloor r/\eps\rfloor$, and every independent set has at most $r$ agents.
Thus, only rounded weights
\[
    x\in\left\{0,\ldots,r\left\lfloor\frac r\eps\right\rfloor\right\}
\]
need to be considered.

Fix a guess $b$, and list the remaining agents of block $A_\ell$ as
$i_{\ell,1},\ldots,i_{\ell,m_\ell}$. We process the blocks in order. Let
$D_{\ell,j}(k,x)$ be the minimum value of $t(S)$ after all earlier blocks and
the first $j$ agents of $A_\ell$ have been processed, where $k$ agents have
been selected from the current block and $\overline w(S)=x$. Here
$0\leq k\leq r_\ell$, and $x$ ranges over the values specified above.

Initially, $D_{1,0}(0,0)=0$, and all other entries of $D_{1,0}$ are
$+\infty$. When $\ell>1$, initialize the next block by
\[
    D_{\ell,0}(0,x)
    =
    \min_{0\leq k'\leq r_{\ell-1}}
        D_{\ell-1,m_{\ell-1}}(k',x),
    \qquad
    D_{\ell,0}(k,x)=+\infty \quad(k>0).
\]
For $j=1,\ldots,m_\ell$, letting $i=i_{\ell,j}$, use
\[
    D_{\ell,j}(k,x)
    =
    \min\left\{
        D_{\ell,j-1}(k,x),
        D_{\ell,j-1}(k-1,x-\overline w_i)+t_i
    \right\},
\]
where entries with indices outside their ranges have value $+\infty$.
By induction over the processed agents and blocks, for every $x$,
\[
    \min_{0\leq k\leq r_h}D_{h,m_h}(k,x)
\]
is exactly the minimum value of $t(S)$ over all independent subsets $S$ of
the agents retained for the current guess that satisfy $\overline w(S)=x$.
A corresponding set is recovered using back pointers. The algorithm runs
this dynamic program for every guess $b$ and returns the recovered set of
maximum true utility, also considering the empty set.

By \Cref{lem:wmrf-independent-optimum}, fix an optimal independent set
$S^\star$. If $g(S^\star)=0$, the empty set is optimal. Otherwise, consider
the guess $b=\max_{i\in S^\star}w_i$. No element of $S^\star$ is discarded.
For $x=\overline w(S^\star)$, let $T$ be the set recovered from the
corresponding final entry. The definition of the table gives
$t(T)\leq t(S^\star)$, and therefore $\rho(T)\geq\rho(S^\star)$. Moreover,
the definition of $\overline w_i$ gives, for every retained agent $i$,
\[
    \frac{\eps b}{r}\overline w_i\leq w_i,
    \qquad
    \frac{\eps b}{r}\overline w_i
    \geq w_i-\frac{\eps b}{r}.
\]
Consequently, 
\[
    w(T)
    \geq \frac{\eps b}{r}\overline w(T)
     = \frac{\eps b}{r}\overline w(S^\star)\geq w(S^\star)-|S^\star|\frac{\eps b}{r}
     \geq w(S^\star)-\eps b
     \geq(1-\eps)w(S^\star).
\]
Here the last inequality uses $b\leq w(S^\star)$.
It follows that
\[
    g(T)=\rho(T)w(T)
    \geq(1-\eps)\rho(S^\star)w(S^\star)
    =(1-\eps)g(S^\star).
\]
There are $O(r^2/\eps)$ possible values of $x$. For a fixed guess, the
number of table entries is
\[
    O\left(
        \frac{r^2}{\eps}
        \left(r+\sum_{\ell=1}^h m_\ell r_\ell\right)
    \right)
    =
    O\left(\frac{nr^3}{\eps}\right),
\]
and the transitions and block initializations fit within the same bound.
Since there are at most $n$ guesses, the total running time is
$O(n^2r^3/\eps)$, including backtracking and evaluating the recovered sets.
\end{proof}

%% file: ultra.tex
\section{Ultra Rewards}
\label{sec:ultra}
In this section we show that the optimal contract problem for ultra rewards admits no subexponential approximation using polynomially many value queries.

\begin{theorem}[No subexponential approximation for ultra rewards]
\label{thm:ultra-hardness}
For ultra reward functions, no randomized algorithm that makes a
polynomial number of value and demand queries can achieve a
$2^{o(n)}$-approximation in expectation to the optimal contract problem.
\end{theorem}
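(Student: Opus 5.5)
The plan is an information-theoretic hiding argument. I will construct a family of ultra reward functions, indexed by a hidden set $S^\ast$ of size $k=n/2$, with three properties: all members agree on every set except $S^\ast$; only $S^\ast$ yields non-negligible utility $\mathrm{OPT}=\Omega(1)$; and every other set yields utility at most $2^{-\Omega(n)}\cdot\mathrm{OPT}$. Then I apply Yao's principle with $S^\ast$ drawn uniformly from $\binom{A}{k}$. By the note on oracle models, demand queries for ultra functions can be simulated by polynomially many value queries, so it suffices to handle value queries. A deterministic algorithm making $q=\mathrm{poly}(n)$ value queries sees the same answers as on the "base" function unless it queries $S^\ast$ itself. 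This happens with probability at most $q/\binom{n}{n/2}=2^{-\Omega(n)}$. The same bound applies to the event that its output equals $S^\ast$. Hence the expected utility is at most $2^{-\Omega(n)}\mathrm{OPT}$, which rules out a $2^{o(n)}$-approximation.

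The concrete construction I would try is a symmetric function plus a bump:
\[
f_{S^\ast}(S)=h(|S|)+\delta\cdot\mathbf 1[S=S^\ast].
\]
Every symmetric function is ultra, since $f(X-x+y)=f(X)$ and $f(Y-y+x)=f(Y)$ make the exchange inequality an equality. I would choose $h$ to be a step function with $h(k-1)=h(k)$, for example $h(j)=0$ for $j<k$ and $h(j)=\lambda$ for $j\ge k$, with $\lambda+\delta\le 1$. I would also give every agent cost $c=\Theta(\delta/n)$.

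With this choice, inside $S^\ast$ every marginal equals $\delta$. So $g(S^\ast)=(1-kc/\delta)(\lambda+\delta)$, which is $\Omega(1)$ for a suitable constant ratio $c/\delta$. For any set $S\neq S^\ast$ of size $j\neq k$, the marginals are $h(j)-h(j-1)$, which is zero except at $j=k$. So $S$ contains a positive-cost agent with zero marginal and $g(S)=-\infty$, unless $S=\emptyset$, which has $g=0$. For $|S|=k$ with $S\ne S^\ast$, the marginals are $h(k)-h(k-1)=0$, so again $g(S)=-\infty$.

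Thus $\mathrm{OPT}=g(S^\ast)>0$, while any other output gives utility at most $0$. It may be preferable to replace the zero utilities by tiny positive values so that the approximation ratio is meaningful rather than infinite. I would do this by giving a dummy agent a tiny positive value.

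The main obstacle is verifying that the bump preserves the ultra property. The exchange condition with $X=S^\ast$ and $|Y|\ge k$ requires some $y\in Y\setminus X$ with $f(X-x+y)+f(Y-y+x)\ge f(S^\ast)+f(Y)$. The left-hand side loses $\delta$ unless $Y-y+x=S^\ast$, which is impossible when $|Y|>k$. Therefore the first thing to try is shaping $h$ so that it absorbs the bump. Concretely, I would require $h$ to jump by at least $\delta$ between sizes $k$ and $k+1$, and then check the inequality case by case, splitting on whether $X$, $Y$, $X-x+y$, or $Y-y+x$ equals $S^\ast$ and on their cardinalities relative to $k$.

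If the single-set bump cannot be made ultra, my fallback is a matroid-flavoured bump. For instance, I would take $f(S)=h(|S|)+\delta\cdot\mathbf 1[|S\cap S^\ast|\ge k]$ restricted appropriately, or use a sparse-paving-matroid rank composed with a convex function. Such functions still coincide on all sets except those with large intersection with $S^\ast$, and a hidden set of size $n/2$ keeps this region exponentially unlikely to be hit by polynomially many queries.
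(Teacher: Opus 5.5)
\paragraph{There is a genuine gap: no valid hard family.} The upward bump is never ultra, and no choice of $h$ can fix it. An exchange $X\mapsto X-x+y$, $Y\mapsto Y-y+x$ preserves both cardinalities, so $h(|\cdot|)$ contributes identically to both sides of the exchange inequality, and only the bump term matters. Take $X=S^\ast$, $Y=A\setminus S^\ast$ (so $|X|=|Y|=k$), and any $x\in X$. For every $y\in Y$, the set $X-x+y$ misses $x$, and $Y-y+x$ contains only one element of $S^\ast$. So neither equals $S^\ast$, and the left side exceeds the right side by exactly $\delta$.

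Separately, your step function is not monotone, since $f(S^\ast\cup\{x\})=\lambda<\lambda+\delta$. A jump of $h$ at $k+1$ repairs monotonicity, but not the exchange property.

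The fallback fails in the same way. With $|S^\ast|=k$ one has $\mathbf 1[|S\cap S^\ast|\ge k]=\mathbf 1[S^\ast\subseteq S]$, which violates the same exchange. It also leaks $S^\ast$ entirely through the $n$ queries $f(A\setminus\{i\})$, contradicting your claim that the perturbed region is exponentially unlikely to be hit. The sparse-paving suggestion is left undeveloped.

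\paragraph{The missing idea: perturb \emph{downward}.} The dent $-\varepsilon\mathbf 1[S=T]$ is ultra. If $X$ or $Y$ equals $T$, the left side is $-\varepsilon$, while the two exchanged sets cannot both equal $T$ (exactly one of them contains $x$). If neither equals $T$, then either $|Y\setminus X|=1$ and the exchange just swaps $X$ and $Y$, or there are at least two choices of $y$, at most one of which creates $T$. Equivalently, $\min\{|S|,r\}-\mathbf 1[S=T]$ is the rank function of a sparse paving matroid whose only circuit-hyperplane is $T$; this is presumably what your last suggestion was reaching for.

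A dent makes $T$ itself worthless, so profit must be created at its supersets. The paper takes
$f_T(S)=\varepsilon\min\{|S|,r\}+\frac12\mathbf 1[|S|\ge k]-\varepsilon\mathbf 1[S=T]$,
with $r=\lfloor n/2\rfloor$, $k=r+1$, $\varepsilon=1/(4k)$, and all costs equal to $1/(2k)$. Then every $k$-set has utility exactly $0$ except the sets $T\cup\{x\}$. There the dent raises the marginal of $x$ to $\frac12+\varepsilon$ and yields utility $G=\Theta(n^{-2})>0$. Every other set has utility at most $0$.

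Because the optimal sets are no longer the perturbed set, the hiding step must also change. The paper makes the algorithm query every $\widehat S\setminus\{i\}$ before answering, so a profitable output forces a query to $T$. This happens with probability at most $(q+k)/\binom{n}{r}$. Your use of Yao's principle and your reduction of demand queries to value queries are otherwise fine.
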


We construct a family of ultra reward functions, indexed by a hidden set
$T\subseteq A$ of cardinality $\lfloor n/2\rfloor$. The value of every set other than the hidden set depends only on its cardinality, while the value at the hidden set is slightly lower than its size suggests. We show that this single downward perturbation preserves the ultra structure. We then assign equal costs to the agents such that only sets of the form $T \cup \{x\}$ for some $x\in A\setminus T$ yield a positive principal's utility. Since we only perturb a single value of $f$, value queries reveal negligible information on the hidden set $T$. Thus, since there are exponentially many possible hidden sets, when selecting the hidden set uniformly at random, Yao's principle gives the result.
\subsection{The Hard Family} 
\label{subsec:ultra-hard-family}
Fix $n\geq4$ and let $A=[n]$. Set 
$ 
r:=\left\lfloor\frac{n}{2}\right\rfloor, k:=r+1, \eps:=\frac{1}{4k}. 
$ 
Define the cardinality profile 
\[ 
\phi(j) := \eps\min\{j,r\} + \frac12\mathbf{1}[j\geq k], \qquad j\in\{0,\ldots,n\}. 
\] 
For every set $T\subseteq A$ with $|T|=r$, define the set function $f_T$ by 
\begin{equation} 
\label{eq:ultra-hard-function} f_T(S) := \phi(|S|) - \eps\mathbf{1}[S=T]. 
\end{equation} 
For $i\in S$, write $(f_T)_i(S):=f_T(S)-f_T(S\setminus\{i\})$ for the marginal contribution of $i$ under $f_T$.
Note that $\phi(|S|)$ depends only on the cardinality of $S$. Every such valuation is ultra by~\cite{lehmann2017ultra}. The following lemma handles the single exceptional hidden-set $T$. \begin{lemma}
\label{lem:ultra-single-dent}
For every $T\subseteq A$ and every $\eta\geq0$, the set function $ h_T(S):=-\eta\mathbf{1}[S=T] $ is ultra.
\end{lemma}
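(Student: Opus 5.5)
We verify the ultra exchange condition directly from the definition given in \Cref{sec:model}. Fix $X,Y\subseteq A$ with $|X|\leq|Y|$ and $x\in X\setminus Y$. We must find $y\in Y\setminus X$ with
\[
h_T(X)+h_T(Y)\leq h_T(X-x+y)+h_T(Y-y+x).
\]

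First, observe that $Y\setminus X\neq\emptyset$. Indeed, $|Y|\geq|X|$ and $x\in X\setminus Y$ give $|Y\setminus X|\geq|X\setminus Y|\geq 1$, so some candidate $y$ always exists. Since $h_T$ takes only the values $0$ and $-\eta$, the inequality can fail only if the right-hand side is strictly smaller than the left. The argument is therefore a case analysis on how many of the four sets $X,Y,X-x+y,Y-y+x$ equal $T$.

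\textbf{Case 1: $X=T$ or $Y=T$.} The left-hand side is at most $-\eta$, since $X\neq Y$ (because $x\in X\setminus Y$). It suffices to choose $y$ so that at most one of the two exchanged sets equals $T$. Both cannot equal $T$ simultaneously: $X-x+y=T=Y-y+x$ would force $x\in T$ via the second equality and $x\notin T$ via the first, a contradiction. So the right-hand side is always at least $-\eta$. If both $X$ and $Y$ equal $T$ this is impossible, as already noted. Hence in this case every $y$ works.

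\textbf{Case 2: $X\neq T$ and $Y\neq T$.} The left-hand side is $0$, so we need $y\in Y\setminus X$ with $X-x+y\neq T$ and $Y-y+x\neq T$. By the same argument as above, at most one of these can equal $T$ for a given $y$.

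\begin{itemize}
\item If $|Y\setminus X|\geq 2$, the choice of $y$ determines $X-x+y$ uniquely, so at most one $y$ makes $X-x+y=T$. Similarly, at most one $y$ makes $Y-y+x=T$.
\item If $|Y\setminus X|=1$, then $|X\setminus Y|=1$ as well, so $X\setminus Y=\{x\}$ and $Y\setminus X=\{y\}$. In that case $X-x+y=Y$ and $Y-y+x=X$, neither of which equals $T$.
\end{itemize}

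The main obstacle is this counting step in Case 2. When $|Y\setminus X|=2$, two distinct $y$'s could in principle be blocked, one by each condition, leaving no valid choice. I would handle it by writing the blocking elements explicitly. If $X-x+y_1=T$ and $Y-y_2+x=T$ with $y_1\neq y_2$, then $y_2\in T$ (as $y_2\in Y\setminus X$ lies in $X-x+y_1$ only if $y_2=y_1$) while $y_2\notin Y-y_2+x=T$, a contradiction.

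More carefully: from $X-x+y_1=T$ and $Y-y_2+x=T$ we get $x\in T$ and $x\notin T$. This is immediate, since $x\notin X-x+y_1$ because $y_1\neq x$. So the two conditions can never both be active for any pair $y_1,y_2$, and at most one $y$ is ever blocked. Combined with $|Y\setminus X|\geq 2$ in the non-degenerate subcase, a valid $y$ always exists, completing the proof.
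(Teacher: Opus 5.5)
Your proof is correct and follows the paper's argument essentially step for step: the same split on whether $X$ or $Y$ equals $T$, the same swap observation when $|Y\setminus X|=1$, and, for $|Y\setminus X|\geq 2$, the same observation that $X-x+y=T$ forces $x\notin T$ while $Y-y+x=T$ forces $x\in T$, so at most one $y$ is blocked (the paper phrases this as a case split on $x\in T$ versus $x\notin T$). Delete your first attempted contradiction via $y_2$: its parenthetical actually shows $y_2\notin T$, which is consistent with $y_2\notin Y-y_2+x$, so that sentence proves nothing, and only your subsequent argument via $x$ is valid.
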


\begin{proof}
Fix $X,Y\subseteq A$ with $|X|\leq|Y|$, and let $x\in X\setminus Y$. For every $y\in Y\setminus X$, write 
\[
X_y:=X-x+y, \qquad Y_y:=Y-y+x. 
\]
Since $x\notin X_y$ and $x\in Y_y$, the two exchanged sets cannot both equal $T$. Suppose first that $X=T$ or $Y=T$. Since $X\neq Y$, exactly one of the two sets equals $T$, and hence \[
h_T(X)+h_T(Y)=-\eta. 
\]
For every $y\in Y\setminus X$, at most one of $X_y$ and $Y_y$ equals $T$, so 
\[
h_T(X_y)+h_T(Y_y)\geq-\eta. 
\]
Thus, every choice of $y$ satisfies the exchange inequality. Suppose now that neither $X$ nor $Y$ equals $T$, so that $h_T(X)+h_T(Y)=0$. If $|Y\setminus X|=1$, then $|X\setminus Y|=1$ as well. Hence, for the unique $y\in Y\setminus X$, the exchange swaps the two sets: 
\[
X_y=Y, \qquad Y_y=X. 
\]
The right-hand side is therefore also zero. Finally, suppose that $|Y\setminus X|\geq2$. If $x\in T$, then $X_y\neq T$ for every $y$, and at most one choice of $y$ can make $Y_y=T$. If $x\notin T$, then $Y_y\neq T$ for every $y$, and at most one choice of $y$ can make $X_y=T$. Since at least two choices are available, there exists $y\in Y\setminus X$ for which neither exchanged set equals $T$. For this choice, 
\[ 
h_T(X_y)+h_T(Y_y)=0, 
\]
which completes the proof. 
\end{proof} 
Now, we need to show that $f_T$ is a valid ultra valuation function.

\begin{lemma}
\label{lem:ultra-validity}
For every $T\subseteq A$ with $|T|=r$, the function $f_T$ is a normalized, monotone, ultra reward function. 
\end{lemma}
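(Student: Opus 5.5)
The plan is to verify the three properties directly, writing $f_T=\phi(|\cdot|)+h_T$ with $h_T$ as in Lemma~\ref{lem:ultra-single-dent} for $\eta=\eps$. Ultra functions are not closed under sums in general, so the ultra property cannot simply be inherited from the two summands. Instead, I will use the fact that the cardinality part contributes with equality to every exchange.

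\emph{Normalization and range.} Since $n\geq4$, we have $r\geq2$, so $T\neq\emptyset$ and $f_T(\emptyset)=\phi(0)=0$. For the range, the largest value is at most $\phi(n)=\eps r+\tfrac12<\tfrac14+\tfrac12<1$, using $\eps r=r/(4(r+1))<1/4$. The smallest nonzero-cardinality values are nonnegative, because the only perturbed value is $f_T(T)=\eps(r-1)\geq0$. Hence $f_T$ maps into $[0,1]$.

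\emph{Monotonicity.} It suffices to check that $f_T(S)\leq f_T(S+i)$ for every $S$ and every $i\notin S$, and $\phi$ itself is nondecreasing. If neither $S$ nor $S+i$ equals $T$, the claim is just $\phi(|S|)\leq\phi(|S|+1)$. If $S+i=T$, then $|S|=r-1$ and $S\neq T$, so
\[
f_T(S+i)=\eps r-\eps=\eps(r-1)=\phi(r-1)=f_T(S).
\]
If $S=T$, then $f_T(S+i)=\phi(r+1)=\eps r+\tfrac12>\eps(r-1)=f_T(T)$. In all three cases the inequality holds.

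\emph{Ultra property.} Fix $X,Y$ with $|X|\leq|Y|$ and $x\in X\setminus Y$. Because $|X\setminus Y|\geq1$ and $|X|\leq|Y|$, the set $Y\setminus X$ is nonempty. For every $y\in Y\setminus X$, the exchanged sets satisfy $|X-x+y|=|X|$ and $|Y-y+x|=|Y|$. Therefore
\[
\phi(|X|)+\phi(|Y|)=\phi(|X-x+y|)+\phi(|Y-y+x|)
\]
holds with equality for every choice of $y$. Lemma~\ref{lem:ultra-single-dent} supplies some $y\in Y\setminus X$ with $h_T(X)+h_T(Y)\leq h_T(X-x+y)+h_T(Y-y+x)$. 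Adding the cardinality equality for that same $y$ gives the exchange inequality for $f_T$.

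I expect this last step to be the only conceptual point: it requires noticing that the cardinality profile is exchange-invariant, so that whatever $y$ the single-dent lemma selects also works for the sum. The remaining checks are routine.
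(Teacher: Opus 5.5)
Your proposal is correct and follows essentially the same route as the paper: you add the exchange-invariant cardinality part $\phi(|\cdot|)$ to the single-dent inequality of Lemma~\ref{lem:ultra-single-dent} for the same choice of $y$, and you verify monotonicity by checking the transitions into and out of $T$. Your treatment of normalization (noting $T\neq\emptyset$) and of the range is slightly more explicit than the paper's, but the argument is otherwise the same.
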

\begin{proof}
We first verify the ultra definition. Fix $X,Y\subseteq A$ with $|X|\leq|Y|$ and $x\in X\setminus Y$. By \Cref{lem:ultra-single-dent}, there exists $y\in Y\setminus X$ such that 
\[
-\eps\mathbf{1}[X=T] - \eps\mathbf{1}[Y=T] \leq -\eps\mathbf{1}[X-x+y=T] - \eps\mathbf{1}[Y-y+x=T]. 
\]
An exchange preserves the cardinality of each set, and therefore 
\[
\phi(|X|)+\phi(|Y|) = \phi(|X-x+y|)+\phi(|Y-y+x|). 
\]
Adding these two relations gives 
\[
f_T(X)+f_T(Y) \leq f_T(X-x+y)+f_T(Y-y+x), 
\]
so $f_T$ is ultra. Normalization follows from $f_T(\emptyset)=0$. To prove monotonicity, it suffices to consider sets that differ by one element. If neither set is $T$, the claim follows because $\phi$ is nondecreasing. The only exceptional transitions are those entering and leaving $T$. For every $i\in T$ and $x\notin T$, 
\[
f_T(T\setminus\{i\}) = \eps(r-1) = f_T(T), \qquad (f_T)_x(T\cup\{x\}) = \frac12+\eps>0. 
\]
Thus, $f_T$ is monotone. Finally, normalization and monotonicity imply that $f_T$ is nonnegative. Moreover,
\[
f_T(S) \leq \phi(|S|) \leq \frac12+\eps r = \frac{3k-1}{4k} <1.
\]
Hence, $f_T:2^A\rightarrow[0,1]$. 
\end{proof}

\subsection{The Optimal Sets} 
\label{subsec:ultra-optimal-sets}
Give every agent the same cost 
$
c_i:=\frac{1}{2k}. 
$ 
For every hidden set $T$, let $g_T$ denote the principal's utility function induced by the reward function $f_T$ and the costs $c_i$. 
\begin{lemma}[Characterization of profitable sets]
\label{lem:ultra-profitable-sets}
For every hidden set $T$, a set $S\subseteq A$ has positive principal utility if and only if 
\[ 
S=T\cup\{x\} 
\] 
for some $x\notin T$. Every such set is optimal and has utility 
\begin{equation} 
\label{eq:ultra-optimum} 
G := \frac{3k-1}{4k^2(2k+1)} = \Theta(n^{-2}). \end{equation} 
Thus, $G$ is the optimal principal utility in every constructed instance. 
\end{lemma}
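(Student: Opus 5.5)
The proof is a direct case analysis on the cardinality $j:=|S|$. For each case I would compute the marginals $(f_T)_i(S)$ explicitly and then plug them into $g_T(S)=\bigl(1-\sum_{i\in S}c_i/(f_T)_i(S)\bigr)f_T(S)$. Two facts drive everything. First, $f_T$ agrees with the cardinality profile $\phi$ everywhere except at $T$, where it is lowered by $\eps$. Second, all costs are positive and equal to $1/(2k)$, so a zero marginal forces utility $-\infty$, and a marginal of $\eps=1/(4k)$ forces a per-agent payment share of exactly $2$.

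The cases, in order:
\begin{enumerate}
\item $j=0$. Then $g_T(\emptyset)=0$, which is not positive.
\item $1\leq j\leq r$ and $S\neq T$. Here $f_T(S)=\eps j$. Every $S\setminus\{i\}$ has size $j-1\leq r-1$, so it is not $T$, and each marginal equals $\eps$. Hence $t(S)=2j\geq2$ and $g_T(S)<0$.
\item $S=T$. Each $T\setminus\{i\}$ has value $\eps(r-1)=f_T(T)$, so every marginal is zero and $g_T(T)=-\infty$.
\item $j\geq k+1$. Both $\phi(j)$ and $\phi(j-1)$ equal $\tfrac12+\eps r$, and neither $S$ nor $S\setminus\{i\}$ equals $T$ (their sizes exceed $r$). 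So all marginals vanish and the utility is $-\infty$.
\item $j=k$. This is the only interesting case. Here $f_T(S)=\tfrac12+\eps r$. The set $S\setminus\{i\}$ has size $r$, and it equals $T$ exactly when $S=T\cup\{i\}$.
\end{enumerate}

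In the last case there are two subcases. If $S\not\supseteq T$, every marginal is $\tfrac12$, each payment share is $1/k$, so $t(S)=1$ and the utility is $0$. If $S=T\cup\{x\}$, the $r$ agents of $T$ have marginal $\tfrac12$ and agent $x$ has marginal $\tfrac12+\eps$. This gives
\[
t(S)=\frac{r}{k}+\frac{1/(2k)}{1/2+\eps}=\frac{k-1}{k}+\frac{2}{2k+1},
\qquad\text{hence}\qquad 1-t(S)=\frac{1}{k(2k+1)}.
\]
Multiplying by $f_T(S)=\tfrac12+\tfrac{k-1}{4k}=\tfrac{3k-1}{4k}$ yields exactly $G$. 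Since every other set has nonpositive utility, each $T\cup\{x\}$ is optimal with value $G$. Because $k=\lfloor n/2\rfloor+1$, we get $G=\Theta(n^{-2})$. Incentivizability is implicit in the formula for $g$, since all marginals in these sets are positive.

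There is no real obstacle. The one point needing care is the bookkeeping of when $S\setminus\{i\}$ can coincide with $T$. This happens only for $|S|=k$ and $S\supseteq T$, and it is exactly the asymmetry that lifts $1-t(S)$ from $0$ to a positive value. The arithmetic uses $\eps=1/(4k)$, which gives $1+2\eps=(2k+1)/(2k)$; I would double-check it once.
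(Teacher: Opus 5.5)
Your proposal is correct and follows essentially the same argument as the paper: a case analysis on $|S|$ in which sets of size between $1$ and $r$ other than $T$ have all marginals $\eps$, the set $T$ and sets of size above $k$ have zero marginals, and for $|S|=k$ the only positive-utility sets are $T\cup\{x\}$, where agent $x$'s marginal $\tfrac12+\eps$ gives $1-t(S)=\tfrac{1}{k(2k+1)}$ and hence utility $G$. Your arithmetic, including $1+2\eps=(2k+1)/(2k)$ and $G=\Theta(n^{-2})$, checks out.
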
 

\begin{proof}
We consider the possible cardinalities of $S$. The empty set has utility zero. Suppose first that $1\leq|S|<k$. If $S\neq T$, then every agent $i\in S$ has marginal contribution $\eps$. Consequently, 
\[ 
\sum_{i\in S} \frac{c_i}{(f_T)_i(S)} = |S|\frac{1/(2k)}{1/(4k)} = 2|S| > 1,
\] 
and hence $g_T(S)<0$. If $S=T$, every selected agent has zero marginal contribution, so $g_T(T)=-\infty$. Now suppose that $|S|=k$. Every such set has value $\frac12+\eps r$. Moreover, for every $i\in S$, 
\begin{equation} 
\label{eq:ultra-k-marginals} 
(f_T)_i(S) = \frac12 + \eps\mathbf{1}[S\setminus\{i\}=T]. 
\end{equation}
Indeed, deleting $i$ gives an $r$-set of value $\eps r$, unless the resulting set is $T$, in which case its value is $\eps(r-1)$. If $T\nsubseteq S$, then no deletion of $S$ equals $T$, so every marginal contribution in \eqref{eq:ultra-k-marginals} is $1/2$. Therefore, \[ 
\sum_{i\in S} \frac{c_i}{(f_T)_i(S)} = k\frac{1/(2k)}{1/2} = 1, 
\] 
and $g_T(S)=0$. It remains to consider $S=T\cup\{x\}$ for some $x\notin T$. Every agent in $T$ has marginal contribution $1/2$, whereas the marginal contribution of $x$ is $1/2+\eps$. Hence, 
\[
\begin{aligned}
\sum_{i\in S} \frac{c_i}{(f_T)_i(S)} &= \frac{k-1}{k} + \frac{1/(2k)}{1/2+1/(4k)}\\ &= \frac{k-1}{k} + \frac{2}{2k+1} = 1-\frac{1}{k(2k+1)}.
\end{aligned} 
\]
Since 
\[
f_T(S) = \frac12+\eps r = \frac{3k-1}{4k}, 
\] 
we obtain 
\[ 
\begin{aligned}
g_T(S) &= \frac{1}{k(2k+1)} \cdot \frac{3k-1}{4k}= \frac{3k-1}{4k^2(2k+1)} = G > 0. \end{aligned} 
\] 
Finally, suppose that $|S|>k$. Deleting any element changes neither term in the cardinality profile $\phi$, and neither $S$ nor any of its one-element deletions can equal the $r$-set $T$. Thus, every selected agent has zero marginal contribution, and $g_T(S)=-\infty$. We conclude that the only sets with positive principal utility are the sets $T\cup\{x\}$, and all of them have utility $G$. Every other set has nonpositive utility, so these sets are optimal. \end{proof} 
\subsection{Negligible Probability of Identifying the Hidden Set} \label{subsec:ultra-value-lower-bound}
By the definition of $f_T$, 
\begin{equation} 
\label{eq:ultra-cardinality-answers} f_T(S)=\phi(|S|) \quad\text{for every }S\neq T, \qquad f_T(T)=\phi(r)-\eps. 
\end{equation} 
Thus, unless the queried set is exactly $T$, the answer to a value query is determined solely by the cardinality of the queried set. Let 
$
\mathcal{T} := \{T\subseteq A:|T|=r\}, 
$
and in particular, $|\mathcal{T}|=\binom{n}{r}$. 
\begin{lemma}[Value-query bound] 
\label{lem:ultra-query-bound} Let $\mathcal{A}$ be a deterministic algorithm that makes at most $q$ value queries. Suppose that $T$ is chosen uniformly from $\mathcal{T}$, and let $\widehat{S}_T$ be the set returned by $\mathcal{A}$ on the instance induced by $f_T$. Then 
\[ 
\Pr_T\!\left[g_T(\widehat{S}_T)>0\right] \leq \frac{q+k}{\binom{n}{r}}. 
\]
Consequently, 
\[ 
\mathbb{E}_T\!\left[g_T(\widehat{S}_T)\right] \leq \frac{q+k}{\binom{n}{r}}\,G. 
\] 
\end{lemma}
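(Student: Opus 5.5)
We compare the run of $\mathcal{A}$ on $f_T$ with its run on the "unperturbed" oracle $f_\varnothing(S):=\phi(|S|)$. Let $Q_1,\ldots,Q_q$ be the queries $\mathcal{A}$ makes on $f_\varnothing$; these are fixed, because $\mathcal{A}$ is deterministic and every answer depends only on $|S|$. Let $\widehat S_\varnothing$ be the set it returns there.

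\textbf{Step 1: the two runs coincide unless $T$ is queried.} By \eqref{eq:ultra-cardinality-answers}, $f_T$ and $f_\varnothing$ agree on every set except $T$. An induction over the queries then gives the following: if $T\notin\{Q_1,\ldots,Q_q\}$, the run on $f_T$ asks exactly the same queries, receives the same answers, and returns $\widehat S_T=\widehat S_\varnothing$.

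\textbf{Step 2: count the bad $T$.} Define the bad event $E_T$ as "$T\in\{Q_1,\ldots,Q_q\}$". At most $q$ members of $\mathcal{T}$ make $E_T$ hold.

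Now suppose $E_T$ fails and $g_T(\widehat S_T)>0$. Then $\widehat S_\varnothing=\widehat S_T$, and by \Cref{lem:ultra-profitable-sets} this set equals $T\cup\{x\}$ for some $x\notin T$. So $\widehat S_\varnothing$ is a fixed $k$-set, and $T=\widehat S_\varnothing\setminus\{x\}$ for some $x\in\widehat S_\varnothing$. This leaves at most $k$ possible $T$; if $|\widehat S_\varnothing|\neq k$, there are none.

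Combining the two counts, at most $q+k$ of the $\binom{n}{r}$ equally likely hidden sets give positive utility, which is the probability bound. The only point needing care is that the queries in Step 1 are indeed fixed independently of $T$. This follows from adaptivity plus determinism, with the induction carried up to the first query that equals $T$.

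\textbf{Step 3: the expectation.} By \Cref{lem:ultra-profitable-sets}, $g_T(\widehat S_T)\le G$ always, and it is $\le 0$ outside the positive-utility event. Hence
\[
\mathbb{E}_T\bigl[g_T(\widehat S_T)\bigr]\le \Pr_T\bigl[g_T(\widehat S_T)>0\bigr]\cdot G\le \frac{q+k}{\binom{n}{r}}\,G.
\]
Nonpositive values, including $-\infty$, only decrease the expectation, so the inequality still holds in that case.
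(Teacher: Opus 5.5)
Your proof is correct and follows the paper's argument: both simulate $\mathcal{A}$ against the cardinality-only oracle $S\mapsto\phi(|S|)$ and observe that the run on $f_T$ is unchanged unless $T$ itself is queried. The only cosmetic difference is in where the $k$ comes from. The paper pads the algorithm with the $k$ extra queries $\widehat S\setminus\{i\}$, so that positive utility forces a query to $T$, whereas you count those $k$ candidate hidden sets directly from the fixed output $\widehat S_\varnothing$; both give the same $q+k$ bound.
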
 
\begin{proof}
We may assume that, after computing its output $\widehat{S}$ but before returning it, the algorithm queries every set $\widehat{S}\setminus\{i\}$, $i\in\widehat{S}$, whenever $|\widehat{S}|=k$, and ignores the answers to these additional queries. This modification does not change the returned set and adds at most $k$ value queries. By \Cref{lem:ultra-profitable-sets}, if $g_T(\widehat{S})>0$, then $\widehat{S}=T\cup\{x\}$ for some $x\notin T$. In this case, one of the additional queries is exactly $T$. Thus, positive utility implies that the modified algorithm queries the hidden set. Consider now the execution in which every value query $S$ is answered with $\phi(|S|)$. Since the algorithm is deterministic, these answers fix its entire adaptive execution, including the additional queries described above. Let $\mathcal{Q}\subseteq\mathcal{T}$ be the collection of $r$-sets queried during this execution. The total number of queries is at most $q+k$, and therefore 
\[
|\mathcal{Q}|\leq q+k. 
\]
For every $T\notin\mathcal{Q}$, the execution on the instance induced by $f_T$ is identical to this cardinality-based execution. Indeed, suppose inductively that all previous answers coincide. The algorithm then makes the same next query $S$. Since $T\notin\mathcal{Q}$, we have $S\neq T$, and \eqref{eq:ultra-cardinality-answers} implies that the answer is $f_T(S)=\phi(|S|)$. Thus, the next answers coincide as well. It follows that, for every $T\notin\mathcal{Q}$, the modified algorithm never queries $T$ and therefore cannot return a set of positive utility. Since $T$ is uniform in $\mathcal{T}$, 
\[ 
\Pr_T\!\left[g_T(\widehat{S}_T)>0\right] \leq \frac{|\mathcal{Q}|}{|\mathcal{T}|} \leq \frac{q+k}{\binom{n}{r}}. 
\]
By \Cref{lem:ultra-profitable-sets}, every positive-utility output has utility exactly $G$, whereas every other output has utility at most zero. Therefore,
\[
\begin{aligned} \mathbb{E}_T\!\left[g_T(\widehat{S}_T)\right] &\leq \Pr_T\!\left[g_T(\widehat{S}_T)>0\right]G\\ &\leq \frac{q+k}{\binom{n}{r}}\,G. \end{aligned}
\]
\end{proof} 
\subsection{Putting It Together}
We are now ready to prove \Cref{thm:ultra-hardness}. 
\begin{proof}[Proof of Theorem~\ref{thm:ultra-hardness}] 
Choose $T$ uniformly from $\mathcal{T}$. By \Cref{lem:ultra-query-bound}, every deterministic algorithm making at most $q$ value queries obtains expected utility at most \[ 
\frac{q+k}{\binom{n}{r}}\,G
\]
under this distribution. By Yao's principle, for every randomized algorithm making at most $q$ value queries, there exists a fixed $T\in\mathcal{T}$ for which 
\begin{equation} 
\label{eq:ultra-randomized-bound} 
\frac{\mathbb{E}[g_T(\widehat{S})]}{G} \leq \frac{q+k}{\binom{n}{r}}, \end{equation} where the expectation is over the internal randomness of the algorithm. Indeed, this follows by conditioning on the algorithm's random seed, applying \Cref{lem:ultra-query-bound} to the resulting deterministic algorithm, and then averaging over the seeds and over $T$. Since $r=\lfloor n/2\rfloor$, $k\leq n$, and 
\[ 
\binom{n}{r} = \binom{n}{\lfloor n/2\rfloor} \geq \frac{2^n}{n+1}, 
\] 
for every polynomial $q=q(n)$, 
\[ 
\frac{q+k}{\binom{n}{r}} \leq \frac{(n+1)(q+n)}{2^n} = 2^{-\Omega(n)}. 
\] 
Thus, on some instance, every randomized algorithm using polynomially many value queries obtains only a $2^{-\Omega(n)}$ fraction of the optimal utility in expectation. A $2^{o(n)}$-approximation would have to obtain a $2^{-o(n)}$ fraction of the optimum, contradicting \eqref{eq:ultra-randomized-bound}. 
\end{proof}

%% file: appendix.tex
\appendix
\input{no-ptas-appendix}
\input{wmrf_appendix}
\input{better_const_gs_2}
\input{better_const_SM}

%% file: no-ptas-appendix.tex
\section{Deferred Proofs from \Cref{sec:no-ptas-gs}}
\label{app:oxs-proofs}
\subsection{Utility of Assignment Sets}
\label{app:ass-set-ut}
\assignmentsetutilitylemma*

\begin{proof}[Proof of \Cref{lem:assignment-set-utility}]
Match every selected state agent $u^{\rho(u)}$ to its vertex slot $s_u$,
and every detector $d_C$ to its clause slot $s_C$. This matching has value
$(N+3m)/Z=1$ and is optimal, since the total weight available from all slots
is at most $1$. Hence, $f(S_\rho)=1$.

Removing a state agent $u^{\rho(u)}$ leaves its vertex slot unused, since
the other state agent of $u$ is not selected and all clause slots remain
occupied by their detectors. Thus, its marginal contribution is $1/Z$, and
its required payment share is
\[
    \frac{c_{u^{\rho(u)}}}
         {f_{u^{\rho(u)}}(S_\rho)}
    =
    \frac{\eta/(100mZ)}{1/Z}
    =
    \frac{\eta}{100m}.
\]
Now consider a detector $d_C$. If $C$ is not satisfied by $\rho$, no
selected state agent has a weight-$2/Z$ edge to $s_C$. The detector's
marginal contribution is therefore $3/Z$, and its required payment share is
\[
    \frac{c_{d_C}}
         {f_{d_C}(S_\rho)}
    =
    \frac{\eta/(mZ)}{3/Z}
    =
    \frac{\eta}{3m}.
\]

If $C$ is satisfied, a satisfying state agent can move from its
weight-$1/Z$ vertex edge to the newly available weight-$2/Z$ clause edge.
This recovers exactly $1/Z$, since the vertex slot it leaves cannot be used
by another selected agent. Hence, the detector's marginal contribution is
$2/Z$, and its required payment share is
\[
    \frac{c_{d_C}}
         {f_{d_C}(S_\rho)}
    =
    \frac{\eta/(mZ)}{2/Z}
    =
    \frac{\eta}{2m}.
\]
In particular, every selected agent has positive marginal contribution, so
$S_\rho$ is incentivizable.
Using
$\operatorname{sat}(\rho)=m-\operatorname{cut}(\rho)$ and $N=m/3$, we get
\[
    t(S_\rho)
    =
    \frac{N\eta}{100m}
    +\operatorname{sat}(\rho)\frac{\eta}{2m}
    +\bigl(m-\operatorname{sat}(\rho)\bigr)\frac{\eta}{3m}
    =
    \frac{\eta}{300}
    +\frac{\eta}{2}
    -\frac{\eta}{6m}\operatorname{cut}(\rho).
\]
Substituting $f(S_\rho)=1$, we get
    $g(S_\rho)
    =
    1-t(S_\rho)
    =
    1-\frac{\eta}{300}
    -\frac{\eta}{2}
    +\frac{\eta}{6m}\operatorname{cut}(\rho)$, as required.
\end{proof}

\subsection{Assignment Sets Are Without Loss of Generality}
\label{app:ass-set-wlog}
  \assignmentsetlemma*

  \begin{proof}[Proof of \Cref{lem:assignment-set}] 
Fix an incentivizable set $S$, and let $M$ be a maximum-weight matching
between the agents in $S$ and the slots. Since every agent has strictly
positive cost, every selected agent has positive marginal contribution and
must therefore be matched in $M$.
Let $d$ be the number of selected detector agents. Every selected detector
$d_C$ is matched to its own slot $s_C$: if $s_C$ were empty, adding its
weight-$3/Z$ edge would increase the matching value, while if it were
occupied by a state agent along a weight-$2/Z$ edge, replacing that agent
by $d_C$ would also increase the value.

Consequently, every selected state agent is matched either to a vertex slot
by a weight-$1/Z$ edge or to a clause slot whose detector is absent by a
weight-$2/Z$ edge. Let $k$ be the number of state agents matched to vertex
slots. Since at most $m-d$ state agents can be matched to clause slots,
\[
    f(S)
    \leq
    \frac{3d+k+2(m-d)}{Z}
    =
    1-\frac{(m-d)+(N-k)}{Z}.
\]
Set
\[
    \ell:=(m-d)+(N-k).
\]
Then
\[
    1-f(S)\geq\frac{\ell}{Z}.
\]

Every selected detector has marginal contribution either $2/Z$ or $3/Z$.
Indeed, removing a detector opens one additional clause slot, and since all
selected state agents were already matched, rematching can recover at most
$1/Z$. Similarly, every selected state agent has marginal contribution
either $1/Z$ or $2/Z$. Moreover, a state agent matched to a vertex slot has
marginal exactly $1/Z$.
It follows that
\[
    t(S)
    \leq
    m\frac{\eta}{2m}
    +2N\frac{\eta}{100m}
    =
    \frac{\eta}{2}+\frac{\eta}{150}
    <\eta.
\]
The same bound holds for every assignment set.

We construct an assignment $\rho$ from $M$. If one of $u^0,u^1$ is matched
to $s_u$, we choose its state as $\rho(u)$. Otherwise, if at least one of
$u^0,u^1$ belongs to $S$, we choose one such state; if neither belongs to
$S$, we choose $\rho(u)$ arbitrarily. This assignment is computable in
polynomial time.

We now bound the possible increase from $t(S)$ to $t(S_\rho)$. First,
$S_\rho$ adds the $m-d$ missing detectors, each requiring a share of at most
$\eta/(2m)$.

Second, a detector already selected in $S$ becomes more expensive only if
its marginal changes from $3/Z$ in $S$ to $2/Z$ in $S_\rho$, increasing
its payment by $\eta/(6m)$. If such a detector corresponds to a clause $C$,
then $C$ is satisfied by a state chosen by $\rho$. This state cannot have
been matched to its vertex slot in $M$, since after removing $d_C$ it could
then move to $s_C$, showing that the detector's original marginal was
already $2/Z$. Thus, every such detector can be associated with one of the
$N-k$ vertices whose slot is unoccupied in $M$. Since each chosen state
satisfies at most three clauses, at most $3(N-k)$ detector payments
increase.

Finally, if a vertex slot is occupied in $M$, $\rho$ retains the state agent
matched to it, whose payment is unchanged. For each of the remaining
$N-k$ vertices, the assignment set contains one state agent requiring a
share of at most $\eta/(100m)$. Therefore,
\[
\begin{aligned}
    t(S_\rho)-t(S)
    &\leq
    (m-d)\frac{\eta}{2m}
    +(N-k)\frac{\eta}{2m}
    +(N-k)\frac{\eta}{100m}
    \leq
    \frac{51\eta}{100m}\ell.
\end{aligned}
\]

\noindent Using $f(S_\rho)=1$, $t(S_\rho)<\eta$, and
$Z=N+3m=10m/3$, we conclude that
\[
\begin{aligned}
    g(S_\rho)-g(S)
    &=
    (1-f(S))(1-t(S_\rho))
    -f(S)\bigl(t(S_\rho)-t(S)\bigr)\\
    &\geq
    \frac{\ell}{Z}(1-\eta)
    -\frac{51\eta}{100m}\ell
    =
    \frac{30-81\eta}{100m}\ell
    \geq0,
\end{aligned}
\]
where the final inequality follows from our choice of $\eta$.
\end{proof}

%% file: wmrf_appendix.tex
\section{Deferred Proofs from \Cref{sec:wmrf}}
\label{app:wmrf-independent-proof}
\wmrfindependentoptimumlemma*
\begin{proof}[Proof of \Cref{lem:wmrf-independent-optimum}]
The claim is immediate if $g(S)=-\infty$. Otherwise, let
$T\subseteq S$ be a maximum-weight independent subset of $S$, so that
\[
    f(S)=f(T)=\sum_{i\in T}w_i.
\]
For every $i\in T$, we have
$T\setminus\{i\}\subseteq S\setminus\{i\}$. Since every WMRF is
submodular,
\[
    f_i(S)
    \leq
    f_i(T)
    =
    w_i,
\]
where the equality follows because $T$ is independent. Therefore,
\[
\begin{aligned}
    g(S)
    &=\left(
        1-\sum_{i\in S}
        \frac{c_i}{f_i(S)}
    \right)f(S)\\
    &\leq
    \left(
        1-\sum_{i\in T}
        \frac{c_i}{f_i(S)}
    \right)f(T)\\
    &\leq
    \left(
        1-\sum_{i\in T}\frac{c_i}{w_i}
    \right)
    \sum_{i\in T}w_i
    =
    g(T).
\end{aligned}
\]
Consequently, $    \max_{S\subseteq A}g(S)
    =
    \max_{S\in\mathcal I}g(S)$.
\end{proof}

%% file: better_const_gs_2.tex
\section{A Constant-Factor Approximation for Submodular Rewards}
\label{sec:gs-approximation}

In this section, we improve the approximation guarantee for submodular rewards with value- and demand-oracle access by refining the framework of~\cite{DEFK23}. Our analysis accounts for the payments in an optimal contract to guide the choice of which agents to consider and the prices used in demand queries.

Note that for GS rewards, exact demand queries can be implemented using polynomially many value queries, so the same guarantee holds with value-oracle access alone.
\begin{theorem}[Improved approximation for submodular rewards]
\label{thm:gs-improved-approximation}
For every $\eps\in(0,1)$, the optimal contract problem with a
submodular reward function admits a polynomial-time
$(3.287+\eps)$-approximation under value- and demand-oracle access.
\end{theorem}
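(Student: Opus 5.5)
We follow the square-root pricing framework of~\cite{DEFK23}, but we guess two quantities of the optimal solution $S^\star$: its reward $f(S^\star)$ and its payment share $t(S^\star)=\sum_{i\in S^\star}c_i/f_i(S^\star)$. Both guesses range over a geometric grid of granularity $(1+\eps)$, and we restrict attention to a polynomially bounded range. As in the WMRF analysis (Lemma~\ref{lem:wmrf-residual}), either a singleton is already near-optimal, or $\rho(S^\star)=1-t(S^\star)$ is bounded away from $0$. In addition, $f(S^\star)\ge \max_i f(\{i\})$ up to the singleton case. So $O(\eps^{-2}\log^2 n)$ guesses suffice.

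\textbf{Step 1: candidate set by a demand query.} For a guess $(F,\tau)\approx(f(S^\star),t(S^\star))$, we first discard agents with $c_i/f(\{i\})$ too large relative to $\tau$. Such agents cannot belong to $S^\star$ after a standard pruning, since submodularity gives $f_i(S^\star)\le f(\{i\})$. We then issue a demand query at prices $p_i=\lambda\sqrt{c_iF}$ for a constant $\lambda$ to be tuned. The point of the square-root prices is that, for $i\in S^\star$, Cauchy--Schwarz bounds $\sum_{i\in S^\star}\sqrt{c_iF}$ by $F\sqrt{t(S^\star)}$, via $\sqrt{c_i}=\sqrt{c_i/f_i}\sqrt{f_i}$ and $\sum_i f_i(S^\star)\le f(S^\star)$. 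Hence the demanded set $D$ satisfies
\[
f(D)-\lambda\sum_{i\in D}\sqrt{c_iF}\;\ge\;F\bigl(1-\lambda\sqrt{\tau}\bigr).
\]
This gives a lower bound on $f(D)$ together with an upper bound on its total price.

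\textbf{Step 2: trimming to control marginals.} Following~\cite{DEFK23}, we prune $D$ so that every remaining agent has marginal contribution at least a constant times its price, $f_i(D')\ge \mu\sqrt{c_iF}$. We do this iteratively: drop violators, using submodularity and the demand-optimality of $D$ to argue that little value is lost. Then
\[
\sum_{i\in D'}\frac{c_i}{f_i(D')}\le\frac{1}{\mu}\sum_{i\in D'}\sqrt{c_i/F},
\]
which is bounded in terms of $\tau$ and $\lambda$. If this payment share still exceeds $1/2$, we scale down by splitting $D'$ into pieces. We would take a random or greedy subset retaining a fraction of the value, using submodularity (each piece keeps marginals at least as large). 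This yields a set with share at most a target constant $\beta$.

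\textbf{Step 3: optimizing constants.} The resulting guarantee is of the form $g(\text{output})\ge \Phi(\lambda,\mu,\beta;\tau)\,F(1-\tau)$. We then choose $\lambda,\mu,\beta$ as functions of the guessed $\tau$, rather than as fixed constants as in~\cite{DEFK23}, and take the worst case over $\tau\in[0,1)$. That worst case should be about $1/3.287$. The $\eps$ absorbs the grid rounding. The main obstacle is Step 2 together with this optimization. One must make the pruning and splitting losses tight enough: the numeric constant $3.287$ presumably comes from a one-dimensional optimization over $\tau$. We expect the argument that trimming loses at most a $\mu\lambda$-proportional fraction of $f(D)$ to need a careful charging via demand optimality. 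We would try the charging used in~\cite{DEFK23} first, keeping the $\tau$-dependence explicit.
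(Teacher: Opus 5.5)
Your outline has the right skeleton (square-root prices, one demand query, and the payment bound $\sum_{i\in U}c_i/f_i(U)\le f(U)/x$ that follows from $f_i(U)\ge p_i$). It does not prove the theorem, though. The constant $3.287$ is the whole content of the statement, and you assert it (``should be about $1/3.287$'') rather than derive it.

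The first concrete gap is Step 2. Contrary to what you write, the share of $D'$ is not bounded in terms of $\tau$ and $\lambda$. It is bounded by $f(D')/x$ with $x=\lambda\mu F$, and $f(D)$ can be far larger than $F$. So you must cut the reward down to roughly $x/2$, and a single agent of $D$ with large $f(\{i\})$ can make that impossible without destroying the value. A random subset does not help: $y\mapsto(1-y/x)y$ is concave, so expected utility is not controlled by expected reward. The paper walks a deletion chain from $D$ and keeps the set whose reward is closest to $x/2$, which gives $g(U)\ge x/4-(\max_{i\in D}f(\{i\}))^2/(4x)$. It bounds this error term because only agents with $c_i/f(\{i\})\le 0.8442\,t$ are priced, so each has $f(\{i\})\le I/(1-0.8442\,t)$, where $I$ is the best singleton utility. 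With $M=\max\{I,g(U)\}$ this becomes the self-bounding inequality $M\ge x/4-M^2/(4x(1-0.8442\,t)^2)$, which is what couples the candidate to the singleton fallback. (With an exact demand oracle your pruning is vacuous: demand optimality already gives $f_i(D)\ge p_i$ for all $i\in D$. Pruning is only needed in the value-oracle version.)

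The second gap is where the number comes from. The paper splits $S^\star$ at the cutoff $0.8442\,t$. Because $0.8442\ge 1/2$, at most one optimal agent $j$ is heavy, and it is charged to the singleton via $I\ge(1-t+t_L)f(\{j\})$. The light part $L$ is handled at the largest scale satisfying $f(L)-\sqrt{t_Lf(L)x}\ge x/2$, namely $x^\star=(\sqrt{t_L+2}-\sqrt{t_L})^2f(L)$. The ratio is then the larger of two explicit one-variable suprema, both below $3.287$, and the cutoff $0.8442$ is chosen to balance them.

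Your plan keeps every agent of $S^\star$ eligible, i.e.\ cutoff $t$ itself with no heavy case. Then the same analysis gives the ratio $\sup_{0\le t<1}\bigl(\sqrt{1+4(1-t)^2}+2(1-t)\bigr)/(\sqrt{t+2}-\sqrt t)^2\approx 3.325$, attained near $t\approx 0.31$. Within this analysis the target $x/2$ and the largest reachable scale are already optimal, so retuning $\lambda,\mu,\beta$ as functions of $t$ cannot close the gap. The heavy-agent case is the missing idea.

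Two smaller points:
\begin{itemize}
\item $f(S^\star)\ge\max_i f(\{i\})$ is false in general, since the most valuable agent may be prohibitively costly. The scale range must be anchored to $I$ instead, as in the paper's $I/2<x^\star\le 27nI/2$.
\item Guessing $t$ on a geometric grid needs a lower cutoff for tiny $t$. The paper avoids this by enumerating the $n+1$ prefixes of agents sorted by $c_i/f(\{i\})$, one of which is exactly the light set.
\end{itemize}
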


Similar to the approach of \cite{DEFK23}, we split an optimal set $S^\star$ into \emph{light} agents, who can be incentivized to work alone for a relatively small payment and at most one \emph{heavy} agent, who requires a large payment to be incentivized to work. However, we differ in the threshold which determines which agents are light or heavy.  While \cite{DEFK23} fixed that threshold at $1/2$,  we optimize it as a fixed large fraction of the total payment needed to incentivize $S^\star$.

By considering heavy and light agents separately, the problem decomposes into taking either the best singleton, or approximating well the reward achieved by light agents alone. \cite{DEFK23} achieve this by a demand query at carefully calibrated prices over several ``guesses'' of the reward of $S^\star$. We further optimize these guesses, again by taking into account the total payment needed to incentivize $S^\star$ as well as its reward into our analysis. Notably, while the total payment needed to incentivize $S^\star$ is considered in our analysis, knowing it in advance is not required by our algorithm, which essentially ``guesses'' its possible values.

\paragraph{Light and Heavy Agents.}For every $i\in A$, write $f_i:=f_i(\{i\})=f(\{i\})$ for the singleton reward, and let 
\[
    I:=\max\left\{0,\max_{i\in A}g(\{i\})\right\}
      =\max\left\{0,\max_{i\in A}(f_i-c_i)\right\},
\]
be the best utility obtainable from a singleton or the empty set. By
submodularity, an agent with $f_i=0$ has zero marginal contribution to every
set and may be discarded.
Fix an optimal set $S^\star$, and suppose that $g(S^\star)>0$, since otherwise the empty set is optimal. Let
\[
    t:=\sum_{i\in S^\star}
    \frac{c_i}{f_i(S^\star)}.
\]
Then $g(S^\star)=(1-t)f(S^\star)$ and, in particular, $t<1$.

Let $\tau = 0.8442$ and define
\[
    A_\tau:=\left\{i\in A:\frac{c_i}{f_i}\leq\tau t\right\},
    \qquad L:=S^\star\cap A_\tau,
    \qquad
    t_L:=\sum_{i\in L}
    \frac{c_i}{f_i(S^\star)}.
\]
We call the agents in $A_\tau$ {\em light} and the remaining agents {\em heavy}.

The set $L$ is thus the set of light agents in $S^\star$ and $t_L$ is the total payment share of the light optimal agents in the optimal contract.

\begin{observation}
\label{obs:gs-payment-decomposition}
At most one agent in $S^\star$ lies outside $L$.
\end{observation}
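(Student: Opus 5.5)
We argue by contradiction: we show that every agent of $S^\star$ outside $L$ has payment share strictly larger than $\tau t$ in the optimal contract, so two such agents would together need more than $2\tau t>t$. This uses only submodularity and the fact that $\tau>1/2$.

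\emph{Step 1 (comparison of shares).} Fix $i\in S^\star\setminus L$. Since $S^\star$ has positive utility, $f_i(S^\star)>0$, so $i$'s share $c_i/f_i(S^\star)$ is well defined. By submodularity (applied with $X=\emptyset\subseteq Y=S^\star\setminus\{i\}$), we have $f_i(S^\star)\leq f_i(\{i\})=f_i$. Hence
\[
\frac{c_i}{f_i(S^\star)}\;\geq\;\frac{c_i}{f_i}\;>\;\tau t ,
\]
where the last inequality is exactly the definition of $i\notin A_\tau$. (Discarded agents with $f_i=0$ are not in $S^\star$, so the middle ratio is well defined.) Thus every heavy optimal agent alone takes more than a $\tau$ fraction of the total share $t$.

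\emph{Step 2 (counting).} Suppose two distinct agents $i,j\in S^\star\setminus L$. All shares are nonnegative, so
\[
t\;\geq\;\frac{c_i}{f_i(S^\star)}+\frac{c_j}{f_j(S^\star)}\;>\;2\tau t .
\]
If $t>0$, this gives $2\tau<1$, which contradicts $\tau=0.8442>1/2$. If $t=0$, then Step 1 would require $c_i/f_i>0$ while $c_i/f_i(S^\star)\leq t=0$; these are incompatible. So in that case there is no heavy agent at all.

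The only delicate point is this degenerate case $t=0$ (and zero-cost agents), where strictness matters. It is handled because the heavy condition $c_i/f_i>\tau t$ is strict. The main inequality is otherwise immediate from submodularity.
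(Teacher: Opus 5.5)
Your proof is correct and is essentially the paper's argument: submodularity gives $f_i(S^\star)\le f_i$, so each heavy optimal agent's share exceeds $\tau t$, and two such agents would together exceed $2\tau t\ge t$. Your separate treatment of $t=0$ is harmless but not needed. The inequality $t>2\tau t$ with $t\ge 0$ and $\tau\ge 1/2$ is already contradictory, which is why the paper only requires $\tau\ge 1/2$ (and reuses the argument with $\tau=1/2$ in the value-oracle section).
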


\begin{proof}
  By submodularity, incentivizing an agent as part of $S^\star$ requires at least as much payment as incentivizing him alone. Thus, every agent in $S^\star\setminus L$ requires a payment greater than $\tau t$. Since $\tau\geq 1/2$, two such agents would together require more than the total payment $t$, which is a contradiction.
\end{proof}

\paragraph{$x$-candidate set.} We next construct a candidate set of agents to incentivize and bound its utility in terms of $f(L)$.
For a scale $x>0$, assign each $i\in A_\tau$ the price $p_i:=\sqrt{c_i x}$ and compute a set\footnote{This can be done via a demand query by setting $p_i=2$ for any $i\notin A_\tau$.}
\[
    D(A_\tau,x)\in\arg\max_{S\subseteq A_\tau}
    \left\{f(S)-\sum_{i\in S}p_i\right\}.
\]
Starting from $D(A_\tau,x)$, delete its agents one at a time in an arbitrary order until reaching the empty set. An \emph{$x$-candidate set} is a set in this sequence whose reward is closest to $x/2$:
\begin{equation}
\label{eq:gs-candidate-definition}
    U(A_\tau,x)\in\arg\min_{S\in\text{deletion chain of $D(A_\tau,x)$}}
        \left|f(S)-\frac{x}{2}\right|.
\end{equation}
The following lemma shows that the utility of an $x$-candidate set is at least
$x/4$ minus some loss that depends on $I$. 
To bound this loss, we use the cutoff defining $A_\tau$ to control the largest singleton reward in the demand set.

\begin{lemma}[Utility of the $x$-candidate set]
\label{lem:gs-calibrated-demand}
If $x>0$ satisfies
\begin{equation}
\label{eq:gs-reachability}
    f(L)-\sqrt{t_L f(L)x}\geq\frac{x}{2},
\end{equation}
then
\[
    g(U(A_\tau,x))\geq\frac{x}{4}-\frac{I^2}{4x(1-\tau t)^2}.
\]
\end{lemma}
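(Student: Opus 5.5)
The plan has three steps. First, show that the demand set $D:=D(A_\tau,x)$ has reward at least $x/2$. Second, show that every set on its deletion chain has principal share at least $1-f(S)/x$. Third, use the fact that the chain moves in small steps to find a set whose reward is close to $x/2$. Because the resulting utility is a concave quadratic in $f(S)$, it is maximized at $f(S)=x/2$, where it equals $x/4$.

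\emph{Step 1 (reachability).} I compare $D$ with $L$, which is feasible for the demand query since $L\subseteq A_\tau$. Demand optimality gives $f(D)\geq f(D)-p(D)\geq f(L)-p(L)$. To bound $p(L)=\sqrt{x}\sum_{i\in L}\sqrt{c_i}$, I will write $\sqrt{c_i}=\sqrt{c_i/f_i(S^\star)}\cdot\sqrt{f_i(S^\star)}$ and apply Cauchy--Schwarz. This yields $\sum_{i\in L}\sqrt{c_i}\leq\sqrt{t_L\sum_{i\in L}f_i(S^\star)}$. By submodularity, $\sum_{i\in L}f_i(S^\star)\leq f(L)$; to see this, telescope $f(L)$ over any ordering of $L$ and compare each term with the marginal with respect to the larger set $S^\star$. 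Hence $f(D)\geq f(L)-\sqrt{t_Lf(L)x}\geq x/2$, where the last inequality is the hypothesis \eqref{eq:gs-reachability}.

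\emph{Step 2 (payments along the chain).} For $i\in D$, demand optimality gives $f_i(D)\geq p_i$, since otherwise dropping $i$ would strictly improve the demand objective. Every set $S$ in the deletion chain is a subset of $D$, so submodularity gives $f_i(S)\geq f_i(D)\geq p_i>0$ for all $i\in S$. In particular, $S$ is incentivizable. The required shares then satisfy $c_i/f_i(S)\leq c_i/p_i=\sqrt{c_i/x}=p_i/x$, so $t(S)\leq p(S)/x$. Next I show $p(S)\leq f(S)$. Telescope $f(S)$ along an ordering of $S$; each marginal taken with respect to a subset of $D$ is at least the marginal with respect to $D$, which is at least $p_i$. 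Consequently
\[
g(S)\geq\Bigl(1-\frac{f(S)}{x}\Bigr)f(S)=\frac{x}{4}-\frac{1}{x}\Bigl(f(S)-\frac{x}{2}\Bigr)^2 .
\]

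\emph{Step 3 (granularity).} Along the deletion chain from $D$ to $\emptyset$, $f$ moves monotonically from $f(D)\geq x/2$ down to $0$. Each deletion of an agent $j$ decreases $f$ by at most $f_j(\cdot)\leq f_j$, by submodularity. Let $\delta:=\max_{j\in D}f_j$. Then some chain set has $|f(S)-x/2|\leq\delta/2$, and the $\arg\min$ in \eqref{eq:gs-candidate-definition} selects a set at least this close. Plugging this into the display gives $g(U(A_\tau,x))\geq x/4-\delta^2/(4x)$. It remains to show $\delta\leq I/(1-\tau t)$. Every $j\in D\subseteq A_\tau$ satisfies $c_j\leq\tau t f_j$, so $I\geq f_j-c_j\geq(1-\tau t)f_j$, and $1-\tau t>0$ because $t<1$.

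The step needing the most care is Step 2. Its key ingredient is the demand-set property $f_i(D)\geq p_i$, transferred to all subsets in the chain through submodularity, together with the telescoping argument that gives $p(S)\leq f(S)$. Steps 1 and 3 are then routine: Step 1 is Cauchy--Schwarz plus submodularity, and Step 3 is the granularity bound.
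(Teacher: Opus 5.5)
Your proposal is correct and follows the paper's proof essentially step for step. It uses the Cauchy--Schwarz/telescoping bound on $p(L)$ to get $f(D)\ge x/2$, the demand-optimality bound $f_i(U)\ge f_i(D)\ge p_i$ giving $t(U)\le f(U)/x$, the deletion-chain bound $|f(U)-x/2|\le\frac12\max_{i\in D}f_i$, and the cutoff bound $I\ge(1-\tau t)f_i$. The only nit is that $p_i>0$ holds only for positive-cost agents, but zero-cost agents contribute a zero payment share, so nothing breaks.
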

\begin{proof}
Write $D:=D(A_\tau,x)$.
Since $S^\star$ has a positive utility, every positive-cost agent in
$S^\star$ has a positive marginal contribution. Cauchy--Schwarz and submodularity give
\begin{equation}
\label{eq:sqpr-bound}
    \sum_{i\in L}\sqrt{c_i}
    =\sum_{i\in L}\sqrt{\frac{c_i}{f_i(S^\star)}}\sqrt{f_i(S^\star)}
    \leq\sqrt{\sum_{i\in L}\frac{c_i}{f_i(S^\star)}}
        \sqrt{\sum_{i\in L}f_i(S^\star)}
    =\sqrt{t_L}\sqrt{\sum_{i\in L}f_i(S^\star)}
    \leq\sqrt{t_L f(L)}.
\end{equation}
For the last inequality, order the agents in $L$ arbitrarily. Each
$f_i(S^\star)$ is at most the marginal of $i$ when it is added after its
predecessors in this order, and these latter marginals telescope to $f(L)$.
Since $L\subseteq A_\tau$, demand optimality gives
$f(D)-\sum_{i\in D}p_i\geq f(L)-\sum_{i\in L}p_i$. We get
\[
    f(D)\geq f(D)-\sum_{i\in D}p_i
      \geq f(L)-\sum_{i\in L}p_i = f(L)-\sum_{i\in L}\sqrt{c_i x}
      \geq f(L)-\sqrt{t_L f(L)x},
\]
where the last inequality follows from \Cref{eq:sqpr-bound}. Together with \eqref{eq:gs-reachability}, this gives $f(D)\geq x/2$.
We now bound the total payment required to incentivize $U(A_\tau,x)$. Demand optimality and submodularity give $f_i(U(A_\tau,x)) \ge f_i(D) \ge p_i$ for all $i\in U(A_\tau, x)$. Thus,
\[
\begin{aligned}
\sum_{i\in U(A_\tau,x)} \frac{c_i}{f_i(U(A_\tau,x))} &\le \sum_{i\in U(A_\tau,x)} \frac{c_i}{p_i} = \sum_{i\in U(A_\tau,x)} \frac{1}{x}p_i\\ &\le \frac{1}{x}\sum_{i\in U(A_\tau, x)} f_i(U(A_\tau,x)) \\ &\le \frac{f(U(A_\tau,x))}{x},
\end{aligned}
\]
where the equality follows by substituting $p_i = \sqrt{c_i x}$.

We next note that by definition of $U(A_\tau,x)$, together with subadditivity, it holds that
\[
    \left|f(U(A_\tau,x))-\frac{x}{2}\right|
      \leq\frac12\max_{i\in D}f_i.
\]
In total we get
\[
\begin{split}
    g(U(A_\tau,x))&= \left(1-\sum_{i\in U(A_\tau,x)} \frac{c_i}{f_i(U(A_\tau,x))}\right) f(U(A_\tau,x)) \ge \left(1-\frac{f(U(A_\tau,x))}{x}\right) f(U(A_\tau,x)) \\
    &=\frac{x}{4}-\frac{(f(U(A_\tau,x))-x/2)^2}{x}
    \geq\frac{x}{4}-\frac{1}{4x}\left(\max_{i\in D}f_i\right)^2,
    \end{split}
\]
Every $i\in D$ belongs to $A_\tau$, so the definition of $I$ and the cutoff give
\[
    I\geq f_i-c_i\geq(1-\tau t)f_i.
\]
Since $1-\tau t>0$, we obtain $\max_{i\in D}f_i\leq I/(1-\tau t)$.
Substituting this into the preceding bound yields
\[
    g(U(A_\tau,x))\geq\frac{x}{4}
    -\frac{I^2}{4x(1-\tau t)^2},
\]
proving the claim.
\end{proof}

\paragraph{The approximation guarantee.}
We now use the utility bound in \Cref{lem:gs-calibrated-demand} to obtain
an approximation to $g(S^\star)$.
We do so by choosing a suitable value of $x$. Notably, the choice of $\tau=0.8442$ was used to optimize the approximation ratio with respect to our analysis.

\begin{proposition}[Approximation guarantee]
\label{prop:gs-demand-approximation}
If $f(L)>0$, define
\[
    x^\star:=\lambda(t_L)f(L),
    \qquad
    \text{where }\lambda(r):=(\sqrt{r+2}-\sqrt r)^2.
\]
For every $0<x\le x^\star$,
\[
    \max\{I,g(U(A_\tau,x))\}
    \ge
    \frac{x}{x^\star}\frac{1000}{3287}\,g(S^\star).
\]
If $f(L)=0$, the best singleton is optimal.
\end{proposition}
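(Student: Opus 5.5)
The plan is to reduce the claim to a two-line inequality in a few scalar parameters, using \Cref{lem:gs-calibrated-demand} as the only source of a lower bound on $g(U(A_\tau,x))$.

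\textbf{Reachability for every $x\le x^\star$.} First I will show that every $0<x\le x^\star$ satisfies the reachability condition \eqref{eq:gs-reachability}, so that \Cref{lem:gs-calibrated-demand} applies.
\begin{itemize}
\item Write $x=\mu f(L)$ with $f(L)>0$. Then \eqref{eq:gs-reachability} becomes $\mu/2+\sqrt{t_L\mu}\le 1$.
\item Substituting $s=\sqrt{\mu}$ gives the quadratic condition $s^2/2+\sqrt{t_L}\,s-1\le 0$.
\item Its positive root is $s=\sqrt{t_L+2}-\sqrt{t_L}$. Hence the condition holds exactly when $\mu\le\lambda(t_L)$, i.e., when $x\le x^\star$.
\end{itemize}
This is also where the definition of $\lambda$ comes from. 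The degenerate case $f(L)=0$ is handled by \Cref{obs:gs-payment-decomposition}: then $S^\star$ consists of at most one heavy agent, since an agent of $L$ would have zero marginal and positive cost, and so a singleton or $\emptyset$ is optimal.

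\textbf{Upper bound on $g(S^\star)$.} Next I will bound $g(S^\star)$ in terms of $f(L)$ and $I$, splitting on whether $S^\star$ has a heavy agent.
\begin{itemize}
\item \emph{No heavy agent.} If $S^\star=L$, then $t_L=t$ and $g(S^\star)=(1-t)f(L)$.
\item \emph{One heavy agent $h$.} By \Cref{obs:gs-payment-decomposition} there is at most one, say $S^\star=L\cup\{h\}$.
  \begin{itemize}
  \item Subadditivity gives $f(S^\star)\le f(L)+f_h$.
  \item Submodularity gives $c_h/f_h\le c_h/f_h(S^\star)\le t$, so $I\ge f_h-c_h\ge(1-t)f_h$.
  \item Heaviness gives $c_h/f_h(S^\star)\ge c_h/f_h>\tau t$, so $t_L<(1-\tau)t$.
  \end{itemize}
\end{itemize}
In both cases $g(S^\star)\le(1-t)f(L)+I=(1-t)x^\star/\lambda(t_L)+I$. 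In addition, $t_L=t$ in the first case and $t_L\le(1-\tau)t$ in the second.

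\textbf{Reduction to a scalar inequality.} Set $c:=1000/3287$ and $\beta:=I/x$. Since $x\le x^\star$,
\[
c\,\frac{x}{x^\star}\,g(S^\star)\le c\,x\left(\frac{1-t}{\lambda(t_L)}+\frac{x}{x^\star}\beta\right)\le c\,x\left(\frac{1-t}{\lambda(t_L)}+\beta\right).
\]
Dividing the bound of \Cref{lem:gs-calibrated-demand} by $x$, it then suffices to prove
\[
\max\left\{\beta,\ \frac14-\frac{\beta^2}{4(1-\tau t)^2}\right\}\ \ge\ c\left(\frac{1-t}{\lambda(t_L)}+\beta\right)
\]
for all $t\in[0,1)$ and $\beta\ge0$, in two regimes. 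In the first regime $t_L=t$ (and one may even drop $\beta$ on the right-hand side). In the second, $0\le t_L\le(1-\tau)t$. Since $\lambda$ is decreasing, in the second regime the worst case is $t_L=(1-\tau)t$.

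\textbf{Main obstacle: the verification.} For fixed $t$, the left-hand side is a maximum of an increasing and a decreasing function of $\beta$. The right-hand side is increasing in $\beta$. So the binding $\beta$ is where $\beta=\frac14-\beta^2/(4(1-\tau t)^2)$, which has an explicit root. The inequality thus reduces to a one-variable inequality in $t$ for each regime.

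I would verify these numerically or by monotonicity on $t\in[0,1)$. I expect the tight point to balance the two regimes, and that balance is what pins down $\tau=0.8442$ and the constant $3.287$. Since $c$ is essentially the optimum, this step leaves little slack and may need a careful interval check near the maximizing $t$.
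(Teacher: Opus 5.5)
\textbf{There is a genuine gap, and it is in your heavy-agent bound, not in the numerics you flagged.}

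Your reduction itself is sound. Keeping $\beta=I/x$ free and evaluating at the crossing point is equivalent to replacing $I$ by the maximum and solving the quadratic. The problem is the input: you use only $c_h/f_h\le t$. This gives $I\ge(1-t)f_h$, hence $g(S^\star)\le(1-t)f(L)+I$, with coefficient $1$ on $I$.

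Write $u:=1-\tau t$ and $h(u):=\sqrt{1+4u^2}+2u$. The binding point of your scalar inequality is $\beta_0=u/h(u)$. At that point the heavy regime requires $1+\frac{(1-t)h(u)}{u\,\lambda(t_L)}\le\frac{3287}{1000}$ for all $t_L<(1-\tau)t$. Letting $t_L\to(1-\tau)t$, this becomes
\[
1+\frac{(1-t)\,h(1-\tau t)}{(1-\tau t)\,\lambda\bigl((1-\tau)t\bigr)}\le\frac{3287}{1000}.
\]
This is false for $\tau=0.8442$: at $t=0.07$ the left-hand side is about $3.298$. So no amount of care in the final check can produce the constant $1000/3287$ from your inequalities.

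\textbf{The fix.} Use the heavy agent's exact share. Since $c_h/f_h(S^\star)=t-t_L$, submodularity gives $c_h/f_h\le t-t_L$, hence $I\ge(1-t+t_L)f_h$ and
\[
g(S^\star)\le(1-t)f(L)+\frac{1-t}{1-t+t_L}\,I .
\]
The coefficient of $\beta$ on your right-hand side becomes $c(1-t)/(1-t+t_L)$. The condition at $\beta_0$ becomes
\[
\frac{(1-t)h(1-\tau t)}{\lambda(t_L)(1-\tau t)}+\frac{1-t}{1-t+t_L}\le\frac{3287}{1000}.
\]
At $t_L=(1-\tau)t$ and $t=0.07$, the second term equals $(1-t)/(1-\tau t)\approx0.988$ rather than $1$. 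That is exactly the slack needed: the supremum is then just below $3.287$.

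With this coefficient, your justification ``$\lambda$ is decreasing, so the worst case is $t_L=(1-\tau)t$'' no longer suffices, because the new term decreases in $t_L$. Instead, fix $t_L$ and note that the whole expression is nonincreasing in $t$. Indeed, $(1-t)/(1-\tau t)$, $h(1-\tau t)$ and $(1-t)/(1-t+t_L)$ all decrease in $t$. Hence the supremum over $t>t_L/(1-\tau)$ is approached as $t\to t_L/(1-\tau)$, which gives the paper's one-variable function.

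The rest of your plan is correct and matches the paper: reachability via the quadratic in $\sqrt{\mu}$, the case $f(L)=0$, and the no-heavy case, whose condition at $\beta_0$ is exactly $\frac{(1-t)h(1-\tau t)}{\lambda(t)(1-\tau t)}\le\frac{3287}{1000}$.
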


\begin{proof}
If $f(L)=0$, \Cref{obs:gs-payment-decomposition} and submodularity imply that the
best singleton is optimal. Otherwise, fix $x=\rho x^\star$,
where $0<\rho\le1$, and write
\[
    M:=\max\{I,g(U(A_\tau,x))\},
    \qquad
    u:=1-\tau t.
\]
Note that $x^\star$ exactly satisfies \Cref{eq:gs-reachability} as an equality. Since $x\le x^\star$, \Cref{eq:gs-reachability} holds for $x$ as well. Since $I\leq M$ and $u=1-\tau t$,
     \Cref{lem:gs-calibrated-demand} gives
\[
    M\geq g(U(A_\tau,x)) \geq\frac{x}{4}-\frac{I^2}{4xu^2}
    \geq\frac{x}{4}-\frac{M^2}{4xu^2}.
\]
Solving this quadratic inequality for $x$, and writing
$h(u):=\sqrt{1+4u^2}+2u$, yields
$x\le \frac{h(u)}{u}M$, and by substituting $f(L)=x^{\star}/\lambda(t_L)$ and $x=\rho x^\star$, we get
\[
    f(L)\le \frac{h(u)}{\rho\lambda(t_L)u}M.
\]
If $S^\star=L$, then $t_L=t$, so
\[
    g(S^\star)\le \frac{R_0(t)}{\rho}M,
    \quad
    \text{where }
    R_0(t):=
    \frac{(1-t)h(1-\tau t)}{\lambda(t)(1-\tau t)}.
\]
Otherwise, let $j$ be the unique heavy optimal agent.
Its payment share is $t-t_L>\tau t$, and
\[
    I\ge (1-t+t_L)f_j.
\]
Consequently, subadditivity and the bound on $f(L)$ give
\[
\begin{aligned}
    g(S^\star)
    &\le (1-t)f(L)+\frac{1-t}{1-t+t_L}I\\
    &\le \frac{M}{\rho}
    \left(
        \frac{(1-t)h(1-\tau t)}
             {\lambda(t_L)(1-\tau t)}
        +\frac{1-t}{1-t+t_L}
    \right),
\end{aligned}
\]
where the last inequality uses $I\le M$ and $\rho\le1$.
For a fixed $t_L$, the expression in parentheses is nonincreasing
in $t$: the positive factors $(1-t)/(1-\tau t)$ and
$h(1-\tau t)$ decrease, and the last term is nonincreasing.
Since $t>t_L/(1-\tau)$, it is bounded by its value at
$t=t_L/(1-\tau)$. Thus
\[
    g(S^\star)\le
    \frac{M}{\rho}\sup_{0\le z<1}R_1(z),
    \quad
    \text{where }
    R_1(z):=
    \frac{1-z}{1-\tau z}
    \left(
        1+\frac{h(1-\tau z)}{\lambda((1-\tau)z)}
    \right).
\]
For $\tau=0.8442$, the two functions satisfy
\[
    \sup_{0\le t<1}R_0(t)<3.286883,
    \qquad
    \sup_{0\le t<1}R_1(t)<3.286886,
\]
which proves the proposition.
\end{proof}

\paragraph{Finding a good candidate.}
The set $A_\tau$ and the scale $x^\star$ both depend on the unknown optimal contract $S^\star$.
The set $A_{\tau}$ is one of the prefixes obtained by
sorting agents by $c_i/f_i$ and grouping equal ratios, and is therefore easily enumerated over.
Since $A_\tau$ is unknown, for each enumerated prefix $P$ and each scale
$x>0$, we make the demand query at prices $p_i=\sqrt{c_i x}$ over subsets
of $P$ and denote its output by $D(P,x)$. We then apply the deletion-chain
rule in \eqref{eq:gs-candidate-definition} to $D(P,x)$ to obtain the
candidate $U(P,x)$. To approximate $x^\star$, we use a geometric grid,
as described in the following lemma.

\begin{lemma}[Discretizing the demand scale]
\label{lem:gs-scale-grid}
Suppose $I>0$. Fix $\delta\in(0,1)$ and set
\[
    K:=\left\lceil\log_{1+\delta}(27n)\right\rceil,
    \qquad x_\ell:=\frac{I}{2}(1+\delta)^\ell
    \quad(\ell=0,\ldots,K).
\]
Then there exists some
$\ell\in\{0,\ldots,K\}$ such that
\[
    \max\left\{I,g\bigl(U(A_{\tau},x_\ell)\bigr)\right\}
    \geq\frac{1000}{3287(1+\delta)}\,g(S^\star).
\]
\end{lemma}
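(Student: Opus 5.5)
The plan is to reduce to \Cref{prop:gs-demand-approximation}. That proposition already gives the bound $\frac{x}{x^\star}\frac{1000}{3287}g(S^\star)$ for every $0<x\le x^\star$. So it suffices to find a grid point $x_\ell$ with $x^\star/(1+\delta)\le x_\ell\le x^\star$, or else to show directly that $I$ alone is good enough.

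First we dispose of the degenerate cases. If $f(L)=0$, the proposition says the best singleton is optimal, so $I\ge g(S^\star)$ and any $\ell$ works. Assume now $f(L)>0$, so that $x^\star=\lambda(t_L)f(L)$ is defined.

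Next we bound $x^\star$ from above. Since $\lambda(r)=(\sqrt{r+2}-\sqrt r)^2\le 2$ and $f(L)\le f(S^\star)$, we get $x^\star\le 2f(L)$. By submodularity, $f(L)\le\sum_{i\in L}f_i\le n\max_{i\in L}f_i$. For $i\in L\subseteq A_\tau$ we have $c_i\le \tau t f_i$. Hence $I\ge f_i-c_i\ge(1-\tau t)f_i\ge(1-\tau)f_i$, using $t<1$. With $\tau=0.8442$ we have $1-\tau\ge 0.1558>2/27$, so $f_i\le 27I/2$ roughly. Combining, $x^\star\le 2n\cdot\frac{I}{1-\tau}\le 27nI\cdot\frac12$. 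I will check that the constants give $x^\star\le \frac I2(27n)\le x_K$; the constant $27$ looks chosen exactly for this.

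The lower end needs a case split. If $x^\star<x_0=I/2$, I instead want to argue that $I$ itself is large. Apply the proposition's derivation with $x=x^\star$ (i.e.\ $\rho=1$), which gives $g(S^\star)\le 3.287M$ with $M=\max\{I,g(U(A_\tau,x^\star))\}$. Lemma~\ref{lem:gs-calibrated-demand} gives $g(U)\le f(U)\lesssim x^\star/2+I/(2(1-\tau t))$. This is messy, so I will try a cleaner route. The inequality $x\le \frac{h(u)}{u}M$ in the proof of the proposition came from $M\ge x/4-M^2/(4xu^2)$. If instead $x^\star<I/2$, we have $I>x^\star/2$, and then the chain bounding $g(S^\star)$ should go through with $M$ replaced by $I$. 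That is, $f(L)\le x^\star/\lambda(t_L)\le 2I/\lambda(t_L)$, and $\frac{h(u)}{u}\ge 2$ since $h(u)\ge 2u+1$. So the final bound $g(S^\star)\le 3.287\,I$ follows by rerunning the estimates with $I$ in place of $M$.

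Finally, in the main case $I/2\le x^\star\le x_K$, let $\ell$ be the largest index with $x_\ell\le x^\star$. Then $x_\ell>x^\star/(1+\delta)$, and the proposition with $\rho=x_\ell/x^\star$ yields the claimed $\frac{1000}{3287(1+\delta)}$ factor.

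I expect the main obstacle to be getting the constants in the upper bound on $x^\star$ (the $27n$) and in the small-$x^\star$ case to line up exactly, so that those two endpoints lose no more than the factor already absorbed by the proposition.
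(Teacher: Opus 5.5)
Your proof is correct. The main case is the paper's argument: the bound $x^\star\le 2f(L)\le 2nI/(1-\tau)\le x_K$, then the largest grid point $x_\ell\le x^\star$, then \Cref{prop:gs-demand-approximation}.

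The difference is at the lower end of the grid. You split on $x^\star<x_0=I/2$. In that case you rerun the proposition's chain with $M$ replaced by $I$, using $x^\star<2I\le \frac{h(u)}{u}I$, and obtain $g(S^\star)\le 3.287\,I$.

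The paper instead splits on $f(L)\le I$. In that case it shows directly that $g(S^\star)\le (1-t)(f(L)+f_j)\le f(L)+I\le 2I$, using only the heavy-agent bound $(1-t)f_j\le I$. When $f(L)>I$, the fact that $\lambda(t_L)>1/2$ for $t_L<1$ gives $x^\star>f(L)/2>I/2=x_0$ automatically. So the paper needs no separate small-scale case, and its first case also subsumes your $f(L)=0$ case.

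The paper's split is the ``cleaner route'' you were looking for. It is elementary, yields the stronger factor $2$, and does not depend on the numerical suprema of $R_0,R_1$. Your route is valid but leans on the full proposition. In fact your case $x^\star<I/2$ implies $f(L)=x^\star/\lambda(t_L)<2x^\star<I$, so it lies inside the paper's first case anyway.

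One small slip in the upper bound. For $2nI/(1-\tau)\le \frac{27}{2}nI$ you need $1-\tau\ge 4/27\approx 0.148$, not $1-\tau>2/27$; the latter would only give $x^\star<27nI$. The needed inequality does hold for $\tau=0.8442$, so the conclusion stands.
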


\begin{proof}
If $f(L)\le I$, then $g(S^\star)\le2I$: this is immediate
when $S^\star=L$; otherwise, the unique heavy optimal agent $j$
satisfies $(1-t)f_j\le I$, and hence
\[
    g(S^\star)
    \le (1-t)(f(L)+f_j)
    \le f(L)+I
    \le 2I.
\]
Thus the claimed guarantee follows from the singleton alone.

Now suppose $f(L)>I$. Since $1/2<\lambda(t_L)\le2$, and every
$i\in L$ satisfies $(1-\tau)f_i\le I$, subadditivity gives
\[
    \frac{I}{2}<x^\star
    \le 2f(L)
    \le \frac{2nI}{1-\tau}
    < \frac{27nI}{2}
    \le x_K.
\]
The largest grid point $x_\ell\le x^\star$ therefore satisfies
$\frac{x^\star}{1+\delta}\le x_\ell\le x^\star$.
Applying Proposition~\ref{prop:gs-demand-approximation}
at $x=x_\ell$ gives
\[
    \max\{I,g(U(A_\tau,x_\ell))\}
    \ge \frac{1000}{3287(1+\delta)}\,g(S^\star),
\]
as required.
\end{proof}

\Cref{alg:gs-approximation}  goes over all options of $A_{\tau}$ and values of $x$ prescribed by \Cref{lem:gs-scale-grid} and
returns a set of maximum utility among the candidate sets, the
singletons, and the empty set. Thus, it does not require the unknown
values $t$, $t_L$, or $f(L)$.

\begin{algorithm}[H]
\DontPrintSemicolon
\caption{Approximation for the optimal contract under submodular $f$}
\label{alg:gs-approximation}
\KwIn{A submodular reward function $f:2^A\rightarrow[0,1]$ given by
value and demand oracles, costs $\{c_i\}_{i\in A}$, and $\eps\in(0,1)$}
\KwOut{A set of agents $\widehat S$ to incentivize}
Query $f_i:=f(\{i\})$ for every $i\in A$, and discard agents with $f_i=0$\;
Compute $I$, and initialize $\widehat S$ to a singleton or the empty set
with utility $I$\;
\If{$I=0$}{\Return{$\emptyset$}\;}
Set $\delta:=\eps/4$ and
$K:=\left\lceil\log_{1+\delta}(27n)\right\rceil$\;
Sort agents by nondecreasing $c_i/f_i$,  and let
$\mathcal P$ be the set of all prefixes \;
\ForEach{$P\in\mathcal P$}{
    \For{$\ell=0,\ldots,K$}{
        Set $x_\ell:=\frac{I}{2}(1+\delta)^\ell$\;
        Compute the $x_\ell$-candidate $U(P,x_\ell)$ from
        \eqref{eq:gs-candidate-definition}\;
        Set $\widehat S$ to a set of maximum utility in
        $\{\widehat S,U(P,x_\ell)\}$\;
    }
}
\Return{$\widehat S$}\;
\end{algorithm}

\begin{proof}[Proof of \Cref{thm:gs-improved-approximation}]

The prefix $A_{\tau}$ is among those considered by
\Cref{alg:gs-approximation}, and the appropriate $x_\ell$ as prescribed by \Cref{lem:gs-scale-grid} as well, thus either the best singleton or some $U(P, x_\ell)$ achieves a utility of at least $\frac{1000}{3287(1+\delta)}g(S^\star)$. The algorithm outputs the best set among these, so with the choice of $\delta=\eps/4$,
its approximation factor is at most
\[
    \frac{3287}{1000}\left(1+\frac{\eps}{4}\right)<3.287+\eps.
\]
Finally, there are at most $n+1$ prefixes and
$K+1=O(\eps^{-1}\log(n+1))$ considered values of $x$. Each candidate uses one demand query
and $O(n)$ additional value queries to evaluate the deletion chain and the
utility of its selected set. Hence the algorithm runs in polynomial time.
\end{proof}

%% file: better_const_SM.tex
\section{Extension to Submodular Rewards with Value Queries}
\label{sec:submodular-value-oracle}

To obtain an approximation using only value queries, we replace the exact
demand query in \Cref{sec:gs-approximation} by an approximate demand
computation. Pruning its output restores the marginal inequalities needed
for the $x$-candidate's utility bound.

\begin{theorem}[Approximation for submodular rewards]
\label{thm:submodular-improved-approximation}
For every $\eps\in(0,1)$, the optimal contract problem with a submodular
reward function admits a polynomial-time $(6.128+\eps)$-approximation under
value-oracle access.
\end{theorem}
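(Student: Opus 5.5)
We will reuse the framework of \Cref{sec:gs-approximation} verbatim: the light/heavy split with threshold $\tau$, the prefix enumeration for $A_\tau$, the geometric grid of \Cref{lem:gs-scale-grid}, and the deletion-chain candidate. The only change is how $D(P,x)$ is computed. Exact demand queries are unavailable, so we will instead approximately maximize the regularized objective
\[
\Phi_x(S):=f(S)-\sum_{i\in S}\sqrt{c_i x}
\]
over subsets of $P$. This objective is a monotone submodular function minus a linear cost, so we can apply a known value-oracle algorithm for that form, for instance the distorted greedy or the continuous-greedy-based algorithm for ``submodular minus linear.'' Such an algorithm returns a set $D$ with $f(D)-\sum_{i\in D}p_i\geq(1-1/e)f(L)-\sum_{i\in L}p_i$ against any comparator $L\subseteq P$. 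By the Cauchy--Schwarz bound \eqref{eq:sqpr-bound}, this yields
\[
f(D)\geq(1-1/e)f(L)-\sqrt{t_Lf(L)x}.
\]
This is the analogue of the reachability condition \eqref{eq:gs-reachability}, with $f(L)$ replaced by $(1-1/e)f(L)$ in the first term.

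Next we will prune $D$. We repeatedly delete any $i$ with $f_i(D)<p_i$ until none remains. Each deletion weakly increases $\Phi_x$, so the approximate-demand guarantee is preserved, and at termination every remaining agent satisfies $f_i(D)\geq p_i$. By submodularity this extends to every set on the deletion chain, giving $f_i(U)\geq p_i$ for $i\in U$. This is exactly the marginal inequality used in the proof of \Cref{lem:gs-calibrated-demand}, so the payment bound $t(U)\leq f(U)/x$ and the resulting estimate $g(U)\geq x/4-I^2/(4x(1-\tau t)^2)$ go through unchanged whenever $f(D)\geq x/2$.

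The remaining work is to redo the optimization in \Cref{prop:gs-demand-approximation} with the weaker reachability condition. The largest admissible scale becomes the $x$ solving
\[
(1-1/e)f(L)-\sqrt{t_Lf(L)x}=x/2,
\]
namely $x^\star=\lambda'(t_L)f(L)$, where $\lambda'$ is obtained from the same quadratic with the constant $1-1/e$. We will then recompute the two suprema $R_0$ and $R_1$ (with $\tau$ possibly re-tuned) and verify numerically that both stay below $6.128$. The grid lemma also needs a small adjustment: the lower endpoint $I/2$ may have to be scaled, because $\lambda'$ has a different range.

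The main obstacle is the first step. We need an approximate demand routine whose guarantee has exactly the shape above: the full linear cost subtracted, and only the submodular part scaled by a constant. A plain $(1-1/e)$-approximation to $\Phi_x$ is meaningless, because $\Phi_x$ can be negative. If the distorted-greedy guarantee is used, the constant $(1-1/e)$ is the natural one. The numeric tuning then either confirms $6.128$, or requires a more careful split between the cost term and the reward term, such as scaling the prices by a factor $\beta$ before the query and optimizing over $\beta$ together with $\tau$.
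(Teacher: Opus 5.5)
Your plan matches the paper's proof step for step: the distorted-greedy guarantee $f(\widetilde D)-\sum_{i\in\widetilde D}p_i\geq\beta f(T)-\sum_{i\in T}p_i$ with $\beta=1-1/e$, pruning agents with $f_i(D)<p_i$, the unchanged candidate bound, re-optimization with $\lambda_\beta(r)=(\sqrt{r+2\beta}-\sqrt r)^2$, and a grid starting at $I/4$. The one thing to pin down is the cutoff $\tau=1/2$, which is the smallest value the one-heavy-agent observation allows and also the best choice, since $R_{0,\beta}$ increases pointwise with $\tau$ (keeping $\tau=0.8442$ would push $\sup R_{0,\beta}$ to about $6.28$); it gives $\sup R_{0,\beta}<6.127981$ and $\sup R_{1,\beta}<5.848$, so the price rescaling you considered as a fallback is not needed.
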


We follow the construction and analysis of \Cref{sec:gs-approximation},
specifying the changes to the $x$-candidate, the cutoff, and the scale.
Once the corresponding utility bound is established, the singleton
comparison and the enumeration of prefixes and scales apply with these
new parameters.

\paragraph{Light and Heavy Agents.}
We use $f_i$, $I$, $S^\star$, and $t$ as in \Cref{sec:gs-approximation},
discarding agents with $f_i=0$ and assuming $g(S^\star)>0$ for the analysis.
In this section, fix $\tau=1/2$ and define
\[
    A_\tau:=\left\{i\in A:\frac{c_i}{f_i}\leq\tau t\right\},
    \qquad L:=S^\star\cap A_\tau.
\]
As before, $t_L$ is the total payment share of the agents in $L$ in the
optimal contract. The argument of \Cref{obs:gs-payment-decomposition}
applies at this cutoff as well, so at most one optimal agent is heavy.

\paragraph{$x$-candidate set.}
We adapt the construction of a candidate set of agents to incentivize by
using the following approximate demand guarantee, as in~\cite{DEFK23}.
Let $\beta:=1-1/e$.

\begin{lemma}[{Harshaw et al.~[2019]; Sviridenko et al.~[2017]}]
\label{lem:submodular-approximate-demand}
Let $f$ be a monotone submodular function given by a value oracle, and let
$\{p_i\}_{i\in A_\tau}$ be nonnegative prices. In polynomial time, one can
compute a set $\widetilde D\subseteq A_\tau$ satisfying
\begin{equation}
\label{eq:submodular-approximate-demand}
    f(\widetilde D)-\sum_{i\in\widetilde D}p_i
    \geq
    \beta f(T)-\sum_{i\in T}p_i
    \qquad\text{for every }T\subseteq A_\tau.
\end{equation}
\end{lemma}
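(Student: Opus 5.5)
The plan is to prove the lemma with the \emph{distorted greedy} algorithm of Harshaw et al., run with cardinality bound $k:=|A_\tau|$, which every $T\subseteq A_\tau$ respects. Write $f(e\mid S):=f(S\cup\{e\})-f(S)$ and $w_i:=(1-1/k)^{k-i-1}$. Start from $S_0:=\emptyset$. In iteration $i=0,\ldots,k-1$, pick $e_i\in A_\tau$ maximizing the distorted gain $w_i f(e\mid S_i)-p_e$. Set $S_{i+1}:=S_i\cup\{e_i\}$ if this gain is positive, and $S_{i+1}:=S_i$ otherwise. Output $\widetilde D:=S_k$. The algorithm uses $O(k^2)$ value queries, so it runs in polynomial time.

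\emph{Step 1 (potential).} The analysis uses the potential
\[
\Phi_i(S):=(1-1/k)^{k-i}f(S)-\sum_{j\in S}p_j .
\]
It satisfies $\Phi_0(S_0)=0$ because $f(\emptyset)=0$, and $\Phi_k(S_k)=f(\widetilde D)-\sum_{j\in\widetilde D}p_j$. Using $(1-1/k)^{k-i}=w_i(1-1/k)$, a direct computation gives
\[
\Phi_{i+1}(S_{i+1})-\Phi_i(S_i)=\max\Bigl\{0,\ \max_{e}\bigl(w_i f(e\mid S_i)-p_e\bigr)\Bigr\}+\frac{w_i}{k}f(S_i).
\]

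\emph{Step 2 (one-step progress against an arbitrary $T$).} Fix any $T\subseteq A_\tau$. The positive part of the best distorted gain is at least the average over $T$, since $|T|\le k$ and all terms $w_if(e\mid S_i)-p_e$ can be padded with zeros. Submodularity and then monotonicity bound that average:
\[
\frac1k\sum_{e\in T}\bigl(w_i f(e\mid S_i)-p_e\bigr)\ \ge\ \frac1k\Bigl(w_i\bigl(f(T\cup S_i)-f(S_i)\bigr)-\sum_{e\in T}p_e\Bigr)\ \ge\ \frac1k\Bigl(w_i f(T)-w_i f(S_i)-\sum_{e\in T}p_e\Bigr).
\]
Adding the term $\frac{w_i}{k}f(S_i)$ from Step 1 cancels the $f(S_i)$ term. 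This gives
\[
\Phi_{i+1}(S_{i+1})-\Phi_i(S_i)\ge\frac1k\Bigl(w_i f(T)-\sum_{e\in T}p_e\Bigr).
\]

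\emph{Step 3 (telescoping).} Sum the bound from Step 2 over $i=0,\ldots,k-1$. Since $\sum_i w_i/k=1-(1-1/k)^k\ge 1-1/e=\beta$ and $f(T)\ge0$, the total is at least $\beta f(T)-\sum_{e\in T}p_e$. This is exactly \eqref{eq:submodular-approximate-demand}. Because $T$ was arbitrary and never used by the algorithm, the guarantee holds for every $T$ simultaneously.

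The main obstacle is Step 2. The distortion weights have to be chosen so that the $f(S_i)$ term produced by the changing coefficient in the potential cancels exactly the loss $-w_if(S_i)$ coming from monotonicity. One also has to observe that thresholding at $0$ only helps, since the gain of $0$ dominates the average when that average is negative. This yields the ``for every $T$'' form rather than only a guarantee against the optimizer. Alternatively, one can invoke the cited results of Harshaw et al.\ and Sviridenko et al.\ directly, since their statements are exactly this inequality with $\beta=1-1/e$.
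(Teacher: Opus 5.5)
Your proof is correct, and it is the distorted-greedy analysis of Harshaw et al.\ run with $k=|A_\tau|$, so that every $T\subseteq A_\tau$ is feasible. The paper does not prove this lemma; it invokes exactly this result by citation, and your remark that the analysis never uses optimality of $T$ is what yields the ``for every $T$'' form. The only loose ends are the degenerate case $A_\tau=\emptyset$, where $1/k$ is undefined but the claim holds trivially with $\widetilde D=\emptyset$, and the need for the convention $0^0=1$ in $w_0$ when $k=1$.
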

We call $\widetilde D$ a $\beta$-approximate demand set. This guarantee
scales the reward of the comparison set by $\beta$, while subtracting its
full price. It does not ensure that every agent in $\widetilde D$ has
marginal contribution at least its price, which was needed for the
payment bound in \Cref{lem:gs-calibrated-demand}. We restore this
property by pruning the approximate demand set.

For a scale $x>0$, assign each $i\in A_\tau$ the price
$p_i:=\sqrt{c_i x}$ and obtain $\widetilde D$ from
\Cref{lem:submodular-approximate-demand}. Initialize $D:=\widetilde D$
and delete an agent $i$ whenever $f_i(D)<p_i$.
Write $D(A_\tau,x)$ for the set remaining when no such agent exists.
Thus, in this section, $D(A_\tau,x)$ denotes the pruned approximate demand set.
Starting from $D(A_\tau,x)$, form an arbitrary fixed deletion chain down
to $\emptyset$. An \emph{$x$-candidate set} $U(A_\tau,x)$ is a set in this
chain whose reward is closest to $x/2$, as in
\eqref{eq:gs-candidate-definition}.

The next lemma shows that this candidate satisfies the same utility bound
as in \Cref{lem:gs-calibrated-demand}, with the condition on the scale
adjusted for the factor $\beta$.
\begin{lemma}[Utility of the $x$-candidate set]
\label{lem:submodular-approximate-demand-candidate}
If $x>0$ satisfies
\begin{equation}
\label{eq:submodular-reachability}
    \beta f(L)-\sqrt{t_L f(L)x}\geq\frac{x}{2},
\end{equation}
then
\[
    g(U(A_\tau,x))\geq\frac{x}{4}
      -\frac{I^2}{4x(1-\tau t)^2}.
\]
\end{lemma}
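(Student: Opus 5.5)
The plan is to mirror the proof of \Cref{lem:gs-calibrated-demand}, replacing each use of exact demand optimality by the approximate demand guarantee \eqref{eq:submodular-approximate-demand} combined with the pruning step. Write $D:=D(A_\tau,x)$ for the pruned set and $U:=U(A_\tau,x)$. The argument has three steps. First, show $f(D)\ge x/2$. Second, show that every agent of $U$ has marginal at least its price. Third, run the same quadratic computation as before.

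\textbf{Step 1 (reachability).} Pruning never decreases $f(S)-\sum_{i\in S}p_i$: deleting $i$ with $f_i(D)<p_i$ changes this quantity by $p_i-f_i(D)>0$. Hence
\[
f(D)\ \ge\ f(D)-\sum_{i\in D}p_i\ \ge\ f(\widetilde D)-\sum_{i\in\widetilde D}p_i\ \ge\ \beta f(L)-\sum_{i\in L}p_i .
\]
The last inequality applies \eqref{eq:submodular-approximate-demand} with $T=L\subseteq A_\tau$. Next, bound $\sum_{i\in L}p_i=\sqrt{x}\sum_{i\in L}\sqrt{c_i}\le\sqrt{t_Lf(L)x}$ exactly as in \eqref{eq:sqpr-bound}; that Cauchy--Schwarz/telescoping step does not involve demand at all. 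Together with \eqref{eq:submodular-reachability}, this gives $f(D)\ge x/2$.

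\textbf{Step 2 (payment bound).} When pruning stops, every $i\in D$ satisfies $f_i(D)\ge p_i$. By submodularity, $f_i(U)\ge f_i(D)\ge p_i$ for each $i\in U\subseteq D$. The chain of inequalities from \Cref{lem:gs-calibrated-demand} then applies verbatim:
\[
\sum_{i\in U}\frac{c_i}{f_i(U)}\ \le\ \sum_{i\in U}\frac{p_i}{x}\ \le\ \frac1x\sum_{i\in U}f_i(U)\ \le\ \frac{f(U)}{x}.
\]
The last inequality is the standard submodular bound $\sum_{i\in U}f_i(U)\le f(U)$. Consequently $g(U)\ge (1-f(U)/x)f(U)=x/4-(f(U)-x/2)^2/x$.

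\textbf{Step 3 (closeness to $x/2$).} The deletion chain runs from $D$, where $f(D)\ge x/2$, down to $\emptyset$, where the value is $0$. Consecutive values along the chain differ by at most $\max_{i\in D}f_i$ by subadditivity. So some set on the chain has value within $\tfrac12\max_{i\in D}f_i$ of $x/2$, and the same bound holds for the closest one, $U$. Finally, $D\subseteq A_\tau$, and each $i\in A_\tau$ satisfies $I\ge f_i-c_i\ge(1-\tau t)f_i$. With $\tau=1/2$ and $t<1$ we have $1-\tau t>0$, so $\max_{i\in D}f_i\le I/(1-\tau t)$. Substituting into Step 2 gives the claimed bound.

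The main obstacle is Step 1. One must verify that pruning preserves the approximate-demand comparison, which holds because the profit strictly increases at each deletion. This monotonicity is what allows the $\beta$-guarantee of \Cref{lem:submodular-approximate-demand} to transfer to the pruned set $D$, while the pruning simultaneously delivers the marginal-versus-price property needed in Step 2.
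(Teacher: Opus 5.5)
Your proposal is correct and follows the paper's proof essentially step for step. Pruning only increases $f(S)-\sum_{i\in S}p_i$, so the $\beta$-approximate demand comparison with $L$ and the Cauchy--Schwarz bound \eqref{eq:sqpr-bound} give $f(D)\ge x/2$; termination of pruning plus submodularity gives $f_i(U)\ge p_i$ for the payment bound; and the deletion-chain and cutoff arguments finish exactly as in \Cref{lem:gs-calibrated-demand}.
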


\begin{proof}
Write $D:=D(A_\tau,x)$.
The Cauchy--Schwarz estimate \eqref{eq:sqpr-bound} is unchanged, so
$\sum_{i\in L}p_i\leq\sqrt{t_L f(L)x}$.
Deleting a violating agent increases reward minus prices by
$p_i-f_i(D)>0$. Consequently, pruning preserves the
approximate demand guarantee, and comparison with $L$ gives
\[
\begin{aligned}
    f(D)&\geq f(D)-\sum_{i\in D}p_i
       \geq f(\widetilde D)-\sum_{i\in\widetilde D}p_i\\
    &\geq\beta f(L)-\sum_{i\in L}p_i
       \geq\beta f(L)-\sqrt{t_L f(L)x}.
\end{aligned}
\]
At termination, $f_i(D)\geq p_i$ for every $i\in D$.
By submodularity, the same marginal bound holds for the agents in
$U(A_\tau,x)$. The payment calculation in the proof of
\Cref{lem:gs-calibrated-demand} therefore applies.
Under \eqref{eq:submodular-reachability}, the reward bound above gives
$f(D)\geq x/2$. The same deletion-chain argument then yields
\[
    g(U(A_\tau,x))\geq\frac{x}{4}
      -\frac{1}{4x}\left(\max_{i\in D}f_i\right)^2.
\]
Finally, every $i\in D\subseteq A_\tau$ satisfies
$I\geq f_i-c_i\geq(1-\tau t)f_i$.
Substituting $\max_{i\in D}f_i\leq I/(1-\tau t)$ concludes the proof.
\end{proof}

\paragraph{The approximation guarantee.}
We now use this utility bound in the approximation argument of
\Cref{prop:gs-demand-approximation}, with the cutoff $\tau=1/2$ and
the scale adjusted for the factor $\beta$.

\begin{proposition}[Approximation guarantee]
\label{prop:submodular-demand-approximation}
If $f(L)>0$, define
\begin{equation}
\label{eq:submodular-lambda}
    x^\star:=\lambda_\beta(t_L)f(L),
    \qquad\text{where}\quad
    \lambda_\beta(r):=\left(\sqrt{r+2\beta}-\sqrt r\right)^2.
\end{equation}
For every $0<x\leq x^\star$,
\[
    \max\{I,g(U(A_\tau,x))\}
      \geq\frac{x}{x^\star}\frac{125}{766}\,g(S^\star).
\]
If $f(L)=0$, then the best singleton is optimal.
\end{proposition}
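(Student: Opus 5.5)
The plan is to mirror the proof of \Cref{prop:gs-demand-approximation}, replacing $\lambda$ by $\lambda_\beta$, $\tau=0.8442$ by $\tau=1/2$, and \Cref{lem:gs-calibrated-demand} by \Cref{lem:submodular-approximate-demand-candidate}.

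\textbf{Base cases.} If $f(L)=0$, then by the observation at most one optimal agent is heavy, and by submodularity $f(S^\star)\le f(L)+f_j=f_j$. Moreover, the payment of that agent alone is at most its payment within $S^\star$. Hence $g(S^\star)\le g(\{j\})\le I$, so the best singleton is optimal.

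\textbf{Choice of $x^\star$.} Otherwise, I first check that $x^\star$ makes \eqref{eq:submodular-reachability} tight. Setting $x=\lambda f(L)$, the condition reads $\beta-\sqrt{t_L\lambda}=\lambda/2$. Solving for $s=\sqrt\lambda$ gives $s^2+2\sqrt{t_L}\,s-2\beta=0$, so $s=\sqrt{t_L+2\beta}-\sqrt{t_L}$, which matches $\lambda_\beta$. Since the left side minus $x/2$ is decreasing in $x$, every $x=\rho x^\star$ with $\rho\in(0,1]$ also satisfies the condition.

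\textbf{Reduction to a scalar bound.} Let $M:=\max\{I,g(U)\}$ and $u:=1-t/2>1/2$. \Cref{lem:submodular-approximate-demand-candidate} gives $M\ge x/4-M^2/(4xu^2)$. Solving this quadratic as before yields $x\le h(u)M/u$ with $h(u)=\sqrt{1+4u^2}+2u$. Consequently,
\[
f(L)\le \frac{h(u)}{\rho\lambda_\beta(t_L)u}M.
\]

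\textbf{Case analysis.}
\begin{enumerate*}[label=(\roman*)]
\item If $S^\star=L$, then $t_L=t$ and $g(S^\star)\le (1-t)f(L)\le R_0^\beta(t)M/\rho$, where $R_0^\beta(t):=\frac{(1-t)h(1-t/2)}{\lambda_\beta(t)(1-t/2)}$.
\item Otherwise, there is a single heavy agent $j$ with share $t-t_L>t/2$. Then $I\ge(1-t+t_L)f_j$, and subadditivity gives
\[
g(S^\star)\le (1-t)f(L)+\frac{1-t}{1-t+t_L}I.
\]
For fixed $t_L$ this bound is nonincreasing in $t$, exactly as argued before. Since $t>2t_L$, I evaluate it at $t=2t_L$, i.e.\ $z=t$ with $t_L=z/2$. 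This yields $R_1^\beta(z):=\frac{1-z}{1-z/2}\bigl(1+\frac{h(1-z/2)}{\lambda_\beta(z/2)}\bigr)$.
\end{enumerate*}
The claim then reduces to showing $\sup_{0\le t<1}\max\{R_0^\beta(t),R_1^\beta(t)\}\le 766/125=6.128$.

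\textbf{Main obstacle.} The hard step is this numerical verification. At $t=0$ we have $\lambda_\beta(0)=2\beta\approx1.264$ and $h(1)=\sqrt5+2\approx4.236$. So $R_0^\beta(0)\approx3.35$, and $R_1^\beta(0)\approx1+3.35=4.35$. As $t$ grows, $\lambda_\beta$ shrinks, but the factor $1-t$ pulls toward zero, so the supremum is attained in the interior. I would establish the bound by a careful one-variable analysis. This would combine monotonicity pieces, using that $h$ is increasing and that $\lambda_\beta$ is decreasing, with a fine grid evaluation plus a Lipschitz bound on each subinterval. The aim is to confirm both suprema stay below $6.128$.

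Finally, substituting $1/\rho=x^\star/x$ gives the stated factor $\frac{x}{x^\star}\frac{125}{766}$.
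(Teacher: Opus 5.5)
Your proposal is correct and follows the paper's proof essentially step for step: tightness of the reachability condition at $x^\star$, the quadratic bound $x\le h(u)M/u$, the case $S^\star=L$ versus a single heavy agent evaluated at $t=2t_L$, and the final numerical bounds on $R_{0,\beta}$ and $R_{1,\beta}$, which the paper likewise only asserts. For the verification you defer, note that by my calculation $R_{0,\beta}$ peaks at roughly $6.12798$ near $t\approx 0.324$, so the margin below $766/125$ is only about $2\cdot 10^{-5}$ and your grid must be very fine there, whereas $R_{1,\beta}$ peaks near $5.62$ with ample room.
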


\begin{proof}
The case $f(L)=0$ follows from the same argument as in the proof of
\Cref{prop:gs-demand-approximation}. Otherwise, fix $x=\rho x^\star$,
where $0<\rho\leq1$, and write $M:=\max\{I,g(U(A_\tau,x))\}$.
Our choice of $x^\star$ satisfies \eqref{eq:submodular-reachability}
with equality, so the condition also holds for $x\leq x^\star$.
Since $I\leq M$ and $\tau=1/2$,
\Cref{lem:submodular-approximate-demand-candidate} gives
\[
    M\geq\frac{x}{4}-\frac{M^2}{4x(1-t/2)^2}.
\]
The quadratic calculation in the proof of
\Cref{prop:gs-demand-approximation}, with $\lambda_\beta$ in place of
$\lambda$, gives
\[
    f(L)\leq
    \frac{h(1-t/2)}{\rho\lambda_\beta(t_L)(1-t/2)}M,
\]
where $h$ is the function defined in that proof.

The two case comparisons in the same proof apply with the cutoff
$\tau=1/2$ and $\lambda_\beta$ in place of $\lambda$.
If $S^\star=L$, then $t_L=t$ and
\begin{equation}
\label{eq:submodular-reciprocal-zero}
    g(S^\star)\leq\frac{M}{\rho}R_{0,\beta}(t),
    \quad\text{where}\quad R_{0,\beta}(t):=
    \frac{(1-t)h(1-t/2)}{\lambda_\beta(t)(1-t/2)}.
\end{equation}
Otherwise, $t_L<t/2$, and the comparison with the heavy optimal agent gives
\[
    g(S^\star)\leq\frac{M}{\rho}
    \left(
        \frac{(1-t)h(1-t/2)}{\lambda_\beta(t_L)(1-t/2)}
        +\frac{1-t}{1-t+t_L}
    \right).
\]
For a fixed $t_L$, the expression in parentheses decreases with $t$,
by the same monotonicity argument as in
\Cref{prop:gs-demand-approximation}. It is therefore bounded by its
value at $t=2t_L$, giving
\begin{equation}
\label{eq:submodular-reciprocal-one}
\begin{aligned}
    g(S^\star)&\leq\frac{M}{\rho}\sup_{0\leq z<1}R_{1,\beta}(z),
\quad \text{where}\quad R_{1,\beta}(z)&:=
      \frac{1-z}{1-z/2}
      \left(1+\frac{h(1-z/2)}{\lambda_\beta(z/2)}\right).
\end{aligned}
\end{equation}
For $\beta=1-1/e$, differentiation shows that each expression has a
unique maximum on $[0,1)$. Bounding these maxima gives
\[
    \sup_{0\leq t<1}R_{0,\beta}(t)<6.127981,
    \qquad \sup_{0\leq t<1}R_{1,\beta}(t)<5.848.
\]
Both bounds are below $6.128=766/125$, so
$M\geq\rho(125/766)g(S^\star)$, proving the claim.
\end{proof}

\paragraph{Finding a good candidate.}
The prefix $A_\tau$ is among the prefixes enumerated in
\Cref{alg:gs-approximation}. For each enumerated prefix $P$ and each
scale $x>0$, apply the approximate demand, pruning, and deletion-chain
construction above with $P$ in place of $A_\tau$, and denote the
resulting sets by $D(P,x)$ and $U(P,x)$.
To approximate the unknown scale $x^\star$,
we use a geometric grid with a different range. For $I>0$ and
$\delta\in(0,1)$, set
\[
    K:=\left\lceil\log_{1+\delta}(16n)\right\rceil,
    \qquad x_\ell:=\frac{I}{4}(1+\delta)^\ell
    \quad(\ell=0,\ldots,K).
\]
If $f(L)\leq I$, the argument at the beginning of the proof of
\Cref{lem:gs-scale-grid} gives $I\geq g(S^\star)/2$. The same argument
applies because the cutoff still leaves at most one heavy optimal agent.
Now suppose $f(L)>I$. Since $t_L<1$, $\beta>5/8$, and
$\lambda_\beta$ is decreasing,
\[
    \frac14<\lambda_\beta(1)<\lambda_\beta(t_L)
      \leq\lambda_\beta(0)=2\beta<2.
\]
For every $i\in L$, the cutoff gives
$I\geq(1-c_i/f_i)f_i\geq(1-t/2)f_i>f_i/2$, so $f_i<2I$.
Subadditivity and the choice of $x^\star$ therefore give
\[
    f(L)\leq\sum_{i\in L}f_i<2nI,
    \quad\text{and}\quad \frac{I}{4}<x^\star<4nI\leq x_K.
\]
As in \Cref{lem:gs-scale-grid}, the largest grid point not exceeding
$x^\star$ satisfies
\[
    \frac{x^\star}{1+\delta}\leq x_\ell\leq x^\star.
\]
Applying \Cref{prop:submodular-demand-approximation} at $x=x_\ell$ gives
\[
    \max\{I,g(U(A_\tau,x_\ell))\}
      \geq\frac{x_\ell}{x^\star}\frac{125}{766}\,g(S^\star)
      \geq\frac{125}{766(1+\delta)}\,g(S^\star).
\]

We use the outer routine of \Cref{alg:gs-approximation} with
$\delta:=\eps/7$, the grid above, and the candidate construction of this
section. The routine starts with a singleton or the empty set of utility
$I$ and returns the best set among it and all candidate sets.

\begin{proof}[Proof of \Cref{thm:submodular-improved-approximation}]
If $g(S^\star)=0$, the initial set of utility $I\geq0$ already
meets the guarantee. Otherwise, $I>0$ by submodularity.
The prefix and scale enumeration above then guarantees that the
returned set has utility at least
$125g(S^\star)/(766(1+\delta))$. With $\delta=\eps/7$, the approximation
factor is therefore at most
\[
    \frac{766}{125}\left(1+\frac{\eps}{7}\right)<6.128+\eps.
\]
There are at most $n+1$ prefixes and $O(\eps^{-1}\log(n+1))$ scales.
Each candidate uses one approximate demand computation, which requires
polynomially many value queries by
\Cref{lem:submodular-approximate-demand}. Pruning performs at most $n$
deletions and uses $O(n^2)$ additional value queries; the deletion chain
and candidate utility can also be evaluated with polynomially many value
queries. Thus, the algorithm runs in polynomial time.
\end{proof}